\definecolor{forestgreen}{rgb}{0.13, 0.55, 0.13}
\newtheorem{theorem}{Theorem}
\newtheorem{lemma}{Lemma}
\numberwithin{lemma}{section}
\newtheorem{claim}[lemma]{Claim}
\newtheorem{observation}[lemma]{Observation}
\newtheorem{definition}[lemma]{Definition}
\newtheorem{remark}[lemma]{Remark}
\newtheorem{corollary}[lemma]{Corollary}
\newtheorem{problem}{Problem}
\newcommand{\poly}{{\rm poly}}
\newcommand{\polylog}{{\rm polylog}}
\def\C{\mathcal{C}}
\def\G{\mathcal{G}}
\def\D{\mathcal{D}}
\def\O{\mathcal{O}}
\def\P{\mathcal{P}}
\def\Q{\mathcal{Q}}
\def\reals{{\mathbb R}}
\def\R{\mathbb{R}}
\def\N{\mathbb{N}}
\def\eps{{\varepsilon}}
\def\diam{\rm diam}
\def\polylog{\rm polylog}
\def\MEB{\rm MEB}
\def\gMEB{\gamma\mbox{-}\MEB}
\def\init{\rm init}
\def\inc{\rm inc}
\def\dist{\rm dist}
\newcommand{\etal}{{et al. \xspace}}
\newcommand{\argmin}{\rm argmin}
\newcommand{\argmax}{\rm argmax}
\newcommand{\frechet}{Fr\'echet}
\newcommand{\dfd}{d_{dF}}
\def\dfn#1{\emph{\textcolor{forestgreen}{\textbf{#1}}}}
\date{}
\begin{document}
	
	\title{Static and Streaming Data Structures for \frechet\ Distance Queries}
	
	\author[1]{Arnold Filtser\thanks{Supported by the Simons Foundation.}}
	\author[2]{Omrit Filtser\thanks{Supported by the Eric and Wendy Schmidt Fund for Strategic Innovation, by the Council for Higher Education of Israel, and by Ben-Gurion University of the Negev.}}
	
	\affil[1]{Columbia University\\ \texttt{arnold273@gmail.com}}
	\affil[2]{Stony Brook University\\ \texttt{omrit.filtser@gmail.com}}
	
	\maketitle
	\thispagestyle{empty}
	
	\begin{abstract}
		Given a curve $P$ with points in $\mathbb{R}^d$ in a streaming fashion, and parameters $\varepsilon>0$ and $k$, we construct a distance oracle that uses $O(\frac{1}{\varepsilon})^{kd}\log\varepsilon^{-1}$ space, and given a query curve $Q$ with $k$ points in $\mathbb{R}^d$, returns in $\tilde{O}(kd)$ time a $1+\varepsilon$ approximation of the discrete Fr\'echet distance between $Q$ and $P$.
		
		In addition, we construct simplifications in the streaming model, oracle for distance queries to a sub-curve (in the static setting), and introduce the zoom-in problem. Our algorithms work in any dimension $d$, and therefore we generalize some useful tools and algorithms for curves under the discrete Fr\'echet distance to work efficiently in high dimensions.
	\end{abstract}
	
	\vfill
	
	\setlength{\columnsep}{30pt} 
	\begin{multicols}{2}
		{\small \setcounter{tocdepth}{2} \tableofcontents}
	\end{multicols}
	
	\newpage
	\pagenumbering{arabic} 
	
		\section{Introduction}
	Measuring the similarity of two curves or trajectories is an important task that arises in various applications. 
	The \frechet\ distance and its variants became very popular in last few decades, and were widely investigated in the literature. Algorithms for various tasks regarding curves under the \frechet\ distance were implemented, and some were successfully applied to real data sets in applications of computational biology~\cite{JXZ08,WZ13}, coastline matching~\cite{MDLH06}, analysis of a football match~\cite{GW10}, and more (see also GIS Cup SIGSPATIAL'17~\cite{WO17}).

	The \frechet\ distance between two curves $P$ and $Q$ is often described by the man-dog analogy, in which a man is walking along $P$, holding a leash connected to its dog who walks along $Q$, and the goal is to minimize the length of the leash that allows them to fully traverse their curves without backtracking.
	In the discrete \frechet\ distance, only distances between vertices are taken into consideration. Eiter and Mannila~\cite{EM94} presented an $O(nm)$-time simple dynamic programming algorithm to compute the discrete \frechet\ distance of two curves $P$ and $Q$ with $n$ and $m$ vertices. A \polylog\ improvement exists (see~\cite{ABKS14}), however, there is a sequence of papers \cite{Bringmann14,BM16,BOS19} showing that under SETH, there are no strongly subquadratic algorithms for both continuous and discrete versions,
	even if the solution may be approximated up to a factor of 3.
	
	In applications where there is a need to compute the distance to a single curve many times, or when the input curve is extremely large  and quadratic running time is infeasible, a natural solution is to construct a data structure that allows fast distance queries.
	In this paper we are mainly interested in the following problem under the discrete \frechet\ distance. Given $P\in\reals^{d\times m}$ (a $d$-dimensional polygonal curve of length $m$), preprocess it into a data structure that given a query curve $Q\in \reals^{d\times k}$, quickly returns a $(1+\eps)$-approximation of $\dfd(P,Q)$, where $\dfd$ is the discrete \frechet\ distance. Such a data structure is called $(1+\eps)$-distance oracle for $P$.

	Recently, Driemel, Psarros, and Schmidt~\cite{DPS19} showed how to construct a $(1+\eps)$-distance oracle under the discrete \frechet\ distance, with query time that does not depend on $m$, the size of the input curve. Their data structure uses $k^k\cdot O(\frac1\eps)^{kd}\cdot\log^k\frac{1}{\eps}$ space and has $O(k^2d+\log\frac1\eps)$ query time.\footnote{Driemel \etal \cite{DPS19} also considered the more general case where the curves are from a metric space with bounded doubling dimension. We present here only their results for Euclidean space.} 
	They also consider the streaming scenario, where the curve is given as a stream and its length is not known in advance. Their streaming algorithm can answer queries at any point in the stream in $O(k^4d\cdot \log^2\frac{m}{\eps})$ time, and it uses $\log^2m\cdot k^k\cdot O(\frac{\log m}{\eps})^{kd}\cdot \log^k(\frac{\log m}{\eps})$ space.
	Their techniques in the streaming case 
	include a merge-and-reduce framework, which leads to the high query time.
	
	In order to achieve a query time that does not depend on $m$ (in the static case), \cite{DPS19} first compute an (approximation of) optimal $k$-simplification of the input curve $P$. An optimal $k$-simplification of a curve $P$ is a curve $\Pi$ of length at most $k$ which minimizes $\dfd(P,\Pi)$ over all other curves of length at most $k$. Note that as the triangle inequality apply for $\dfd$, a trivial $3$-distance oracle is just computing an optimal $k$-simplification $\Pi$ of $P$, and for a query $Q$ returning $\dfd(P,\Pi)+\dfd(\Pi,Q)$ (see \Cref{obs:trivial_oracle}).
	Specifically, \cite{DPS19} present a streaming algorithm that maintains an 8-approximation for an optimal $k$-simplification of the input curve, and uses $O(kd)$ space. 
	Abam et al.~\cite{ABHZ10} show a streaming algorithm that maintains a simplifications under the continuous \frechet\ distance. Their algorithm maintains a $2k$-simplification which is $(4\sqrt{2}+\eps)$-approximation compared to an optimal $k$-simplification, using $O(k\eps^{-0.5} \log^2\frac1\eps)$ space. 
	In the static scenario, Bereg et. al.~\cite{BJWYZ08} show how to compute an optimal $k$-simplification of a curve $P\in\reals^{3\times m}$ in $O(mk\log m \log(m/k))$ time.
	
	For the (continuous) \frechet\ distance, Driemel and Har-Peled~\cite{DH13} presented a $(1+\eps)$-distance oracle for the special case of $k=2$ (queries are segments). Their data structure uses $O(\frac1\eps)^{2d}\cdot\log^2\frac{1}{\eps}$ space, and has $O(d)$ query time. 
	In addition, they show how to use the above data structure in order to construct a distance oracle for segment queries to a sub-curve (again only for queries of length $k=2$). This data structure uses $m\cdot O(\frac1\eps)^{2d}\cdot\log^2\frac{1}{\eps}$ space, and can answer $(1+\eps)$-approximated distance queries to any subcurve of $P$ in $O(\eps^{-2}\log m\log \log m)$ time.
	In~\cite{Filtser18}, the second author showed how to apply their techniques to the discrete \frechet\ distance, and achieve the same space bound with $O(\log m)$ query time.
	For general $k$, Driemel and Har-Peled~\cite{DH13} provided a constant factor distance oracle that uses $O(md\log m)$ space, and can answer distance queries between any subcurve of $P$ and query $Q$ of length $k$ in $O(k^2d\log m\log(k\log m))$ time.
		
	For the special case where the queries are horizontal segments, de Berg et al. \cite{dBMO17} constructed a data structure that uses $O(n^2)$ space, and can answer exact distance queries (under the continuous \frechet\ distance) in $O(\log^2m)$ time.
		
	The best known approximation algorithm for the discrete \frechet\ distance between two curves $P,Q\in\reals^{d\times m}$ is an $f$-approximation that runs in $O(m\log m + m^2/f^2)$ time for constant $d$, presented by Chan and Rahamati \cite{CR18} (improving over \cite{BM16}). The situation is better when considering restricted (realistic) families of curves such as $c$-packed, $\kappa$-bounded, backbone curves, etc. for which there exists small factor approximation algorithms in near liner time (see e.g. \cite{DHW12,AHKWW06,GMMW19}).
	
	Other related problems include the approximate nearest neighbor problem for curves, where the input is a set of curves that needs to be preprocessed in order to answer (approximated) nearest neighbor queries  (see~\cite{Indyk02,DS17,EP18,ACKGS18,DPS19,FFK20}), and range searching for curves, where the input is a set of curves and the query algorithm has to return all the curves that are within some given distance from the query curve  (see~\cite{dBCG13,dBGM17,BB17,BDDM17,DV17,AD18,FFK20}). We refer to \cite{FFK20} for a more detailed survey of these problems.

	\paragraph{Our results.}
	We consider distance oracles under the discrete \frechet\ distance in both the static and streaming scenarios. See \Cref{tbl:results} for a summary of new and previous results.
	
	In the static case, given an input curve $P\in \reals^{d\times m}$, we construct a $(1+\eps)$-distance oracle with $O(\frac{1}{\eps})^{kd}\cdot\log\frac1\eps$ storage space and $\tilde{O}(kd)$ query time (\Cref{thm:DOasymmetric}). Notice that our bounds in both storage space and query time do not depend on $m$, and are significantly smaller than the bounds of \cite{DPS19}.
	Interestingly, for the streaming setting we manage to achieve the exact same bounds as for the static case (\Cref{thm:mainStreamingDO}). Thus providing a quartic improvement (degree $4$) in the query time compared to \cite{DPS19}. 
	
	As in \cite{DPS19}, we use simplifications to get bounds that do not depend on $m$. Therefore, in the static case we present an algorithm that computes in $\tilde{O}(\frac{md}{\eps^{4.5}})$ time a $(1+\eps)$-approximation for an optimal $k$-simplification of a curve $P\in\reals^{d\times m}$ (\Cref{thm:SimplificationHighDim}).
	Note that the algorithm of \cite{BJWYZ08} returns an optimal $k$ simplification, however, it works only for constant dimension $d$, and has quadratic running time for the case $k=\Omega(m)$.
	For the streaming setting, we present a streaming algorithm which uses $O(\eps)^{-\frac{d+1}{2}}\log^{2}\frac1\eps+O(kd\cdot \frac1\eps\log\frac1\eps)$ 
	space, and computes a $(1+\eps)$-approximation for an optimal $k$-simplification of the input curve (\Cref{cor:StremingSimpEpsilon}).
	In addition, we present a streaming algorithm which uses $O(kd\cdot \frac1\eps\log\frac1\eps)$
	space, and computes a $(1.22+\eps)$-approximation for an optimal $k$-simplification of the input curve (\Cref{cor:StremingSimpConstant}).
		
	We also consider the problem of distance queries to a sub-curve, as in \cite{DH13,Filtser18}. Here, given a curve $P\in \reals^{d\times m}$ (in the static setting), we construct a data structure that uses $m\log m\cdot O(\frac{1}{\eps})^{kd}\cdot\log\frac1\eps$ space, and given a query curve $Q\in \reals^{d\times k}$ and two indexes $1\le i\le j\le m$, returns in $\tilde{O}(k^2d)$ time a $(1+\eps)$-approximation of $\dfd(P[i,j],Q)$, where $P[i,j]$ is the sub-curve of $P$ from index $i$ to $j$ (\Cref{thm:DOsubCurve}). Notice that in this problem the space bound must be $\Omega(m)$, as given such a data structure, one can (essentially) recover the curve $P$. 
	
	Related to both the sub-curve distance oracle and simplifications, we present a new problem called the ``zoom-in'' problem. In this problem, given a curve $P\in \reals^{d\times m}$, our goal is to construct a data structure that given two indexes $1\le i< j\le m$, return an (approximation of) optimal $k$-simplification for $P[i,j]$. This problem is motivated by applications that require visualization of a large curve without displaying all its details, and in addition enables ``zoom-in'' operations, where only a specific part of the curve needs to be displayed. For example, if the curve represents the historical prices of a stock, one might wish to examine the rates during a specific period of time. In such cases, a new simplification needs to be calculated. 
	We present a data structure with $O(mkd\log\frac{m}{k})$ space, such that given a pair of indices $1\le i< j\le m$, returns in $O(kd)$ time a $2k$-simplification which is a $(1+\eps)$-approximation compared to an optimal $k$ simplification of $P[i,j]$.	
	
	Finally, our algorithms work and analyzed for any dimension $d$. Unfortunately, many tools and algorithms that were developed for curves under the discrete \frechet\ distance, considered only constant or low dimensions, and have exponential running time in high dimensions (this phenomena usually referred to as ``the curse of dimensionality''). Therefore, we present a simple technique (\Cref{lem:allDistances}) that allows us to achieve efficient approximation algorithms in high dimensions. Specifically, we use it in  \Cref{thm:SimplificationHighDim} to compute an approximation for an optimal simplification in arbitrary dimension $d$, and to remove the exponential factor from the approximation algorithm of \cite{CR18} (see \Cref{thm:approximation}).

	\begin{table}[h]
		\begin{tabular}{ | p{2.6cm} | p{5.6cm} | p{2.6cm} | p{4cm} |}
			\hline
			 & Space & Time & Comments \\ \hline\hline
			
			\multirow{4}{2.5cm}{Static $(1+\eps)$-distance oracle}  
			& $O(kd)$ & $O(k^2d)$ & $(3+\eps)$-approximation, \Cref{obs:trivial_oracle} \\ \cline{2-4}
			& $O(\frac1\eps)^{2d}\cdot\log^2\frac{1}{\eps}$ & $O(d)$ & {$k=2$, continuous, \cite{DH13}} \\ \cline{2-4}
			& $k^k\cdot O(\frac1\eps)^{kd}\cdot \log^k\frac1\eps$ & $O(k^2d+\log\frac1\eps)$ & {\cite{DPS19}} \\ \cline{2-4}
			& $O(\frac{1}{\eps})^{kd}\cdot\log\frac1\eps$ & $\tilde{O}(kd)$ & {\Cref{thm:DOasymmetric}}\\ \hline\hline
			
			\multirow{2}{2.5cm}{Streaming $(1+\eps)$-distance oracle}  
			& $\log^2m\cdot k^k\cdot O(\frac{\log m}{\eps})^{kd} \cdot\log^k(\frac{\log m}{\eps})$ & $O(k^4d\cdot \log^2\frac{m}{\eps})$ & {\cite{DPS19}} \\ \cline{2-4}
			& $O(\frac{1}{\eps})^{kd}\cdot\log\frac1\eps$ & $\tilde{O}(kd)$ &  {\Cref{thm:mainStreamingDO}} \\  \hline\hline
			
			\multirow{3}{2.5cm}{$(1+\eps)$-distance oracle with subcurve queries}  
			& $m\cdot O(\frac1\eps)^{2d}\cdot\log^2\frac{1}{\eps}$ & $O(\frac{\log m\log \log m}{\eps^{2}})$ & {$k=2$, continuous, \cite{DH13}} \\ \cline{2-4}
			& $m\cdot O(\frac1\eps)^{2d}\cdot\log^2\frac{1}{\eps}$ & $O(\log m)$ & {$k=2$, \cite{Filtser18}} \\ \cline{2-4}
			& $m\log m\cdot O(\frac{1}{\eps})^{kd}\cdot\log\frac1\eps$ & $\tilde{O}(k^2d)$~~\footnotemark & {\Cref{thm:DOsubCurve}}\\ \hline
			
		\end{tabular}
	
		\vspace{5pt}
		
		\begin{tabular}{  | p{2.6cm} | p{3.6cm} | p{2.6cm} | p{6cm} |}
				\hline
				& Space & Approx. & Comments \\ \hline\hline
				
			\multirow{4}{2.5cm}{Simplification in streaming}  
			& $O(kd\cdot\eps^{-0.5} \log^2\frac1\eps)$ & $4\sqrt{2}+\eps$ & {$2k$ vertices, continuous, \cite{ABHZ10}} \\ \cline{2-4}
			& $O(kd)$ & $8$ & {\cite{DPS19}} \\ \cline{2-4}
			& $kd\cdot O(\frac{\log\eps^{-1}}{\eps})$ & $1.22+\eps$ &  {\Cref{cor:StremingSimpConstant}} \\  \cline{2-4}
			& $k\log^2\frac1\eps\cdot O(\frac1\eps)^{\frac{d+1}{2}}$ & $1+\eps$ &  {\Cref{cor:StremingSimpEpsilon}} \\  \hline
		\end{tabular}
		\caption{\label{tbl:results} Old and new results under the discrete \frechet\ distance. We do not state the preprocessing times, as typically it is just an $m$ factor times the space bound.}
	\end{table}
	\footnotetext{Note that additional $O(\log m)$ bit operations are required in order to read the input and search the data structure.}
	
	\paragraph{Lower bound.} Driemel and Psarros \cite{DP20} proved a cell probe lower bound for decision distance oracle, providing evidence that our \Cref{thm:DOasymmetric} might be tight.
	In the cell probe model, one construct a data structure which is divided into cells of size $w$. Given a query, one can probe some cells of the data structure and preform unbounded local computation. 
	The complexity of a cell probe data structure is measured with respect to the maximum number of probes preformed during a query, and the size $w$ of the cells. \Cref{thm:DOasymmetric} works in this regime, where the number of probes is $O(1)$, and $w=O(kd)$. 
	Fix any constants $\gamma,\lambda\in(0,1)$. Consider a cell probe distance oracle $\mathcal{O}$ for curves in $\R^d$ where $d=\Theta(\log m)$, that has word size  $w<m^\lambda$, and provide answers for queries of length $k<m^\gamma$,  with approximation factor $<\sqrt{\nicefrac32}$, while using only constant number of probes. 
	Driemel and Psarros \cite{DP20} showed that $\mathcal{O}$ must use space $2^{\Omega(kd)}$.

		\section{Preliminaries}
		
	For two points $x,y\in\reals^d$, denote by $\|x-y\|$ the Euclidean norm.
	Let $P=(p_1,\dots,p_m)\in\reals^{d\times m}$ be a polygonal curve of length $m$ with points in $\reals^d$.
	For $1\le i\le j\le m$ denote by $P[i,j]$ the subcurve $(p_i,\dots,p_j)$, and let $P[i]=p_i$.
	We use $\circ$ to denote the concatenation of two curves or points into a new curve, for example, $P\circ P[1]=(p_1,\dots,p_m,p_1)$. Denote $[m]=\{1,\dots,m\}$.
		
	Our main goal is to solve the following problem:
	\begin{problem}[$(1+\eps)$-distance oracle]\label{problem:DO}
		Given a curve $P\in\reals^{d\times m}$, preprocess $P$ into a data structure that given a query curve $Q\in\reals^{d\times k}$ for some $k\ge 1$, returns a $(1+\eps)$ approximation of $\dfd(P,Q)$.
	\end{problem}
	We assume throughout the paper that $\eps\in(0,\frac14)$. Note that the more natural framework for \Cref{problem:DO} is when $k\le m$, however, our solution will hold for general $k$.
	
	We consider distance oracles in both the static and streaming settings. In the streaming model, the input curve $P\in \R^{d\times m}$ is presented as a data stream of a sequence of points in $\R^d$. The length $m$ of the curve is unlimited and unknown in advance, and the streaming algorithm may use some limited space $S$, which is independent of $m$. 
	The algorithm maintains a data structure that can answer queries w.r.t. the curve seen so far. In each step, a new point is reveled, and it can update the data structure accordingly. It is impossible to access previously reveled points, and the algorithm may only access the current point and the data structure.
	
	\paragraph{The discrete \frechet\ distance.}
	For the simplicity of representation, in this paper we follow the definition of \cite{EM94} and \cite{BJWYZ08} for the discrete \frechet\ distance.
	
	Consider two curves $P\in\reals^{d\times m_1}$ and $Q\in\reals^{d\times m_2}$.
	A \dfn{paired walk} along $P$ and $Q$ is a sequence of pairs $\omega=\{(\P_i,\Q_i)\}_{i=1}^t$, such that $\P_1,\dots,\P_t$ and $\Q_1,\dots,\Q_t$ partition $P$ and $Q$, respectively, into (disjoint) non-empty subcurves, and for any $i$ it holds that $|\P_i|=1$ or $|\Q_i|=1$. 
	
	A paired walk $\omega$ along $P$ and $Q$ is \dfn{one-to-many} if $|\P_i|=1$ for all $1\le i\le |P|$. We say that $\omega$ \dfn{matches} the pair $p\in P$ and $q\in Q$ if there exists $i$ such that $p\in \P_i$ and $q\in \Q_i$.
	
	The \dfn{cost} of a paired walk $\omega=\{\P_i,\Q_i\}_{i=1}^t$ along $P$ and $Q$ is $\max_{i} d(\P_i,\Q_i)$, where $d(\P_i,\Q_i)=\max_{(p,q)\in \P_i\times \Q_i} \|p-q\|_2$. In other words, it is the maximum distance over all matched pairs.
	
	The \dfn{discrete \frechet\ distance} is defined over the set $\mathcal{W}$ of all paired walks as 
	\[\dfd(P,Q)=\min_{\omega \in \mathcal{W}}\max_{(\P_i,\Q_i)\in\omega} d(\P_i,\Q_i).\]
	
	A paired walk $\omega$ is called an \dfn{optimal walk} along $P$ and $Q$ if the cost of $\omega$ is exactly $\dfd(P,Q)$. 

	\paragraph{Simplifications.}
	An \dfn{optimal $k$-simplification} of a curve $P$ is a curve $\Pi$ of length at most $k$ such that $\dfd(P,\Pi)\le\dfd(P,\Pi')$ for any other curve $\Pi'$ of length at most $k$.
	
	An \dfn{optimal $\delta$-simplification} of a curve $P$ is a curve $\Pi$ with minimum number of vertices such that $\dfd(P,\Pi)\le \delta$.
	Notice that for an optimal $\delta$-simplification $\Pi$ of a curve $P$ there always exists a an optimal walk along $\Pi$ and $P$ which is one-to-many (otherwise, we can remove vertices from $\Pi$ without increasing the distance). We will use this observation throughout the paper.
	
	The vertices of a simplification may be arbitrary, or restricted to some bounded set. A simplification $\Pi$ of $P$ is \dfn{vertex-restricted} if its set of vertices is a subset of the vertices of $P$, in the same order as they appear in $P$.
		
	In some cases, when we want to achieve reasonable space and query bounds while having a small approximation factor, we use a bi-criteria simplification.
	An \dfn{$(\alpha, k,\gamma)$-simplification} of a curve $P$ is a curve $\Pi$ of length at most $\alpha\cdot k$ such that for any curve $\Pi'$ of length at most $k$ it holds that $\dfd(P,\Pi)\le\gamma\cdot\dfd(P,\Pi')$.	
	When $\alpha=1$, we might abbreviate the notation and write $(k,\gamma)$-simplification.
		
	In our construction, we use $(k,1+\eps)$-simplifications in order to reduce the space bounds of our data structure. However, using simplification in a trivial manner leads to a constant approximation distance oracle, as follows. 
	Given a curve $P\in\reals^{d\times m}$, compute and store a $(k,1+\frac\eps2)$-simplification $\Pi$ of $P$, and for a query $Q$ compute $\dfd(\Pi,Q)$ in $O(k^2d)$ time and return $\dfd(Q,\Pi)+\dfd(\Pi,P)$.
	By the triangle inequality, $\dfd(Q,P)\le \dfd(Q,\Pi)+\dfd(\Pi,P)$. Since $\Pi$ is a $(k,1+\frac\eps2)$-simplification of $P$, we have $\dfd(\Pi,P)\le(1+\frac\eps2)\dfd(Q,P)$, and by the triangle inequality, $\dfd(Q,\Pi)+\dfd(\Pi,P)\le \dfd(Q,P)+\dfd(P,\Pi)+\dfd(\Pi,P)\le (3+\eps)\dfd(Q,P)$.
	
	\begin{observation}\label{obs:trivial_oracle}
		Given a curve $P\in\reals^{d\times m}$, there exists data structure with $O(kd)$ space, such that given a query $Q\in\reals^{d\times k}$ returns a $(3+\eps)$-approximation of $\dfd(P,Q)$ in $O(k^2d)$ time.
	\end{observation}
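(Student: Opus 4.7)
The plan is to realize the data structure exactly as outlined in the paragraph preceding the observation statement. In preprocessing, compute a $(k,1+\eps/2)$-simplification $\Pi$ of $P$, which has at most $k$ vertices, and precompute the single scalar $\dfd(P,\Pi)$. The data structure stores only $\Pi$ (that is $k$ points in $\reals^d$) together with this scalar, for a total of $O(kd)$ space. At query time, given $Q\in\reals^{d\times k}$, I would run the standard Eiter--Mannila dynamic program between the two curves $\Pi$ and $Q$, each of length $k$, in $O(k^2 d)$ time, and return $\dfd(\Pi,Q)+\dfd(P,\Pi)$.

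For the approximation guarantee I would verify two inequalities. The upper bound is immediate from the triangle inequality for $\dfd$:
\[
\dfd(P,Q)\;\le\;\dfd(P,\Pi)+\dfd(\Pi,Q),
\]
so the returned value is never smaller than $\dfd(P,Q)$. For the matching upper bound on the returned value, use the triangle inequality in the other direction together with the defining property of $\Pi$: since $Q$ itself is a curve of length $k$, the $(k,1+\eps/2)$-simplification property gives $\dfd(P,\Pi)\le(1+\eps/2)\,\dfd(P,Q)$, and hence
\[
\dfd(\Pi,Q)+\dfd(P,\Pi)\;\le\;\dfd(P,Q)+2\,\dfd(P,\Pi)\;\le\;\dfd(P,Q)+(2+\eps)\,\dfd(P,Q)\;=\;(3+\eps)\,\dfd(P,Q).
\]

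The only nontrivial ingredient is the existence (and computability in preprocessing) of a $(k,1+\eps/2)$-simplification of $P$, which is precisely what the simplification results of later sections provide; alternatively, storing any optimal $k$-simplification suffices and already yields the cleaner factor $3$. I do not expect any real obstacle here: the argument is a textbook triangle-inequality sandwich, and the observation is stated mainly to serve as a baseline against which the later $(1+\eps)$-oracles are compared.
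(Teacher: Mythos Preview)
Your proposal is correct and matches the paper's own argument essentially verbatim: store a $(k,1+\eps/2)$-simplification $\Pi$ together with $\dfd(P,\Pi)$, answer a query by computing $\dfd(\Pi,Q)$ via the $O(k^2d)$ dynamic program and returning $\dfd(\Pi,Q)+\dfd(P,\Pi)$, and bound this value between $\dfd(P,Q)$ and $(3+\eps)\dfd(P,Q)$ using the triangle inequality and the simplification guarantee $\dfd(P,\Pi)\le(1+\eps/2)\dfd(P,Q)$. There is nothing to add.
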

	
	\paragraph{Cover of a curve.}
	In order to construct an efficient distance oracle, we introduce the notion of curve cover.
	A \dfn{$(k,r,\eps)$-cover} of a curve $P\in\reals^{d\times m}$ is a set $\C$ of curves of length $k$, such that $\dfd(P,W)\le (1+\eps)r$ for every $W\in\C$, and for any curve $Q\in\reals^{d\times k}$ with $\dfd(P,Q)\le r$, there exists some curve $W\in\C$ with $\dfd(Q,W)\le \eps r$.
	
	Notice that a $(k,r,\frac\eps4)$-cover $\C$ of a curve $P$ can be used in order to construct the following decision version of a distance oracle:
	\begin{problem}[$(k,r,\eps)$-decision distance oracle]
		Given a curve $P\in\reals^{d\times m}$ and a parameters $r\in \R_+$, $\eps\in(0,\frac12)$ and $k\in [m]$, create a data structure that given a query curve $Q\in\reals^{d\times k}$, if $\dfd(P,Q)\le r$, returns a value $\Delta$ such that $\dfd(P,Q)\le\Delta\le\dfd(P,Q)+\frac{\eps}{2}r$, and if $\dfd(P,Q)>(1+\eps)r$ it returns NO. (In the case that $r<\dfd(P,Q)<(1+\eps)r$ the data structure returns either NO or a value $\Delta$ such that $\dfd(P,Q)\le\Delta\le\dfd(P,Q)+\frac{\eps}{2}r$.)
	\end{problem}
	The idea is that given a query curve $Q$, if $\dfd(P,Q)\le r$ then there exists some $W\in\C$ such that $\dfd(Q,W)\le \frac\eps4 r$ and $\dfd(P,W)\le (1+\frac\eps4)r$. By the triangle inequality $$\dfd(P,Q)\le \dfd(P,W)+\dfd(Q,W)\le \dfd(P,Q)+2\dfd(Q,W)\le \dfd(P,Q)+\frac\eps2 r.$$ On the other hand, if $\dfd(P,Q)>(1+\eps)r$, then for any $W\in\C$ we have $$\dfd(Q,W)\ge\dfd(P,Q)-\dfd(P,W)> (1+\eps)r-(1+\frac\eps4)r>\frac\eps4 r~.$$
	
	Therefore, we have the following observation.
	\begin{observation}\label{obs:cover}
		Assume that there exists a data structure that stores a $(k,r,\eps)$-cover $\C$ for $P$ of size $S$, such that given a curve $Q\in\reals^{d\times k}$ with $\dfd(Q,P)\le r$, return in time $T$ a curve $W\in\C$ with $\dfd(Q,W)\le \eps r$ and the value $\dist(W)=\dfd(P,W)$. Then there exists a $(k,r,\eps)$-decision distance oracle for $P$ with the same space and query time.
	\end{observation}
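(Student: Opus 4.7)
The plan is to instantiate the hypothesized data structure with the cover parameter rescaled to $\eps'=\eps/4$ and then wrap it in a thin query routine that simply returns a triangle-inequality estimate. On input $Q\in\reals^{d\times k}$, the oracle calls the search structure on $Q$; if the search succeeds and returns a pair $(W,\dist(W))$, the oracle outputs $\Delta=\dist(W)+\dfd(Q,W)$, where the second summand is computed on the fly by the standard $O(k^2d)$-time dynamic program on two length-$k$ curves (and is typically subsumed by $T$, since a cover search naturally produces the witness distance as a by-product). If the search fails, the oracle outputs NO.

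Correctness amounts to the two triangle-inequality computations already sketched in the paragraph immediately before the observation, instantiated with $\eps'$ in place of $\eps$. In the regime $\dfd(P,Q)\le r$, the definition of a $(k,r,\eps')$-cover guarantees some $W\in\C$ with $\dfd(Q,W)\le\eps'r$, so the search necessarily succeeds, and the triangle inequality applied in both directions sandwiches $\Delta$ between $\dfd(P,Q)$ and $\dfd(P,Q)+2\dfd(Q,W)\le\dfd(P,Q)+\eps r/2$, as required. In the regime $\dfd(P,Q)>(1+\eps)r$, the cover's global upper bound $\dfd(P,W)\le(1+\eps')r$ for every $W\in\C$ combined with the reverse triangle inequality forces $\dfd(Q,W)\ge\dfd(P,Q)-\dfd(P,W)>(1+\eps)r-(1+\eps')r=3\eps r/4>\eps'r$, so no $W\in\C$ meets the search threshold; the search must fail and NO is returned. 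The intermediate band $r<\dfd(P,Q)\le(1+\eps)r$ is unconstrained by the decision-oracle specification, so either branch is permissible there.

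The resource accounting is then immediate: the data structure still occupies $S$ cells, a query performs a single search call of cost $T$ (plus the $O(k^2d)$ overhead that does not change the stated asymptotics), and the constant-factor rescaling of $\eps$ to $\eps/4$ is absorbed into the same asymptotics. I do not expect a genuine obstacle here; the proposition is essentially a formal packaging of the two-line calculation that precedes it, recorded so that later sections can invoke the reduction from ``cover'' to ``decision oracle'' as a black box.
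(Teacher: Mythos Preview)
Your proposal is correct and follows the paper's own argument essentially verbatim: the paper's proof is the two triangle-inequality calculations displayed immediately before the observation (with the cover parameter set to $\eps/4$), and you have reproduced exactly those. Your one cosmetic deviation---returning $\Delta=\dist(W)+\dfd(Q,W)$ and worrying about the $O(k^2d)$ cost of the second summand---is harmless, since in the paper's concrete instantiation the witness $W$ is obtained by coordinate-wise rounding of $Q$, so an upper bound on $\dfd(Q,W)$ (namely $\max_i\|Q[i]-W[i]\|\le\eps' r$) is available in $O(kd)$ time as a by-product, exactly as you anticipated.
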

	
	Note that sometimes we abuse the notation and relate to $\C$ as the data structure from the above observation.
	
	\paragraph{Uniform grids.}
	Consider the infinite $d$-dimensional grid with edge length $\frac{\eps}{\sqrt{d}}r$, with a point at the origin.
	For a point $x\in\reals^d$, denote by $G_{\eps,r}(x,R)$ the set of grid points that are contained in $B^d_2(x,R)$, the $d$-dimensional ball of radius $R$ centered at $x$. The following claim is a generalization of Corollary 7 from \cite{FFK20}. The proof can be found in \Cref{app:grid_points_in_ball}.
	
	\begin{restatable}{claim}{gridPointsInBall}\label{clm:grid_points_in_ball}
		$|G_{\eps,r}(x,cr)|=O(\frac{c}{\eps})^d$.
	\end{restatable}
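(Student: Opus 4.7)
}

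The plan is a standard volume-packing argument. I would first associate each grid point $g \in G_{\eps,r}(x,cr)$ with a canonical half-open axis-aligned cube $C_g$ of side length $\ell := \eps r/\sqrt{d}$ whose corner is $g$. These cubes are pairwise disjoint (they tile $\reals^d$), and each has diameter exactly $\eps r$ (the length of the main diagonal of a $d$-dimensional cube of side $\eps r/\sqrt{d}$). Consequently, if $g$ lies within distance $cr$ of $x$, the entire cube $C_g$ lies within distance $cr + \eps r = (c+\eps)r$ of $x$, so $\bigcup_{g \in G_{\eps,r}(x,cr)} C_g \subseteq B^d_2(x,(c+\eps)r)$.

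Comparing volumes then yields
\[
|G_{\eps,r}(x,cr)| \cdot \left(\frac{\eps r}{\sqrt{d}}\right)^{d} \;\le\; \mathrm{vol}\bigl(B^d_2(x,(c+\eps)r)\bigr) \;=\; \frac{\pi^{d/2}}{\Gamma(d/2+1)}\bigl((c+\eps)r\bigr)^{d},
\]
and rearranging gives
\[
|G_{\eps,r}(x,cr)| \;\le\; \frac{(\pi d)^{d/2}}{\Gamma(d/2+1)}\left(\frac{c+\eps}{\eps}\right)^{d}.
\]

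The final step is to show that the dimension-dependent prefactor is at most $C^d$ for an absolute constant $C$, which is where one has to be slightly careful. Applying Stirling's formula, $\Gamma(d/2+1) \ge \sqrt{\pi d}\,(d/(2e))^{d/2}$, one obtains $(\pi d)^{d/2}/\Gamma(d/2+1) \le (2\pi e)^{d/2}/\sqrt{\pi d} \le (2\pi e)^{d/2}$. Since $\eps < 1$ implies $(c+\eps)/\eps \le 2c/\eps$, combining everything gives $|G_{\eps,r}(x,cr)| \le \bigl(2\sqrt{2\pi e}\cdot c/\eps\bigr)^{d} = O(c/\eps)^{d}$, matching the statement. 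The only mildly nontrivial part of the argument is the Stirling estimate needed to absorb the $d^{d/2}$ factor; everything else is immediate from the disjointness of the grid cubes and the triangle inequality on their diameter.
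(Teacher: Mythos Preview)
Your proof is correct and follows essentially the same approach as the paper: the paper rescales to the unit lattice and invokes a lemma from \cite{FFK20} that bounds the number of lattice points by the volume of the ball enlarged by the cube diameter, then applies Stirling exactly as you do. The only cosmetic difference is that you carry out the cube-packing step explicitly rather than citing it, and you enlarge the radius by $\eps r$ rather than the paper's (slightly wasteful) $\sqrt{d}$ after rescaling---both lead to the same $O(c/\eps)^d$ bound.
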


		\section{Paper Overview} 

	\subsection*{\nameref*{sec:static_distance_oracle}.}
	Given a curve $P\in\reals^{d\times m}$, we first consider a more basic version of the $(1+\eps)$-distance oracle, namely, a \emph{decision distance oracle}.
	Here, in addition to $P$, we are given a distance threshold $r$. For a query curve $Q\in \reals^{d\times m}$, the decision distance oracle either returns a value $\Delta\in[\dfd(P,Q),\dfd(P,Q)+\eps r]$ or declares that $\dfd(P,Q)\ge(1+\eps)r$.
	We construct a decision distance oracle by discretizing the space of query curves (using a uniform grid).	That is, we simply store the answers to the set of all grid-curves at distance at most $(1+\eps)r$ from $P$ in a hash table. The query algorithm then ``snaps'' the points of $Q$ to the grid, to obtain the closest grid-curve, and returns the precomputed answer from the hash table. Clearly, we have a linear $O(kd)$ query time. As was shown by the authors and Katz \cite{FFK20}, the number of grid curves that we need to store is $O(\frac1\eps)^{kd}$, which is also a bound on the size of the distance oracle (see \Cref{lem:DOdecision}).

	Next, we consider a generalized version which we call a \emph{bounded range distance oracle}. Here, in addition to $P$, we are given a range of distances $[\alpha,\beta]\subset\R$. For a query $Q\in\reals^{d\times k}$, the distance oracle is guaranteed to return a $(1+\eps)$-approximation of $\dfd(P,Q)$ only if $\dfd(P,Q)\in [\alpha,\beta]$. 
	Such an oracle is constructed using $\log\frac\beta\alpha$ decision distance oracles for exponentially growing scales, and given a query we preform a binary search among them. Thus in total, compared to the decision version, we have an overhead of $\log\frac\beta\alpha$ in the space and $\log\log\frac\beta\alpha$ in the query time (see \Cref{lem:DObounded-range}).
	
	The main goal is to construct a general distance oracle that will succeed on all queries. To achieve space and query bounds independent of $m$, our first step is to precompute a $(k,1+\eps)$-simplification $\Pi$ of $P$. Note that following \Cref{obs:trivial_oracle}, given a query $Q\in\reals^{d\times k}$ we can simply return $\Delta=\dfd(Q,\Pi)+\dfd(\Pi,P)$, which is a constant approximation for $\dfd(P,Q)$ computed in $O(k^2d)$ time.
	However, we can achieve a $1+\eps$ approximation as follows,
	\begin{itemize}
		\item If $\dfd(Q,\Pi)= \Omega(\frac1\eps)\cdot\dfd(\Pi,P)$, then $\dfd(Q,\Pi)$ is a $(1+\eps)$-approximation for $\dfd(P,Q)$.
		\item If $\dfd(Q,\Pi)= O(\eps)\cdot\dfd(\Pi,P)$, then $\dfd(P,\Pi)$  is a $(1+\eps)$-approximation for $\dfd(P,Q)$.
		\item Else, we have $\dfd(Q,P)\in [\Omega(\eps),O(\frac1\eps)]\cdot \dfd(\Pi,P)$. This is a bounded range for which we can precompute a bounded range distance oracle for $P$.
	\end{itemize}
	The only caveat is that computing $\dfd(Q,\Pi)$ takes $O(k^2d)$ time. 
	Our solution is to construct a distance oracle for $\Pi$.
	At first glance, it seems that are back to the same problem. However, in this case, $\Pi$ and $Q$ have the same length!
	Thus, our entire construction boils down to computing a \emph{symmetric distance oracle}, that is, a distance oracle for the special case of $m=k$.  
	
	To achieve a near linear query time, our symmetric distance oracle first compute a coarse approximation of $\dfd(P,Q)$ in near linear time (using \Cref{thm:approximation}). Roughly speaking, if the approximated distance $\tilde{\Delta}$ is very large or very small, we show that a $(1+\eps)$-approximation can be computed directly in linear time. Else, in order to reduce the approximation factor, we maintain a polynomial number of ranges $[\alpha,\beta]$, for which we construct a bounded range distance oracles. 
	We show that if $\dfd(P,Q)$ does not fall in any of the precomputed ranges, then (approximation of) the distance can be computed in linear time.
	
	We elaborate on the different cases.
	The decision whether $\tilde{\Delta}$ is very large or very small, as well as the construction of bounded range distance oracles, are done with respect to the lengths of the edges of the input curve. First, observe that if the distance between two curves $X$ and $Y$ is smaller than half the length of the shortest edge of $X$, then $\dfd(X,Y)$ can be computed in linear time. Next, using \Cref{thm:approximation} we get a value $\tilde{\Delta}$ such that $\dfd(P,Q)\in[\frac{\tilde{\Delta}}{md},\tilde{\Delta}]$.
	Let $l_1\le l_2\le\dots\le l_{m-1}$ be a sorted list of the lengths of edges of $P$. We have four cases:
	\begin{itemize}
		\item If $\tilde{\Delta}<\frac{l_1}{2}$, then the distance between $P$ and $Q$ is smaller than half of the shortest edge in $P$, and thus  by the above observation we can compute $\dfd(P,Q)$ exactly in linear time. 
		\item If $\tilde{\Delta}>\frac{dm^2}{\eps}l_{m-1}$, then $\dfd(P[1],Q)=\max_{1\le i\le m}\|P[1]-Q[i]\|$ is a good enough approximation of $\dfd(P,Q)$, because $\dfd(P,P[1])\le m\cdot l_{m-1}<\eps\frac{\tilde{\Delta}}{md}\le \eps\cdot \dfd(P,Q)$.
	\end{itemize}
	Else, we precompute bounded range distance oracle for the ranges $[\frac{1}{\poly(\frac{md}{\eps})},\poly(\frac{md}{\eps})]\cdot l_i$ for each $i$.
	\begin{itemize}
		\item If $\tilde{\Delta}$ falls in once of the ranges above, we simply use the appropriate distance oracle to return an answer.		
		\item Else, there is some $i$ such that $\poly(\frac{md}{\eps})\cdot l_i<\tilde{\Delta}<\frac{1}{\poly(\frac{md}{\eps})}\cdot l_{i+1}$. Thus  $l_{i+1}$ is much larger than $l_i$. Let $P'$ be the curve obtain from $P$ by ``contracting'' all the edges of length at most $l_i$. It holds that  $\dfd(P',P)\le m\cdot l_i\ll \frac{\eps}{md}\cdot \tilde{\Delta}\le \eps\cdot \dfd(P,Q)$, thus  $\dfd(P',Q)$ is a $1+\eps$ approximation of  $\dfd(P,Q)$.
		From the other hand, the shortest edge of $P'$ has length at least $\frac12 l_{i+1}$ which is much larger than $\dfd(P',Q)$. Hence $\dfd(P',Q)$ can be computed in linear time.
	\end{itemize}
	
	In order to remove the logarithmic dependency on $\frac1\eps$ in the query time, we subdivide our ranges into smaller overlapping ranges, and obtain the following theorem.

	\begin{restatable}{theorem}{DOasymmetric}\label{thm:DOasymmetric}
		Given a curve $P\in\reals^{d\times m}$ and parameters $\eps\in(0,\frac14)$ and and integer $k\ge1$, there exists a distance oracle with $O(\frac{1}{\eps})^{dk}\cdot\log\eps^{-1}$ storage space, $m\log\frac{1}{\eps}\cdot\left(O(\frac{1}{\eps})^{kd}+O(d\log m)\right)$ expected preprocessing time, and $\tilde{O}(kd)$ query time.
	\end{restatable}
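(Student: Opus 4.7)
The plan is to build the oracle in three layers. For a fixed threshold $r$, I would first construct a decision oracle by discretizing with a uniform grid of side $\Theta(\eps r/\sqrt{d})$ and storing every grid-curve of length $k$ that lies within Fr\'echet distance $(1+\eps/4)r$ of $P$, together with its precomputed distance to $P$, in a hash table. By \Cref{clm:grid_points_in_ball} these grid-curves form a $(k,r,\eps/4)$-cover of $P$ of size $O(1/\eps)^{kd}$, and snapping a query $Q$ to the grid and looking up the result is an $O(kd)$-time decision oracle by \Cref{obs:cover}. Stacking $O(\log(\beta/\alpha))$ such decision oracles at geometric scales in $[\alpha,\beta]$ yields a bounded-range oracle with space $O(\log(\beta/\alpha))\cdot O(1/\eps)^{kd}$ answering in $O(kd\log\log(\beta/\alpha))$ time by binary-searching the scales.

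Next, to remove the dependence on $m$, I would precompute a $(k,1+\Theta(\eps))$-simplification $\Pi$ of $P$ and store $D:=\dfd(\Pi,P)$ explicitly. For a query $Q$, the triangle inequality and the three-case split from the overview (comparing $\dfd(Q,\Pi)$ to $D$) reduce the problem to (i) producing a $(1+\eps)$-estimate of $\dfd(Q,\Pi)$, and, in the ``comparable'' regime, (ii) answering $\dfd(P,Q)$ using a single bounded-range oracle for $P$ at range $[\Omega(\eps),O(1/\eps)]\cdot D$. Since that range has ratio $\poly(1/\eps)$, its space is $O(1/\eps)^{kd}\log\eps^{-1}$, matching the target.

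The main obstacle is (i): approximating $\dfd(Q,\Pi)$ in $\tilde O(kd)$ time when both curves have length $k$, since the DP costs $\Theta(k^2 d)$. For this \emph{symmetric} sub-problem I would attach to $\Pi$ a dedicated sub-oracle. Given $Q$, first invoke \Cref{thm:approximation} to obtain $\tilde\Delta$ with $\dfd(\Pi,Q)\in[\tilde\Delta/(kd),\tilde\Delta]$, and compare $\tilde\Delta$ to the sorted edge lengths $l_1\le\dots\le l_{k-1}$ of $\Pi$ (precomputed). If $\tilde\Delta<l_1/2$, then $\dfd(\Pi,Q)$ is below half the shortest edge of $\Pi$ and is realized by a one-to-many matching computed directly in $O(kd)$. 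If $\tilde\Delta>\poly(kd/\eps)\cdot l_{k-1}$, the diameter of $\Pi$ is negligible relative to the distance, so $\max_i\|\Pi[1]-Q[i]\|$ is a $(1+\eps)$-approximation. Otherwise, either $\tilde\Delta$ falls into some scale $[l_i/\poly(kd/\eps),\,l_i\cdot\poly(kd/\eps)]$, handled by a bounded-range sub-oracle built for $\Pi$ at that scale, or $\tilde\Delta$ lies in a huge gap between $l_i$ and $l_{i+1}$, in which case contracting the edges of $\Pi$ of length $\le l_i$ produces a curve $\Pi'$ with $\dfd(\Pi',Q)=(1\pm\eps)\dfd(\Pi,Q)$ whose shortest edge dominates the distance, again reducing to a one-to-many matching in $O(kd)$.

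Finally, to shave the $\log\log\eps^{-1}$ factor in the query time I would replace each geometric cascade of decision oracles by a system of \emph{overlapping} sub-ranges, so that the coarse estimate $\tilde\Delta$ pinpoints the correct sub-range in $O(1)$ rather than via binary search. Summing contributions—$O(kd)$ for $\Pi$, $O(1/\eps)^{kd}\log\eps^{-1}$ for the bounded-range oracle on $P$, and the symmetric sub-oracles for $\Pi$ which fit within the same asymptotic budget since $|\Pi|=k$—gives the claimed space; preprocessing is dominated by one $(k,1+\eps)$-simplification of $P$ and by the enumeration of the grid-covers, yielding the stated $m\log\eps^{-1}\cdot(O(1/\eps)^{kd}+O(d\log m))$ expected time.
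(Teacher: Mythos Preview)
Your proposal is correct and follows essentially the same three-layer architecture as the paper: the grid-cover decision oracle (\Cref{lem:DOdecision}), the bounded-range oracle via geometric stacking (\Cref{lem:DObounded-range}), the symmetric oracle for $\Pi$ built from the four edge-length cases (\Cref{thm:DOsymmetric}), and the top-level reduction through a $(k,1+\eps)$-simplification with the three-way split on $\dfd(Q,\Pi)$ versus $L$ (\Cref{subs:static_asymmetric}). The only cosmetic difference is that where you use one bounded-range oracle for $P$ on $[\Omega(\eps),O(1/\eps)]\cdot D$ and then overlap sub-ranges to kill the $\log\log\eps^{-1}$, the paper directly builds $O(\log\eps^{-1})$ constant-ratio oracles $\mathcal{O}_i$ on $[2^{i-1}L,2^{i+3}L]$ and selects one in $O(1)$ time from $\Delta$; the effect on space, preprocessing, and query time is identical.
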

	
	\subsection*{\nameref*{sec:streaming_simplification}.}
	Given a curve $P$ as a stream, our goal is to maintain a $(k,1+\eps)$-simplification of $P$.	
	For	a curve $P\in\R^{d\times m}$ in the static model, an optimal $\delta$-simplification of $P$ can be computed using a greedy algorithm, that simply finds the largest index $i$ such that $P[1,i]$ can be enclosed by a ball of radius $\delta$, and then recurse for $P[i+1,m]$. This greedy simplification algorithm was presented by Bereg \etal \cite{BJWYZ08} for constant dimension, and was generalized to arbitrary dimension $d$ by the authors and Katz \cite{FFK20} (see \Cref{lem:optrsimplification}). In the static model, a $(k,1+\eps)$-simplification can be computed by searching over all the possible values of $\delta$ with the greedy $\delta$-simplification algorithm as the decision procedure. 
	
	Denote by $\gMEB$ a streaming algorithm for computing a $\gamma$-approximation of the minimum enclosing ball.
	Given a curve $P$ in a streaming fashion, and a $\gMEB$ algorithm as a black box, we first implement a streaming version of the greedy simplification algorithm called \texttt{GreedyStreamSimp} (see \Cref{alg:GreedySimp}). This algorithm gets as an input a parameter $\delta$, and acts in the same manner as the greedy simplification, where instead of a static minimum enclosing ball algorithm, it uses the $\gMEB$ black box. The resulting simplification $\Pi$ will be the sequence of centers of balls of radius $\delta$ constructed by $\gMEB$.
	Note that $\Pi$ is at distance at most $\delta$ from $P$, and every curve at distance $\delta/\gamma$ from $P$ has length at least $|\Pi|$ (see \Cref{clm:greedy_iteration}). However, the length of $\Pi$ is essentially unbounded, and our goal is to construct a simplification of length $k$.
	
	If we knew in advance the distance $\delta^*$ between $P$ and an optimal $k$-simplification of $P$, we could execute \texttt{GreedyStreamSimp} with the parameter $\delta=\gamma\delta^*$ and obtain a $(k,\gamma)$-simplification. Since $\delta^*$ is not known in advance, our \texttt{LeapingStreamSimp} algorithm tries to guess it.
	The \texttt{LeapingStreamSimp} algorithm (see \Cref{alg:StreamSimp}) gets as an input the desired length $k$, and two additional parameters $\init$ and $\inc$. It sets the initial estimation of $\delta$ to be $\init$. Then, it simply simulates \texttt{GreedyStreamSimp} (with parameter $\delta$) as long as the simplification $\Pi$ at hand is of length at most $k$. Once this condition is violated, \texttt{LeapingStreamSimp} preforms a leaping step as follows.
	Suppose that after reading $P[m]$, the length condition is violated, that is, $|\Pi|=k+1$. In this case, \texttt{LeapingStreamSimp} will increase its guess of $\delta^*$ by setting $\delta\leftarrow\delta\cdot\inc$. Then, \texttt{LeapingStreamSimp} starts a new simulation of \texttt{GreedyStreamSimp}, with the new guess $\delta$, and the previous simplification $\Pi$ as input (instead of $P[1,m]$). Now, \texttt{LeapingStreamSimp} continue processing the stream points $P[m+1,\cdots]$ as if nothing happened. Such a leaping step will be preformed each time the length condition is violated.
	As a result, eventually \texttt{LeapingStreamSimp} will hold an estimate $\delta$ and a simplification $\Pi$, such that $\Pi$ is an actual simplification constructed by the \texttt{GreedyStreamSimp} with parameter $\delta$. Alas, $\Pi$ was not constructed with respect to the observed curve $P$, but rather with respect to some other curve $P'$, such that $\dfd(P,P')\le \frac{2}{\inc}\delta$ (see \Cref{clm:one_iteration}). Furthermore, the estimate $\delta$ will be bounded by the distance to the optimal simplification $\delta^*$ multiplied by a factor of $\approx\gamma\cdot\inc$.
	
	To obtain a $1+\eps$ approximation of $\delta^*$, we run $\approx\frac1\eps$ instances of \texttt{LeapingStreamSimp}, with different initial guess parameter $\init$. Then, at each point in time, for the instance with the minimum estimation $\delta$ it holds that $\delta<(1+\eps)\delta^*$. We thus prove the following theorem.
	\begin{restatable}{theorem}{StreamSimplification}\label{thm:StreamSimplification}
		Suppose that we are given a black box streaming algorithm $\MEB_{\gamma}$ for $\gamma\in[1,2]$ which uses storage space $S(d,\gamma)$. Then for every parameters $\eps\in(0,\frac14)$ and $k\in\N$, there is a streaming algorithm which uses $O(\frac{\log\eps^{-1}}{\eps}\cdot(S(d,\gamma)+kd))$ space, and given a curve $P$ in $\R^d$ in a streaming fashion, computes a $(k,\gamma(1+\eps))$-simplification $\Pi$ of $P$, and a value $L$ such that $\dfd(\Pi,P)\le L \le \gamma(1+\eps)\delta^*$.
	\end{restatable}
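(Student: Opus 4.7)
The plan is to run $K = \Theta(\log(1/\eps)/\eps)$ parallel copies of \texttt{LeapingStreamSimp}, all sharing the same increase parameter $\inc$ but with geometrically interleaved initial guesses $\init_i$, and at any query time return the output of the instance whose current estimate $\delta_i$ is the smallest. Concretely, I set $\inc := 1/\eps$ and $K := \lceil \log\inc / \log(1+\eps)\rceil = O(\log(1/\eps)/\eps)$, and I launch instances $\mathcal{I}_0,\dots,\mathcal{I}_{K-1}$ of \texttt{LeapingStreamSimp} with parameters $(k,\init_i,\inc)$, where $\init_i := \mu\cdot\inc^{i/K}$ for a small base $\mu>0$ obtained lazily from the first few stream points (e.g.\ the first nonzero pairwise distance divided by a suitably large quantity). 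Each instance stores the $\gMEB$ black-box state ($S(d,\gamma)$) plus its simplification of length at most $k$ ($O(kd)$), so the total space is $O(\frac{\log(1/\eps)}{\eps}\cdot(S(d,\gamma)+kd))$.

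For correctness, I would invoke the guarantees of \texttt{LeapingStreamSimp} summarized in the overview: each $\mathcal{I}_i$ maintains a curve $\Pi_i$ with $|\Pi_i|\le k$ and a value $\delta_i$ such that $\dfd(\Pi_i,P)\le \delta_i(1+2/\inc)$, while either $\delta_i=\init_i$ (before any leap) or $\delta_i < \gamma\cdot\inc\cdot\delta^*$ (where $\delta^*$ is the optimal $k$-simplification distance for the prefix seen so far). The key observation is that the possible $\delta_i$ values attained across the $K$ instances form the union $\bigcup_i\{\init_i\cdot\inc^{j}:j\in\N\}=\{\mu\cdot\inc^{t/K}:t\in\N\}$, which is geometrically dense with ratio $\inc^{1/K}\le 1+\eps$. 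Since each instance with $\init_i\le\gamma\delta^*$ leaps precisely up to the smallest element of its own sequence that is $\ge\gamma\delta^*$, and the union of all sequences is the dense set above, we obtain
\[
\min_i\delta_i \;\le\; \gamma\delta^*\cdot\inc^{1/K}\;\le\;\gamma(1+\eps)\delta^*.
\]
Setting $i^*:=\argmin_i\delta_i$, $\Pi:=\Pi_{i^*}$, and $L:=\delta_{i^*}(1+2/\inc)=\delta_{i^*}(1+2\eps)$ yields $\dfd(\Pi,P)\le L\le \gamma(1+\eps)(1+2\eps)\delta^*$; rescaling $\eps$ by a constant factor gives the claimed $\gamma(1+\eps)\delta^*$ bound.

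The main obstacle is balancing the two competing sources of approximation loss exposed by \texttt{LeapingStreamSimp}: a large $\inc$ makes the distance-overhead factor $(1+2/\inc)$ small but widens each individual instance's gap $\delta_i/\delta^*$, while a small $\inc$ tightens that per-instance gap but inflates the overhead. Taking $\inc=1/\eps$ forces both losses to be $O(\eps)$ simultaneously, at the price of needing $\Theta(\log(1/\eps)/\eps)$ instances to cover one factor of $\inc$ with density ratio $1+\eps$, which is exactly what the space bound allows. A minor secondary issue is the bootstrap of $\mu$ and the transient phase where every $\init_i$ may still exceed $\gamma\delta^*$; this is handled by buffering the first constant number of points until a nonzero scale is observed, after which the argument above applies verbatim to every subsequent query.
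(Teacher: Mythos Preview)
Your proposal is correct and takes essentially the same approach as the paper: run $\Theta(\frac{1}{\eps}\log\frac{1}{\eps})$ instances of \texttt{LeapingStreamSimp} with $\inc=1/\eps$ and geometrically spaced initial guesses (the paper uses $\init_i=(1+\eps)^i$, anchored to $\lambda(P[1,k+1])$ after buffering the first $k+1$ points), then return the instance with minimal current $\delta$ together with $L=(1+\frac{2}{\inc})\delta$. The only minor slips are that the precise leap bound involves $\eta=\frac{\gamma\,\inc}{\inc-2\gamma}$ rather than $\gamma$, and that bootstrapping requires buffering $k+1$ points rather than a constant number; both are harmless, being absorbed respectively by your final $\eps$-rescaling and by the $O(kd)$ per-instance space already budgeted.
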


	Plugging existing $\gamma-\MEB$ algorithms we obtain the followings (see \Cref{cor:StremingSimpEpsilon,cor:StremingSimpConstant}):
	\begin{itemize}
		\setlength\itemsep{0em}
		\item $(k,1+\eps)$-simplification in streaming using $O(\eps)^{-\frac{d+1}{2}}\log^{2}\eps^{-1}+O(kd\eps^{-1}\log\eps^{-1})$
		space.
		\item $(k,1.22+\eps)$-simplification in streaming using $O(\frac{\log\eps^{-1}}{\eps}\cdot kd)$
		space.
	\end{itemize}
	We note that our \texttt{LeapingStreamSimp} algorithm is a generalization of the algorithm of \cite{DPS19} for computing a $(k,8)$-simplification. Specifically, one can view the algorithm from \cite{DPS19} as a specific instance of \texttt{LeapingStreamSimp}, where fixing the parameters $\init=1$, $\inc=2$, and using a simple $2\text{-}\MEB$ algorithm.
		
	\subsection*{\nameref*{sec:streaming_distance_oracle}.}
	Our basic approach here imitating our static distance oracle from \Cref{thm:DOasymmetric}.
	We maintain a $(k,1+\eps)$-simplification $\Pi$ of $P$ (using \Cref{cor:StremingSimpEpsilon}). As we have the simplification explicitly, we can also construct a symmetric-distance oracle for $\Pi$. Thus, when a query $Q$ for $P$ arrives, we can estimate $\dfd(\Pi,Q)$ quickly.
	If either $\dfd(\Pi,Q)>\frac1\eps\cdot \dfd(\Pi,P)$ or $\dfd(\Pi,Q)<\eps\cdot \dfd(\Pi,P)$, as previously discussed, we can answer immediately.
	Else, we have $\dfd(Q,P)\in [\Omega(\eps),O(\frac1\eps)]\cdot \dfd(\Pi,P)$, which is a bounded range.
	In the static case, we simply prepared ahead answers to all the possible queries in this range.
	However, in the streaming case, this range is constantly changing, and is unknown in advance. How can we be prepared for the unknown?
	
	The first key observation is that given a parameter $r$, one can maintain a decision distance oracle \footnote{Actually by decision distance oracle here we mean cover, see \Cref{obs:cover}.} in a stream. Specifically, given a decision distance oracle for a curve $P[1,m]$ with storage space independent of $m$ (as in \Cref{lem:StramDOdecision}), and a new point $P[m+1]$, we show how to construct a decision distance oracle for $P[1,m+1]$.
	However, the scales $r$ for which we construct the decision distance oracles are unknown in advance, and we need a way to update $r$ on demand.
	
	Our solution is similar in spirit to the maintenance of simplification in the stream.
	That is, we will create a leaping version of the decision oracle, i.e., a data structure that receives as input a pair of parameters $\init$ and $\inc$. Initially it sets the scale parameter $r$ to $\init$. As long as the distance oracle is not empty (i.e. there is at least one curve at distance $r$ from $P$), it continues simulating the streaming algorithm that construct a decision distance oracle for fixed $r$.
	If it becomes empty after reading the point $P[m]$ for the stream, then instead of despairing, the oracle updates its scale parameter to $r\cdot \inc$, choose an arbitrary curve $W$ from the distance oracle of $P[1,m-1]$, and initialize a new distance oracle for $W\circ P[m]$ using the new parameter $r$. From here on, the oracle continue simulating the construction of a decision distance oracle as before, while preforming a leaping step each time it becomes empty.
	
	As a result, at each step we have an actual decision distance oracle for some parameter $r$. Alas, the oracle returns answers not with respect to the observed curve $P$, but rather with respect to some other curve $P'$, such that $\dfd(P,P')\le \frac{2}{\inc}r$ (see \Cref{lem:leapingStreaming}). 
	We show that if we maintain $O(\log \frac1\eps)$ such leaping distance oracles for different values of $\init$, we will always be able to answer queries for curves $Q$ such that $\dfd(Q,P)\in [\Omega(\eps),O(\frac1\eps)]\cdot \dfd(\Pi,P)$. 
	
	\begin{restatable}{theorem}{mainStreamingDO}\label{thm:mainStreamingDO}
		Given parameters $\eps\in(0,\frac14)$ and $k\in \N$, there is a streaming algorithm that uses $O(\frac{1}{\eps})^{kd}\log\eps^{-1}$ space, and given a curve $P$ with points in $\R^d$, constructs a $(1+\eps)$-distance oracle with $\tilde{O}(kd)$ query time.
	\end{restatable}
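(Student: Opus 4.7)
The plan is to mirror the architecture of the static oracle from \Cref{thm:DOasymmetric}, replacing each static ingredient with a streaming counterpart. First, I will maintain in the stream a $(k,1+\eps)$-simplification $\Pi$ of $P$ together with a value $L$ satisfying $\dfd(\Pi,P)\le L\le(1+\eps)\dfd(\Pi,P)$, using \Cref{cor:StremingSimpEpsilon}. Since $\Pi$ has at most $k$ vertices, I can afford to store it explicitly, and each time it changes rebuild a symmetric $(1+\eps)$-distance oracle for $\Pi$ (the $m=k$ instance of \Cref{thm:DOasymmetric}); this fits well within the space budget $O(1/\eps)^{kd}\log\eps^{-1}$.

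On a query $Q\in\R^{d\times k}$, I compute a $(1+\eps)$-approximation $\tilde\Delta$ of $\dfd(\Pi,Q)$ in $\tilde{O}(kd)$ time via the symmetric oracle for $\Pi$. The triangle inequality disposes of two extreme regimes: if $\tilde\Delta\ge \frac{1}{\eps}L$ then $\dfd(\Pi,Q)$ itself approximates $\dfd(P,Q)$ to within $(1+O(\eps))$, and if $\tilde\Delta\le \eps L$ then $L$ does. The nontrivial case is the bounded range $\dfd(P,Q)\in[\Omega(\eps),O(1/\eps)]\cdot L$, which I handle by precomputing streaming decision distance oracles at a set of geometrically spaced scales.

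The main obstacle is that in the stream the relevant scale $r$ is not known in advance and the correct scale keeps moving as $P$ grows, so we cannot simply enumerate all relevant scales as in the static setting. To cope I will design a \emph{leaping decision oracle} parametrized by $(\init,\inc)$: initialize the scale to $r=\init$ and run the streaming cover-maintenance algorithm of \Cref{lem:StramDOdecision}, which is a $(k,r,\eps)$-cover updated on each arriving point. If after reading $P[m]$ the cover becomes empty (so no length-$k$ curve lies within $r$ of the current prefix), then instead of failing, I multiply $r$ by $\inc$, pick an arbitrary witness $W$ from the previous cover for $P[1,m-1]$, and reinitialize the streaming cover on $W\circ P[m]$ at the new scale. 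Iterating the leaping step yields, at each moment, a genuine streaming cover for some curve $P'$ with $\dfd(P,P')\le\frac{2}{\inc}r$, in direct analogy with \Cref{clm:one_iteration}; taking $\inc=\Theta(1/\eps)$ makes this perturbation negligible compared to the query distance.

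Finally, to ensure that some leaping oracle's current scale $r$ is within a factor $1+\eps$ of $\dfd(P,Q)$ for every $Q$ in the middle regime, I will run $O(\log\eps^{-1})$ leaping oracles in parallel with $\init$ values forming a $(1+\eps)$-geometric grid over $[\Theta(\eps)L,\Theta(L/\eps)]$. On a query I binary-search among them (guided by $\tilde\Delta$) for an oracle whose active scale $r$ matches $\tilde\Delta$ up to a $(1+\eps)$ factor, and read off a $(1+\eps)$-approximation of $\dfd(P',Q)$; combined with the $\dfd(P,P')$ bound and a final rescaling of $\eps$ by a constant, this is a $(1+\eps)$-approximation of $\dfd(P,Q)$. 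Each leaping oracle uses $O(1/\eps)^{kd}$ space by \Cref{lem:StramDOdecision}, so the total space is $O(1/\eps)^{kd}\log\eps^{-1}$, and the query time is dominated by the symmetric oracle call, namely $\tilde{O}(kd)$, as claimed.
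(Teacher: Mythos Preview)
Your architecture matches the paper's proof almost exactly: maintain a streaming $(k,1+\eps)$-simplification $\Pi$ with estimate $L$, keep a symmetric oracle for $\Pi$, dispose of the extreme regimes by the triangle inequality, and cover the middle regime with parallel leaping decision oracles built on top of \Cref{lem:StramDOdecision}. Your description of the leaping step and the bound $\dfd(P,P')\le\frac{2}{\inc}r$ is also correct (this is \Cref{lem:leapingStreaming} in the paper).

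The gap is in your last paragraph, where three related slips would break either the space bound or the argument. First, you set the $\init$ values ``over $[\Theta(\eps)L,\Theta(L/\eps)]$'', but $L$ evolves with the stream and is unavailable when the leaping oracles are initialized; the $\init$'s must be absolute constants (the paper uses $\init_i=2^i$, $i=0,\dots,t-1$, multiplied once by $\lambda(P[1,k+1])$). Second, a $(1+\eps)$-spaced grid over a range of aspect ratio $\Theta(\eps^{-2})$ has $\Theta(\eps^{-1}\log\eps^{-1})$ points, not $O(\log\eps^{-1})$, so your space would become $O(1/\eps)^{kd+1}$; the paper uses factor-$2$ spacing with $t=\Theta(\log\eps^{-1})$ oracles. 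Third, and most importantly, you do not need (and cannot arrange) that some active scale $r$ lies within a $(1+\eps)$ factor of $\dfd(P,Q)$. What actually happens is the key lemma you are missing: each leaping oracle's current scale satisfies $r_{i,m}\in\delta_m^*\cdot[\Theta(1),\Theta(\inc)]$ (\Cref{lem:leapingStreaming}). Since the $t$ oracles' scales are offset by powers of $2$ and together span a full factor of $\inc$, for any $\phi$ with $\dfd(P,Q)\in[\frac14,4]\phi$ one oracle has $r_{i,m}$ within a \emph{constant} factor of $\phi$; the $(1+\eps)$ accuracy then comes from the decision oracle's additive error $O(\eps r_{i,m})=O(\eps)\cdot\dfd(P,Q)$ together with the perturbation $\frac{2}{\inc}r_{i,m}=O(\eps)\cdot\dfd(P,Q)$. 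Once you replace your final paragraph with this mechanism, the proof goes through as in the paper.
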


	\subsection*{\nameref*{sec:subcurve_distance_oracle}.} Following \cite{DH13} and \cite{Filtser18}, we consider a generalization of the distance oracle problem, where the query algorithm gets as an input two index $1\le i\le j\le m$ in addition to a query curve $Q\in \reals^{d\times k}$, and return $(1+\eps)$-approximation of $\dfd(P[i,j],Q)$.
	Note that a trivial solution is storing $O(m^2)$ distance oracles: for any $1\le i\le j\le m$ store a $(1+\eps)$-distance oracle for $P[i,j]$. However, when $m$ is large, one might wish to reduce the quadratic storage space at the cost of increasing the query time or approximation factor.
	
	Before presenting our solution to the above problem, we introduce a closely related problem which we call the ``zoom-in'' problem. Given a curve $P\in \reals^{d\times m}$ and an integer $1\le k< m$, our goal is to preprocess $P$ into a data structure that given $1\le i< j\le m$, return an $(\alpha,k,\gamma)$-simplification of $P[i,j]$.
	Our solutions to the zoom-in problem and the distance oracle to a subcurve problem have a similar basic structure, which consists of hierarchically partitioning the input curve $P$. Given a query, a solution is constructed by basically concatenating two precomputed solutions. 
	We obtain the following theorems.
	\begin{restatable}{theorem}{zoomin}\label{thm:zoomin2k}
		Given a curve $P$ consisting of $m$ points and parameters $k\in[m]$ and $\eps\in(0,\frac12)$ 
		there exists a data structure with $O(mkd\log\frac{m}{k})$ space, such that given a pair of indices $1\le i< j\le m$, returns in $O(kd)$ time an $(k,1+\eps,2)$-simplification of $P[i,j]$.
		The prepossessing time for general $d$ is $\tilde{O}(m^{2}d\eps^{-4.5})$, while for fixed $d$ is $\tilde{O}(m^{2}\eps^{-1})$. 
	\end{restatable}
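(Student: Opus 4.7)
The plan is a sparse-table construction: precompute $(1+\eps)$-approximate $k$-simplifications of all dyadic prefix and suffix subcurves of $P$, together with the group-boundary data of an associated one-to-many walk, and answer each query by concatenating a prefix simplification with a \emph{suffix} of a suffix simplification. Concretely, for every $i\in[m]$ and every $s\in\{0,1,\ldots,\lceil\log(m/k)\rceil\}$ I would invoke \Cref{thm:SimplificationHighDim} (for general $d$) or the algorithm of \cite{BJWYZ08} (for fixed $d$) to compute a $(1+\eps)$-approximate $k$-simplification $\Pi^{\rm pre}_{i,s}$ of $P[i,\min\{m,i+k\cdot 2^s-1\}]$ and analogously $\Pi^{\rm suf}_{i,s}$ for the length-$k\cdot 2^s$ suffix ending at $P[i]$, additionally recording for each suffix simplification $\Pi^{\rm suf}_{i,s}$ the $k$ group boundaries of an associated one-to-many walk. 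This produces $O(m\log(m/k))$ simplifications of $O(kd)$ size each (with only $O(k)$ extra words per simplification for the boundaries), matching the claimed space $O(mkd\log(m/k))$; summing the per-interval running times of the underlying simplification algorithms yields the preprocessing bounds claimed in the theorem.

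For the query, given $1\le i<j\le m$ with $j-i+1\le k$ return $P[i,j]$ itself; otherwise pick the unique $s$ with $k\cdot 2^s\le j-i+1\le k\cdot 2^{s+1}$ and set $c:=i+k\cdot 2^s-1$. The inequality $c+1\ge j-k\cdot 2^s+1$ (equivalent to $j-i+1\le k\cdot 2^{s+1}$) ensures that $P[c+1]$ lies inside the suffix subcurve of $\Pi^{\rm suf}_{j,s}$, so using the stored boundaries I binary-search in $O(\log k)$ time for the index $b^*$ of the group of the one-to-many walk $\omega_2$ containing $P[c+1]$, and return $\Pi:=\Pi^{\rm pre}_{i,s}\circ\Pi^{\rm suf}_{j,s}[b^*,k]$, a curve of at most $2k$ vertices output in $O(kd)$ time as required.

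For correctness, let $\delta^*$ be the optimal $k$-simplification distance of $P[i,j]$ with optimal witness $\Pi^*$ and optimal one-to-many walk $\omega^*$. Restricting $\Pi^*$ along $\omega^*$ to the subsets of its vertices matched respectively with $P[i,i+k\cdot 2^s-1]$ and with $P[j-k\cdot 2^s+1,j]$ produces $k$-vertex curves at Fr\'echet distance $\le\delta^*$ from each, so $\dfd(P[i,i+k\cdot 2^s-1],\Pi^{\rm pre}_{i,s})\le(1+\eps)\delta^*$ and $\dfd(P[j-k\cdot 2^s+1,j],\Pi^{\rm suf}_{j,s})\le(1+\eps)\delta^*$. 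Letting $\omega_1,\omega_2$ be the associated one-to-many walks witnessing these bounds, the suffix of $\omega_2$ starting at the state $(c+1,b^*)$ is a valid walk between $P[c+1,j]$ and $\Pi^{\rm suf}_{j,s}[b^*,k]$ of cost $\le(1+\eps)\delta^*$; concatenating $\omega_1$ on $P[i,c]$ vs.\ $\Pi^{\rm pre}_{i,s}$ with this suffix of $\omega_2$ yields a walk between $P[i,j]$ and $\Pi$ of cost $\le(1+\eps)\delta^*$, as required. The key observation I expect to take the most care to justify cleanly --- and the reason the overlap of the two underlying subcurves does not create problems --- is that outputting only a \emph{suffix} $\Pi^{\rm suf}_{j,s}[b^*,k]$ of the suffix simplification (rather than its entirety) lets the two restricted walks concatenate into a single valid walk without needing to pair the ``overlap prefix'' of $\Pi^{\rm suf}_{j,s}$ with any already-consumed $P$-vertices.
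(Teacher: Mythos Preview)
Your construction is correct and achieves the stated bounds, but it follows a genuinely different route from the paper. The paper builds a dyadic \emph{midpoint} hierarchy: at level~$t$ it stores, for each half-interval of length $m/2^t$, the $(k,1+\eps)$-simplifications of every suffix of the left half and every prefix of the right half. A query $[i,j]$ is answered by locating the unique dyadic midpoint $y$ with $i\le y<j$ and returning $\Pi_1\circ\Pi_2$ where $\Pi_1,\Pi_2$ are the stored simplifications of the \emph{disjoint} pieces $P[i,y]$ and $P[y{+}1,j]$. Correctness then follows in one line from \Cref{obs:concatination} and \Cref{obs:subcurve_simp}; no walk information is stored.

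Your sparse-table variant instead covers $[i,j]$ by two \emph{overlapping} power-of-two windows and repairs the overlap by truncating the second simplification using stored group boundaries of a one-to-many walk. This is sound (restricting a one-to-many walk to a suffix of the $P$-side yields a valid walk of no larger cost), but it buys you nothing over the paper's disjoint split while adding two dependencies: you must assume the simplification subroutine actually exposes a witnessing one-to-many walk (true for the greedy algorithms underlying \Cref{thm:SimplificationHighDim} and \cite{BJWYZ08}, but not stated in their interfaces), and you must handle the boundary case $j-i+1=k\cdot 2^s$, where $c=j$ and the ``suffix'' part is empty. The paper's segment-tree split avoids both issues and makes the correctness argument a two-line application of the two observations.
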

	
	\begin{restatable}{theorem}{DOsubCurve}\label{thm:DOsubCurve}
		Given a curve $P\in \reals^{d\times m}$ and parameter $\eps>0$, 
		there exists a a data structure that given a query curve $Q\in \reals^{d\times k}$, and two indexes $1\le i\le j\le m$, returns an $(1+\eps)$-approximation of $\dfd(P[i,j],Q)$. The data structure has
		$m\log m\cdot O(\frac{1}{\eps})^{dk}\cdot\log\eps^{-1}$ storage space, $m^{2}\log\frac{1}{\eps}\cdot\left(O(\frac{1}{\eps})^{kd}+O(d\log m)\right)$ expected preprocessing time, and $\tilde{O}(k^2d)$ query time.
	\end{restatable}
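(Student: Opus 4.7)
\emph{Plan.} My plan is to build a hierarchy of $(1+\eps')$-distance oracles (via \Cref{thm:DOasymmetric}) on canonical dyadic subcurves of $P$, and answer a subcurve query $(Q,i,j)$ by decomposing $P[i,j]$ into $O(\log m)$ canonical pieces and combining the oracle answers through a dynamic program over the implicit splits of $Q$. This generalizes to arbitrary $k$ the $k=2$ data structures of \cite{DH13,Filtser18}, and mirrors the hierarchical ``concatenate two precomputed solutions'' template that drives \Cref{thm:zoomin2k}.

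\emph{Data structure.} For every dyadic sub-interval $[a,a+2^{\ell}-1]\subseteq[1,m]$ (with $1\le a$ and $0\le\ell\le\lceil\log m\rceil$) and every query length $k'\in\{1,\dots,k\}$, construct via \Cref{thm:DOasymmetric} a $(1+\eps/(c\log m))$-distance oracle $\mathcal{D}_{a,\ell,k'}$ for $P[a,a+2^{\ell}-1]$ that accepts query curves of length $k'$; here $c$ is a sufficiently large constant so that an $O(\log m)$-fold composition of $(1+\eps/(c\log m))$ factors stays within $1+\eps$. There are $O(m\log m)$ dyadic intervals, and summing $O(\tfrac{1}{\eps})^{dk'}\log\eps^{-1}$ over $k'\le k$ is dominated by $k'=k$, giving $O(\tfrac{1}{\eps})^{dk}\log\eps^{-1}$ space per interval; overall $m\log m\cdot O(\tfrac{1}{\eps})^{dk}\log\eps^{-1}$, matching the claim. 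By the same summation, the preprocessing time equals $\sum_{\ell}\sum_{a:\,a+2^{\ell}\le m+1} 2^{\ell}\log\eps^{-1}\bigl(O(\tfrac{1}{\eps})^{kd}+O(d\log m)\bigr)=m^{2}\log\eps^{-1}\bigl(O(\tfrac{1}{\eps})^{kd}+O(d\log m)\bigr)$.

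\emph{Query algorithm.} Greedily peel off the largest dyadic prefix of $P[i,j]$ to obtain a decomposition $P[i,j]=J_{1}\circ J_{2}\circ\cdots\circ J_{t}$ with $t=O(\log m)$. By the splitting identity for the discrete \frechet\ distance,
\[
\dfd(P[i,j],Q)=\min_{1=s_{0}\le s_{1}\le\cdots\le s_{t}=k}\ \max_{1\le\ell\le t}\dfd(J_{\ell},Q[s_{\ell-1},s_{\ell}]),
\]
so this can be computed by the DP $F[\ell][s]=\min_{s'\le s}\max\bigl(F[\ell-1][s'],\,\dfd(J_{\ell},Q[s',s])\bigr)$, where each distance is obtained from $\mathcal{D}_{a_{\ell},\ell_{\ell},s-s'+1}$ in $\tilde O(kd)$ time. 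To meet the $\tilde O(k^{2}d)$ target, I replace the explicit minimization over $s'$ by binary search over distance thresholds: for each of $O(\log\eps^{-1})$ candidate scales $r$ (those stored by \Cref{thm:DOasymmetric}), use the $(k',r,\eps/c)$-decision oracles (\Cref{obs:cover}) to sweep left-to-right through $J_{1},\dots,J_{t}$ and maintain a reachability vector $R[\ell][\cdot]\in\{0,1\}^{k}$, with a monotone pointer that advances through $[k]$ in amortized $O(k)$ decision probes per level. Together this yields $O(k\log m)$ probes per scale $r$ and $\tilde O(k^{2}d)$ total query time.

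\emph{Main obstacle.} The delicate step is shaving the extra factor of $k$ from the naive $\tilde O(k^{3}d\log m)$ DP. This rests on a monotonicity property of the Fr\'echet reachability table: for each fixed threshold $r$ and level $\ell$, the smallest predecessor index $s'$ witnessing $R[\ell][s]=1$ should be non-decreasing in $s$, which is what licenses the monotone-pointer amortization. Proving this monotonicity carefully in the presence of the approximate oracle answers with error $\eps/(c\log m)$, and verifying that composing errors across the $O(\log m)$ pieces still yields a $(1+\eps)$-approximation, are the principal technical challenges; both are natural generalizations of the argument used in \cite{Filtser18} for $k=2$.
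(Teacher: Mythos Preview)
Your hierarchical idea is right, but you are working much harder than necessary, and the hard part of your plan (the monotone-pointer shave) is both unnecessary and likely false.

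The paper does \emph{not} decompose $P[i,j]$ into $O(\log m)$ dyadic blocks. Instead, at every level of the recursion it stores, for each midpoint $y$, oracles for \emph{all} subcurves of the form $P[i,y]$ (left part, arbitrary start) and $P[y{+}1,j]$ (right part, arbitrary end). Given a query $(Q,i,j)$, one finds the unique highest midpoint $y$ with $i\le y<j$; then $P[i,j]=P[i,y]\circ P[y{+}1,j]$ is a concatenation of exactly \emph{two} precomputed pieces. The query now tries all $O(k)$ split positions $q$ of $Q$ (both the overlapping split $Q[1,q],Q[q,k]$ and the non-overlapping split $Q[1,q],Q[q{+}1,k]$), makes two oracle calls per split, and returns the minimum. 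This is $O(k)$ oracle calls of cost $\tilde O(kd)$ each, giving $\tilde O(k^{2}d)$ directly, with no DP and no monotonicity argument.

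Two further remarks on your attempt. First, your error budgeting is off: since the combination rule is a $\max$ (not a product), a $(1+\eps)$-oracle on each piece already yields a $(1+\eps)$-approximation of the true $\max$, so scaling $\eps$ by $1/\log m$ is unnecessary; had it been necessary, it would have inflated the per-oracle space to $O(\log m/\eps)^{dk}$ and broken your storage bound. Second, the monotonicity you rely on---that for a fixed threshold $r$ and piece $J_\ell$, the minimal feasible predecessor $s'$ is nondecreasing in $s$---does not hold in general for discrete \frechet\ reachability (already when $J_\ell$ is a single point and $Q$ alternates near/far from it, the feasible $s'$-intervals are non-monotone). So the shave from $\tilde O(k^{3}d\log m)$ to $\tilde O(k^{2}d)$ via a monotone pointer is not justified. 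The two-piece decomposition avoids this entirely.
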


	\subsection*{\nameref*{sec:high_d_optimization}.}
	In \Cref{lem:allDistances} we present a simple technique which is useful when one wants to get an approximated distance over a set of points in any dimension $d$.
	We use it to remove the exponential factor from the approximation algorithm of \cite{CR18}, and to generalize the algorithm of \cite{BJWYZ08} for computing a $(1+\eps)$-approximation of the optimal $k$ simplification in any dimension. Note that the algorithm of \cite{BJWYZ08} has running time $\tilde{O}(mk)$ for a curve in $P\in\R^{d\times m}$, and by considering $(k,1+\eps)$-simplifications instead of optimal $k$-simplifications we manage to reduce the running time to $\tilde{O}(\frac{md}{\eps^{4.5}})$.
	We obtain the following theorems.	
	
	\begin{restatable}{theorem}{CrudeFreshetApproxHighDim}\label{thm:approximation}
		Given two curves $P$ and $Q$ in $\reals^{d\times m}$, and a value $f\ge 1$, there is an algorithm that returns in $O\left(md\log(md)\log d+(md/f)^{2}d\log(md)\right)=\tilde{O}(md+(md/f)^{2}d)$\textsl{}
		time a value $\tilde{\Delta}$ such that $\dfd(P,Q)\le\tilde{\Delta}\le f\cdot\dfd(P,Q)$.
	\end{restatable}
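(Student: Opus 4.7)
The plan is to lift the approximation algorithm of Chan and Rahmati~\cite{CR18}, which runs in $O(m\log m + m^2/f^2)$ time for curves in constant dimension, to arbitrary dimension $d$. The bottleneck of a direct extension is that their algorithm invokes spatial primitives (e.g.\ range searching in balls) whose running times scale as $2^{O(d)}$, and we wish to replace these with primitives whose cost grows polynomially in $d$ so that the final bound takes the form $\tilde{O}(md + (md/f)^2 d)$.

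First I would extract the ``decision''-style core of Chan--Rahmati: given a guess $r$ of the \frechet\ distance, decide whether $\dfd(P,Q) \le r$ or $\dfd(P,Q) > c\cdot r$ for a small constant $c$, allowing either answer in between. The algorithm then wraps this decision routine in a logarithmic-scale search over candidate values of $r$, together with a coarsening/simplification step that collapses nearby vertices. Inside both components the only geometric operations are, for pairs of vertices $(p_i, q_j)$, tests of the form ``is $\|p_i - q_j\|$ smaller than a given threshold?''. I would replace each such exact test by an approximate one based on \Cref{lem:allDistances}, which furnishes constant-factor approximations to all relevant pairwise distances in $\tilde{O}(md)$ time in arbitrary dimension $d$.

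Once the geometric primitive has been replaced, the coarsening phase that originally runs in $O(m\log m)$ time now costs $\tilde{O}(md)$ via \Cref{lem:allDistances}, and the coarsened curves have size $O(md/f)$. The ensuing matching/dynamic-programming step then visits $O((md/f)^2)$ cells, each of which performs at most one exact Euclidean distance verification at cost $O(d)$. Summing these contributions yields the claimed running time $\tilde{O}(md + (md/f)^2 d)$, after which the outer logarithmic search over $r$ contributes only polylogarithmic factors absorbed in $\tilde{O}(\cdot)$.

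The main obstacle will be bookkeeping the approximation factors carefully: the constant-factor slack introduced by \Cref{lem:allDistances} must be absorbed into Chan--Rahmati's analysis without inflating the final approximation factor above $f$. Concretely, this amounts to running the modified decision procedure with target threshold $r/c'$ for an appropriate constant $c'$ and verifying, step by step through their case analysis, that every comparison against an exact distance can be safely replaced by a comparison against the approximate distance. Because the original analysis already tolerates the $f$-factor slack and only uses the distances through these threshold tests, this verification is routine, and the final rescaling recovers the desired guarantee $\dfd(P,Q) \le \tilde{\Delta} \le f\cdot \dfd(P,Q)$.
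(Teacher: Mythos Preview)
Your proposal misidentifies where the exponential-in-$d$ cost lives and, as a consequence, misapplies \Cref{lem:allDistances}. The paper states explicitly (\Cref{thm:approximation_desicion}) that the \emph{decision} procedure of Chan--Rahmati already runs in $O(md+(md/f)^2 d)$ time, i.e.\ polynomially in $d$; the exponential dependence is solely in the Bringmann--Mulzer \emph{optimization} wrapper that generates candidate thresholds $r$. So there is nothing to fix inside the decision routine: the threshold tests ``is $\|p_i-q_j\|\le r$?'' you propose to replace cost $O(d)$ each and are not the bottleneck. Moreover, \Cref{lem:allDistances} does not give a lookup table of approximate pairwise distances; it returns a small set $M\subset\R_+$ of \emph{candidate values} with the property that every scaled interpoint distance $\beta\|x-y\|$ is $(1+\eps)$-approximated by some element of $M$. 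Using it as a per-pair distance oracle, as you describe, is not supported by the lemma.

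The paper's actual argument is much simpler and treats the Chan--Rahmati decision algorithm as a black box. The key observation you are missing is that $\dfd(P,Q)$ is always \emph{exactly} an interpoint distance $\|x-y\|$ for some $x\in P$, $y\in Q$ (the pair realizing the maximum in an optimal walk). Invoking \Cref{lem:allDistances} on $P\cup Q$ with $\eps=\tfrac12$ and interval $[1,1]$ therefore yields a set $M$ of $O(md\log d)$ scalars, one of which lies within a factor $2$ of $\dfd(P,Q)$. After sorting $M$, a binary search using the decision procedure (with parameter $f'=f/2$, to absorb the factor $2$) locates a YES/NO boundary $\alpha_i$ and returns $\tilde\Delta=2\alpha_i f'$. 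Your ``outer logarithmic search over $r$'' gestures at this, but without the interpoint-distance observation you have no finite set of candidates to search over, and the argument does not close.
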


	\begin{restatable}{theorem}{SimplificationHighDim}\label{thm:SimplificationHighDim}
		Given a curve $P\in\R^{m\times d}$ and parameters $k\in[m]$, $\eps\in(0,\frac12)$, there is an $\tilde{O}(\frac{md}{\eps^{4.5}})$-time algorithm that computes a $(k,1+\eps)$-simplification $\Pi$ of $P$.
		In addition the algorithm returns a value $\delta$ such that $\dfd(P,\Pi)\le\delta\le(1+\eps)\delta^*$, where $\delta^*$ is the distance between $P$ to an optimal $k$-simplification.\\
		Furthermore, if $d$ is fixed, the algorithm can be executed in $m\cdot O(\frac{1}{\eps}+\log\frac{m}{\eps}\log m)$ time.
	\end{restatable}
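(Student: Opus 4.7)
The plan is to adapt the greedy $\delta$-simplification scheme (\Cref{lem:optrsimplification}, following \cite{BJWYZ08}) from constant $d$ to arbitrary dimension, by replacing its exact minimum enclosing ball (MEB) computation with an approximate one, and by searching for the right $\delta$ via a geometric binary search. Recall that for fixed $\delta$, the greedy procedure starts at $P[1]$, finds the largest index $i$ such that $P[1,i]$ fits in a ball of radius $\delta$, outputs the center as the next simplification vertex, and recurses on $P[i,m]$. The length of the output is a non-increasing step function of $\delta$, so obtaining a $(k,1+\eps)$-simplification reduces to finding, up to a $1+\eps$ multiplicative error, the smallest $\delta$ for which the greedy procedure emits at most $k$ vertices.

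First I would establish a polynomial-range estimate $\delta^*\in[\alpha,\beta]$ with $\log(\beta/\alpha)=O(\log m)$. This can be done in near-linear time by combining \Cref{thm:approximation} (applied to $P$ and a trivial $k$-simplification such as the one obtained by uniformly sampling every $\lceil m/k\rceil$-th vertex of $P$) with a crude lower bound derived from the radii of balls enclosing evenly spaced subcurves. I would then probe $\delta$ along a geometric grid $\delta_j=(1+\eps')^j\alpha$ with $\eps'=\Theta(\eps)$, via a binary search of depth $O(\log\log m+\log\eps^{-1})$, taking the smallest $\delta_j$ whose greedy simplification has at most $k$ vertices. By the monotonicity of greedy, and because at $\delta=\delta^*$ greedy returns at most $|\Pi^*|\le k$ vertices, the chosen $\delta_j$ satisfies $\delta_j\le(1+\eps')\delta^*$, and, after adjusting $\eps'$ by a constant, the corresponding simplification is a valid $(k,1+\eps)$-simplification.

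The main obstacle, and the source of the $\eps^{-4.5}$ factor, is that in high dimensions exact MEB computations inside each greedy step are prohibitive. I would replace them with $(1+\eps')$-approximate MEBs computed via a standard coreset-based algorithm in $\tilde{O}(nd/\eps+1/\eps^{4.5})$ time on $n$ points in $\R^d$, and use \Cref{lem:allDistances} to avoid any exponential dependence on $d$ when checking that all current batch points lie within the approximate enclosing ball. The coreset is maintained incrementally as the batch index $i$ grows, at amortised cost $\tilde{O}(d/\eps)$ per new point; once the approximate radius exceeds $(1+\eps')\delta$ we commit the last valid center to $\Pi$ and start the next greedy step. The multiplicative slack introduced by the $(1+\eps')$-MEB is absorbed into the final $(1+\eps)$-guarantee by choosing $\eps'$ as a small constant fraction of $\eps$.

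A single greedy pass then costs $\tilde{O}(md/\eps+k/\eps^{4.5})=\tilde{O}(md/\eps^{4.5})$ since $k\le m$, and the binary search adds only polylogarithmic overhead, giving the claimed $\tilde{O}(md/\eps^{4.5})$ total. Correctness of $\dfd(P,\Pi)\le\delta\le(1+\eps)\delta^*$ follows from the monotonicity of greedy together with the approximation slack. For fixed $d$ I would substitute Welzl's linear-time exact MEB for the approximate subroutine, eliminating the $\eps^{-4.5}$ term; the same geometric binary search then yields the sharper $m\cdot O(\eps^{-1}+\log(m/\eps)\log m)$ bound.
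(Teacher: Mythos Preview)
Your proposal has a genuine gap at the very first step. You claim that one can establish $\delta^*\in[\alpha,\beta]$ with $\log(\beta/\alpha)=O(\log m)$, but neither ingredient you cite delivers this. First, \Cref{thm:approximation} applies only to two curves of the \emph{same} length $m$, so it cannot be invoked on $P$ and a $k$-vertex uniform sample. Second, and more importantly, even if you computed $\dfd(P,\Pi_{\text{trivial}})$ exactly, the ratio $\dfd(P,\Pi_{\text{trivial}})/\delta^*$ is not bounded by any function of $m$: take $P$ consisting of $m-1$ points inside a ball of radius $\eta$ together with one point at distance $R$, and $k=2$; then $\delta^*\le\eta$ while uniform sampling (which misses the outlier) gives distance $\approx R$, and $R/\eta$ is arbitrary. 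Likewise, your ``crude lower bound from radii of evenly spaced subcurves'' can be arbitrarily smaller than $\delta^*$. The aspect ratio of candidate values of $\delta$ is governed by the spread of edge lengths of $P$, not by $m$, so a geometric grid over a $\poly(m)$-sized range does not exist in general.

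The paper sidesteps this entirely with a structural observation you are missing (\Cref{obs:SimpInInterval}): in an optimal $k$-simplification, $\delta^*$ equals the MEB radius of some contiguous block of $P$, and by Jung's theorem that radius lies in $[\tfrac12,\tfrac{1}{\sqrt2}]\cdot\|x-y\|$ for a pair $x,y\in P$ realising the block's diameter. Thus $\delta^*$ is always within a constant-factor window of some pairwise distance in $P$. \Cref{lem:allDistances} (a WSPD-based construction) then produces, in near-linear time, a set $M$ of only $\tilde O(md/\eps)$ scalars that is guaranteed to contain a $(1+\eps')$-approximation of $\delta^*$, regardless of the spread of $P$. The algorithm sorts $M$ and binary-searches over it, using \Cref{lem:optrsimplification} (the approximate-MEB greedy, which is where the $\eps^{-4.5}$ comes from) as the decision procedure. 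Your description also misplaces \Cref{lem:allDistances}: it is used to build the candidate set $M$, not to test ball containment inside a greedy pass.
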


	\section{Distance Oracle: the static case}	\label{sec:static_distance_oracle}
We begin by constructing a $(1+\eps)$-distance oracle for the static case, where a curve $P\in\reals^{d\times m}$ is given in the preprocessing stage.
To achieve a $(1+\eps)$ approximation for the distance between $P$ and a query $Q\in\reals^{d\times k}$ in near linear time, our distance oracle first computes a very rough estimation of this distance. Then, in order to reduce the approximation factor, we maintain a polynomial number of ranges $[\alpha,\beta]$, for which we store a distance oracle that can answer queries only when the answer is in the range $[\alpha,\beta]$. This structure uses a set of $(k,r,\eps)$-covers, where $r$ grows exponentially in the given range $[\alpha,\beta]$.

We describe the ingredients of our distance oracle from the bottom up, starting with the basic construction of a curve cover, then present the bounded range distance oracle, describe a solution for the case where $k=m$ (the symmetric case), and finally show how to combine all the ingredients and construct a $(1+\eps)$-distance oracle for $P$ with near linear query time and $O(\frac1\eps)^{kd}\log\eps^{-1}$ storage space.

\subsection{Cover of a curve}\label{subsec:StaticDecisionDO}
Given input curve $P\in\reals^{d\times m}$, in this section we show how to construct a data structure that stores a $(k,r,\eps)$-cover $\C$ of size $O(\frac1\eps)^{kd}$ for $P$, and has a linear look-up time.

Our data structure is based on the ANN data structure presented by Filtser et al. \cite{FFK20}. For a single curve $P$, this data structure essentially solves a decision version of the distance oracle: given parameters $r$ and $\eps$, the $ (r,1+\eps)$-ANN data structure uses $O(\frac{1}{\eps})^{kd}$ storage space, and given a query curve $Q\in\reals^{d\times k}$ returns YES if $\dfd(P,Q)\le r$ and NO if  $\dfd(P,Q)>(1+\eps)r$ (if $r<\dfd(P,Q)\le (1+\eps)r$ it can return either YES or NO).

Using the same technique from \cite{FFK20} with a slight adaptation, one can construct a $(k,r,\eps)$-cover with the same space and look-up bounds. We include the basic details here for completeness.

Consider the infinite $d$-dimensional grid with edge length $\frac{\eps}{\sqrt{d}}r$, and
let 
\[
\G=\bigcup_{1\le i\le m} G_{\eps,r}(P[i],(1+\eps)r).
\] 
Let $\C$ be the set of all curves $W$ with $k$ points from $\G$, such that $\dfd(P,W)\le (1+\eps)r$. Filtser \etal \cite{FFK20} showed that $|\C|=O(\frac{1}{\eps})^{kd}$, and that it can be computed in $m\cdot O(\frac{1}{\eps})^{kd}$ time.

\paragraph{The data structure.} We insert the curves of $\C$ into the dictionary $\D$ as follows. For each curve $W\in \C$, if $W\notin \D$, insert $W$ into $\D$, and set $\dist(W)\leftarrow \dfd(P,W)$.

Filtser \etal \cite{FFK20} showed that $\D$ can be implemented using Cuckoo Hashing~\cite{PR04}, so that given a query curve $Q$, one can find $Q$ in $\D$ (if it exists) in $O(kd)$ time, the storage space required for $\D$ is $O(\frac{1}{\eps})^{kd}$, and it can be constructed in $m\cdot (O(\frac{1}{\eps})^{kd}+d\log m)$ expected time.

\paragraph{The query algorithm.} Let $Q\in\reals^{d\times k}$ be the query curve. The query algorithm is as follows: For each $1\le i\le k$ find the grid point $x_i$ (not necessarily from $\G$) closest to $Q[i]$. This can be done in $O(kd)$ time by rounding. Then, search for the curve $W'=(x_1,\dots,x_k)$ in the dictionary $\D$. If $W'$ is in $\D$, return $W'$ and $\dist(W')$, otherwise, return NO. The total query time is then $O(kd)$.

\paragraph{Correctness.} First, by the construction, for any $W\in\C$ we have $\dfd(P,W)\le (1+\eps)r$. Secondly, let $Q\in\reals^{d\times k}$ be a query curve such that that $\dfd(P,Q)\le r$. Notice that $\left\Vert Q[i]-x_i\right\Vert_{2}\le \frac{\eps}{2\sqrt{d}} r$ because the length of the grid edges is $\frac{\eps}{\sqrt{d}}r$, and thus $\dfd(Q,W')\le \frac\eps2 r$.
By the triangle inequality, $\dfd(P,W')\le\dfd(P,Q)+\dfd(Q,W')\le (1+\eps)r$,
and therefore $W'$ is in $\C$.

By \Cref{obs:cover} we obtain the following lemma.
\begin{lemma}\label{lem:DOdecision}
	Given a curve $P\in\reals^{d\times m}$ and a parameters $r\in \R_+$, $\eps\in(0,\frac14)$ and $k\ge 1$, there is an algorithm that constructs a $(k,r,\eps)$-decision distance oracle with $O(\frac{1}{\eps})^{kd}$ storage space, $m\cdot\left(O(\frac{1}{\eps})^{kd}+O(d\log m)\right)$ expected preprocessing time, and $O(kd)$ query time.
\end{lemma}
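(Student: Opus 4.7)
My plan is to reduce the lemma to constructing a $(k,r,\eps/4)$-cover of $P$ equipped with a fast membership oracle, and then invoke \Cref{obs:cover}: such a cover, together with the stored distance labels $\dist(W)=\dfd(P,W)$, immediately yields a $(k,r,\eps)$-decision distance oracle with the same asymptotic space and query time. Rescaling $\eps$ by the constant $1/4$ does not affect any of the bounds in the statement, so it suffices to build the cover with the claimed bounds.

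For the cover I would use exactly the grid construction outlined earlier in \Cref{subsec:StaticDecisionDO}. Take the axis-aligned grid of side length $\frac{\eps}{\sqrt{d}} r$, form $\G=\bigcup_{i\in[m]} G_{\eps,r}(P[i],(1+\eps)r)$, and let $\C$ be the set of length-$k$ curves with vertices in $\G$ whose discrete \frechet\ distance to $P$ is at most $(1+\eps)r$. By \Cref{clm:grid_points_in_ball} each ball $G_{\eps,r}(P[i],(1+\eps)r)$ contributes $O(1/\eps)^{d}$ grid points; the key combinatorial observation, inherited from \cite{FFK20}, is that even though there are $m$ such balls we still obtain $|\C|=O(1/\eps)^{kd}$, and $\C$ can be enumerated together with the labels $\dist(W)$ in $m\cdot O(1/\eps)^{kd}$ time.

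To achieve $O(kd)$ look-ups I would store $\C$ in a Cuckoo Hashing dictionary $\D$, keyed by the $k$-tuple of grid cells containing the vertices of each curve. Cuckoo Hashing gives $O(1/\eps)^{kd}$ worst-case space and $m\cdot\bigl(O(1/\eps)^{kd}+d\log m\bigr)$ expected construction time, matching the preprocessing bound in the lemma. A query $Q\in\R^{d\times k}$ is then answered by rounding each $Q[i]$ to its nearest grid vertex $x_i$ in $O(d)$ time, and performing a single look-up of $W'=(x_1,\dots,x_k)$ in $\D$; if present we return $W'$ together with $\dist(W')$, otherwise we return NO.

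It remains to verify the cover property, after which \Cref{obs:cover} closes the argument. Every $W\in\C$ satisfies $\dfd(P,W)\le(1+\eps)r$ by construction. Conversely, if $\dfd(P,Q)\le r$ then since the grid has side $\frac{\eps}{\sqrt{d}} r$ the rounding incurs $\dfd(Q,W')\le\tfrac{\eps}{2}r$, so the triangle inequality yields $\dfd(P,W')\le(1+\tfrac{\eps}{2})r$, placing $W'\in\C$. The one non-trivial ingredient I would lean on is the enumeration bound from \cite{FFK20}: naively counting $k$-tuples of points in $\G$ gives $m^k\cdot O(1/\eps)^{kd}$, and a careful argument (essentially building candidate curves greedily from left to right under the one-to-many walk structure) is needed to shave the $m^k$ factor and stay within $O(1/\eps)^{kd}$. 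All other pieces are routine.
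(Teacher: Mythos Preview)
Your proposal is correct and follows essentially the same approach as the paper: build the grid-based $(k,r,\eps)$-cover $\C$ from \Cref{subsec:StaticDecisionDO}, store it in a Cuckoo-hashed dictionary with the labels $\dist(W)=\dfd(P,W)$, answer queries by rounding and a single look-up, and then invoke \Cref{obs:cover}. Your explicit rescaling by $\eps/4$ and your remark that the $m^k$-to-$O(1/\eps)^{kd}$ enumeration bound from \cite{FFK20} is the only non-routine ingredient are both on point and match the paper's reliance on that citation.
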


\subsection{Bounded range distance oracle}
We now show how to use $(k,\eps,r)$-covers in order to solve the following problem.

\begin{problem}[Bounded range distance oracle]
	Given a curve $P\in\reals^{d\times m}$, a range $[\alpha,\beta]$ (where $\beta\ge 4\alpha$), and parameters $\eps\in(0,\frac14)$ and $k\ge 1$, preprocess $P$ into a data structure that given a query curve $Q\in\reals^{d\times k}$ with $\dfd(P,Q)\in[\alpha,\beta]$, returns a $(1+\eps)$ approximation of $\dfd(P,Q)$.
\end{problem}

\paragraph{The data structure.}
For every $0\le i\le \lceil\log\nicefrac{\beta}{\alpha}\rceil$, we construct a decision distance oracle $\D_i$ with parameter $r_i=\alpha\cdot 2^i$, and $\eps'=\frac\eps4$. 
The total storage space is therefore $O(\log\nicefrac{\beta}{\alpha})\cdot O(\frac{1}{\eps})^{kd}$, while the preprocessing time is $m\log\nicefrac{\beta}{\alpha}\cdot\left(O(\frac{1}{\eps})^{kd}+O(d\log m)\right)$ in expectation.

\paragraph{The query algorithm.} Given a query curve $Q\in\reals^{d\times k}$ such that $\dfd(P,Q)\in[\alpha,\beta]$, preform a binary search on the values $r_i=\alpha\cdot 2^i$ for $0\le i \le \lceil\log\nicefrac{\beta}{\alpha}\rceil$, using the decision distance oracles as follows. Let $[s',t']$ be the current range. We describe a recursive binary search where the invariant is that $\alpha\cdot 2^{s'}\le\dfd(P,Q)\le\alpha\cdot 2^{t'}$. If $t'\le s'+2$ return the answer of $\D_{t'}$. Else, let $x=\lfloor \frac{t'-s'}{2}\rfloor$ and query $\D_x$ with $Q$. If it returns a distance, then set the current range to $[s',x+1]$. Otherwise set current range to $[x,t']$.

The number of decision queries is $O(\log \log \nicefrac{\beta}{\alpha})$.
By \Cref{lem:DOdecision}, the total query time is therefore $O(kd\log \log \nicefrac{\beta}{\alpha})$.

\paragraph{Correctness.}
Assume that $\alpha\cdot 2^{s'}\le\dfd(P,Q)\le\alpha\cdot 2^{t'}$. If  $t'\le s'+2$ then $\D_{t'}$ returns a value $\Delta$ such that $\dfd(P,Q)\le\Delta\le \dfd(P,Q)+\eps' r_{t'}=\dfd(P,Q)+\frac{\eps}{4}\alpha\cdot 2^{t'}=\dfd(P,Q)+\eps\alpha\cdot 2^{s'}\le (1+\eps)\dfd(P,Q)$.

Else, if $\D_x$ returns a distance value, then $\dfd(P,Q)\le(1+\eps)r_x=(1+\eps)\alpha\cdot 2^x\le \alpha\cdot 2^{x+1}$ (for $\eps<1$), so $\alpha\cdot 2^{s'}\le\dfd(P,Q)\le\alpha\cdot 2^{x+1}$ and the invariant still hold. Also, notice that $x+1<t'$ because $t'-s'>1$. If $\D_x$ returns NO, then $\dfd(P,Q)>r_x=\alpha\cdot 2^x$, so $\alpha\cdot 2^{x}\le\dfd(P,Q)\le \alpha\cdot 2^{t'}$ and the invariant still hold.

We obtain the following lemma.
\begin{lemma}\label{lem:DObounded-range}
		Given a curve $P\in\reals^{d\times m}$ a range $[\alpha,\beta]$ where $\beta\ge4\alpha$, and parameters $\eps\in(0,\frac14)$, $k\ge 1$, there exists a bounded range distance oracle with $O(\frac{1}{\eps})^{kd}\cdot \log\nicefrac{\beta}{\alpha}$ storage space, $m\log\nicefrac{\beta}{\alpha}\cdot\left(O(\frac{1}{\eps})^{kd}+O(d\log m)\right)$
	expected preprocessing time, and $O(kd\log\log \nicefrac{\beta}{\alpha})$ query time.
\end{lemma}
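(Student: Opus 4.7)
The plan is to cover the range $[\alpha,\beta]$ by a dyadic sequence of scales and use \Cref{lem:DOdecision} at each scale. Concretely, for each integer $i$ with $0\le i\le \lceil \log(\beta/\alpha)\rceil$ I would build a $(k,r_i,\eps/4)$-decision distance oracle $\D_i$ for $P$ with $r_i=\alpha\cdot 2^i$. Summing the per-oracle guarantees of \Cref{lem:DOdecision} over the $O(\log(\beta/\alpha))$ scales yields the claimed space bound $O(1/\eps)^{kd}\cdot\log(\beta/\alpha)$ and the claimed expected preprocessing time.

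Given a query $Q$ satisfying the promise $\dfd(P,Q)\in[\alpha,\beta]$, I would answer by binary-searching over the scales. Maintain indices $[s',t']$ together with the invariant $\alpha\cdot 2^{s'}\le \dfd(P,Q)\le \alpha\cdot 2^{t'}$, which holds initially for $[0,\lceil\log(\beta/\alpha)\rceil]$ by the promise. At each step, probe $\D_x$ at the midpoint $x=\lfloor(s'+t')/2\rfloor$; if $\D_x$ returns a value shrink to $[s',x+1]$, otherwise shrink to $[x,t']$. Stop once $t'-s'\le 2$ and return the value produced by $\D_{t'}$. This yields $O(\log\log(\beta/\alpha))$ decision queries, each costing $O(kd)$, matching the claimed query time.

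For correctness, the key step is verifying that each branch preserves the invariant. If $\D_x$ returns a value $\Delta$, the semantics of \Cref{lem:DOdecision} guarantee $\dfd(P,Q)\le (1+\eps/4)r_x\le \alpha\cdot 2^{x+1}$, so $[s',x+1]$ is valid. If $\D_x$ returns NO then $\dfd(P,Q)>r_x=\alpha\cdot 2^x$, so $[x,t']$ is valid. At termination, $\D_{t'}$ returns a value $\Delta$ with $\dfd(P,Q)\le \Delta\le \dfd(P,Q)+(\eps/4)r_{t'}$; since $t'-s'\le 2$ we have $r_{t'}=\alpha\cdot 2^{t'}\le 4\cdot \alpha\cdot 2^{s'}\le 4\dfd(P,Q)$, so the additive error is at most $\eps\cdot \dfd(P,Q)$, giving the required $(1+\eps)$-approximation.

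The main subtlety I expect to address is the ambiguous regime of the decision oracles: a $(k,r,\eps/4)$-decision oracle may return either a value or NO when $\dfd(P,Q)\in(r,(1+\eps/4)r]$. The invariant analysis above is designed to absorb this by choosing $x+1$ (rather than $x$) as the new upper endpoint in the "value" branch, so both possible answers in the ambiguity zone are consistent with the updated range. I would also double-check that the base-case threshold $t'-s'\le 2$ is tight: it is exactly what is needed so that $r_{t'}/\dfd(P,Q)\le 4$, i.e., the additive error $(\eps/4)r_{t'}$ does not exceed $\eps\cdot\dfd(P,Q)$; the hypothesis $\beta\ge 4\alpha$ is what ensures the search range is non-trivial so the base case is meaningful.
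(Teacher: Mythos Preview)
Your proposal is correct and essentially identical to the paper's proof: the paper builds $(k,r_i,\eps/4)$-decision oracles at the dyadic scales $r_i=\alpha\cdot 2^i$, binary-searches with the same invariant $\alpha\cdot 2^{s'}\le\dfd(P,Q)\le\alpha\cdot 2^{t'}$, updates to $[s',x+1]$ on a returned value and to $[x,t']$ on NO, and terminates at $t'-s'\le 2$ returning $\D_{t'}$'s value. Your handling of the ambiguity zone and the base-case arithmetic $(\eps/4)r_{t'}\le\eps\cdot\dfd(P,Q)$ exactly mirrors the paper's reasoning (and your midpoint $x=\lfloor(s'+t')/2\rfloor$ is in fact what the paper intends).
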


\subsection{Symmetric distance oracle}
We construct a $(1+\eps)$-distance oracle for the symmetric case of $k=m$.
This time, the query algorithm of our distance oracle does not simply return a precomputed value, but actually preform a smart case analysis which allows both fast query time and a relatively small storage space complexity. The main idea is to first preform a fast computation of a very rough approximation of the distance between the query $Q$ and the input $P$. If the approximated distance is very large or very small, we show that a $(1+\eps)$-approximation can be returned right away. For the other cases, we use a (precomputed) set of bounded range distance oracles, as described in the previous section.

The decision whether the approximated distance is very large or very small depends on the length of the smallest and largest edges of $P$.
Denote ${\lambda(P)=\frac{1}{2}\min_{1\le i\le m-1}\{\Vert P[i]-P[i+1]\Vert\}}$, i.e. $\lambda(P)$ is half the length of the shortest edge in $P$.
Let $l_1\le l_2\le\dots\le l_{m-1}$ be a sorted list of the lengths of $P$'s edges, and set $l_0=\lambda(P)=\frac{l_1}{2}$ and $l_m=\frac{dm^2}{\eps}l_{m-1}$.

Let $X\in\reals^{d\times m_1}$ and $Y\in\reals^{d\times m_2}$ be two curves. Notice that there are no $i\in[m_1-1]$ and $j\in[m_2]$ such that $\Vert Y[j]-X[i]\Vert<\lambda(X)$ and $\Vert Y[j]-X[i+1]\Vert<\lambda(X)$, because otherwise $\Vert X[i]-X[i+1]\Vert<2\lambda(X)=\min_{1\le i\le m_1-1}\{\Vert X[i]-X[i+1]\Vert\}$. Therefore, if $\dfd(X,Y)<\lambda(X)$, then there exists a single paired walk $\omega$ along $X$ and $Y$ with cost $\dfd(X,Y)$, and $\omega$ is one-to-many.

\begin{algorithm}[h]
	\caption{\texttt{SmallDistance$(X,Y)$}}\label{alg:SmallDistance}
	\DontPrintSemicolon
	\SetKwInOut{Input}{input}\SetKwInOut{Output}{output}
	\Input{Curves $X\in\reals^{d\times m_1}$ and $Y\in\reals^{d\times m_2}$}
	\Output{Either $\dfd(X,Y)$ or NO}
	\BlankLine
	Compute $\lambda(X)$\;
	Set $j\leftarrow 1$, $\Delta\leftarrow 0$\;
	\For{$1\le i \le m_1$}{
		\If{$j> m_2$ OR $\Vert X[i]-Y[j] \Vert \ge \lambda(X)$}
		{\Return NO \;}
		\While{$\Vert X[i]-Y[j] \Vert < \lambda(X)$}{
			$\Delta\leftarrow\max\{\Delta,\Vert X[i]-Y[j] \Vert\}$\; $j\leftarrow j+1$\;
		}
	}
	\Return $\Delta$\;
\end{algorithm}

Consider \Cref{alg:SmallDistance}, which essentially attempts to compute the one-to-many paired walk $\omega$ along $X$ and $Y$ with respect to $\lambda(X)$, in a greedy fashion. If the algorithm fails to do so, then $\dfd(X,Y)\ge\lambda(X)$. It is easy to see that the running time of \Cref{alg:SmallDistance} is $O(m_1+m_2)$. Therefore, we obtain the following claim.
\begin{claim}\label{clm:long_edges}
	\Cref{alg:SmallDistance} runs in linear time, and if $\dfd(X,Y)<\lambda(X)$ then it returns $\dfd(X,Y)$, else, it returns NO.
\end{claim}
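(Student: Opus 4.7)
The plan is to split the proof into three parts: the running time, the ``yes'' direction (if $\dfd(X,Y)<\lambda(X)$ then the algorithm returns $\dfd(X,Y)$), and the ``no'' direction (otherwise the algorithm returns NO).

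For the running time, observe that $\lambda(X)$ is computed once in $O(m_1)$ time. The outer \texttt{for} loop runs $m_1$ iterations, each doing $O(1)$ bookkeeping outside the inner \texttt{while} loop. The inner \texttt{while} loop only increments $j$, which is never decremented and never exceeds $m_2+1$ throughout the whole execution. Amortizing the total work of all \texttt{while}-iterations over increments of $j$ yields $O(m_2)$, so the total running time is $O(m_1+m_2)$.

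For the ``yes'' direction, suppose $\dfd(X,Y)<\lambda(X)$. The discussion immediately preceding the algorithm shows that (by triangle inequality applied to $\|X[i]-X[i+1]\|$) no single vertex $Y[j]$ can be within $\lambda(X)$ of two consecutive vertices $X[i],X[i+1]$; hence the unique optimal paired walk $\omega=\{(\P_i,\Q_i)\}_{i=1}^{m_1}$ must be one-to-many and each $Y[j]$ is matched to exactly one $X[i]$, namely the only $X[i]$ with $\|X[i]-Y[j]\|<\lambda(X)$. I would then prove by induction on $i$ that at the top of iteration $i$ of the outer loop, the index $j$ equals the first index of $\Q_i$. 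The base case is immediate; for the inductive step, during iteration $i$ every $Y[j']\in \Q_i$ satisfies $\|X[i]-Y[j']\|\le\dfd(X,Y)<\lambda(X)$, so the \texttt{while} loop advances $j$ through all of $\Q_i$, and then stops because $\|X[i]-Y[j'']\|\ge\lambda(X)$ for the first index $j''$ of $\Q_{i+1}$ (by the uniqueness observation above). Thus the algorithm never returns NO in this case, and upon termination $\Delta$ equals the maximum $\|X[i]-Y[j]\|$ over all matched pairs, which is exactly the cost of $\omega$, i.e.\ $\dfd(X,Y)$.

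For the ``no'' direction, I would argue the contrapositive: if the algorithm returns a numerical value $\Delta$ (rather than NO), then throughout the execution it has assigned every $Y[j]$ to exactly one $X[i]$ with $\|X[i]-Y[j]\|<\lambda(X)$, and these assignments form a valid one-to-many paired walk $\omega'$ whose cost is $\Delta<\lambda(X)$. Consequently $\dfd(X,Y)\le\Delta<\lambda(X)$, contradicting $\dfd(X,Y)\ge\lambda(X)$. Combining with the ``yes'' direction, whenever the returned value is a number it must equal $\dfd(X,Y)$; whenever $\dfd(X,Y)\ge\lambda(X)$ the algorithm returns NO. The main (minor) obstacle is the bookkeeping for the inductive invariant and, depending on exact reading of the pseudocode, verifying that $j=m_2+1$ at termination so that every vertex of $Y$ is indeed matched; this follows from the inductive step applied at $i=m_1$, since $\Q_{m_1}$ ends at $Y[m_2]$ in the unique walk $\omega$.
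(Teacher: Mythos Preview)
Your running-time analysis and your treatment of the ``yes'' direction are correct and considerably more detailed than the paper's, which essentially just asserts the claim after an informal sentence describing the algorithm as a greedy attempt to build the one-to-many walk.

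Your ``no'' direction, however, has a real gap. You assert that ``if the algorithm returns a numerical value $\Delta$ \ldots\ it has assigned every $Y[j]$ to exactly one $X[i]$,'' and you justify the crucial point (that $j=m_2+1$ at termination) by appealing to ``the inductive step applied at $i=m_1$, since $\Q_{m_1}$ ends at $Y[m_2]$ in the unique walk $\omega$.'' But $\omega$ and your inductive invariant were established only under the hypothesis $\dfd(X,Y)<\lambda(X)$; in the contrapositive of the ``no'' direction you are not entitled to that hypothesis, so this justification is circular.

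In fact, for the pseudocode exactly as written the ``else'' part of the claim fails. Take $X=(0,10)$ and $Y=(1,11,20)$ in $\R$, so $\lambda(X)=5$ and $\dfd(X,Y)=10\ge\lambda(X)$. The algorithm matches $Y[1]$ to $X[1]$ (distance $1$), exits the \texttt{while} loop since $\|X[1]-Y[2]\|=11\ge5$; then matches $Y[2]$ to $X[2]$ (distance $1$), exits the \texttt{while} loop at $j=3$ since $\|X[2]-Y[3]\|=10\ge5$; the \texttt{for} loop ends and the algorithm returns $\Delta=1$ rather than \textsc{NO}, leaving $Y[3]$ unmatched. The easy fix is a final test that $j>m_2$ before returning $\Delta$; with that extra line your contrapositive argument goes through verbatim. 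This does not affect the paper's downstream uses of the claim (Cases~1 and~3 of the symmetric oracle), which invoke only the ``yes'' direction you proved correctly.
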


Our query algorithm first computes an $m$-approximation $\tilde{\Delta}$ of $\dfd(P,Q)$ using \Cref{thm:approximation}. 
The query algorithm contains four basic cases, depending on the value $\tilde{\Delta}$. Cases 1-3 do not require any precomputed values, and compute the returned approximated distance in linear time. For case 4, we store a set of $O(m\cdot\lceil\log_m\frac{1}{\eps}\rceil))$ bounded range distance oracles as follows. 

First, consider the following ranges of distances: for $1\le i\le m-1$, set $[\alpha_i,\beta_i]=[\frac{1}{5dm}l_i,\frac{dm^2}{\eps}l_i]$.
Notice that $\frac{\beta_i}{\alpha_i}=\frac{5d^2m^3}{\eps}$.
Next, for each of the above ranges we construct a set of overlapping subranges, each with ratio $(dm)^2$. More precisely, for every $0\le i\le m-1$ and $\left\lfloor \log_{dm}\frac{1}{5dm}\right\rfloor \le j\le\left\lfloor \log_{dm}\frac{m}{\epsilon}\right\rfloor$, set $[\alpha^j_i,\beta^j_i]=[l_i\cdot (dm)^j,l_i\cdot (dm)^{j+2}]$ and construct a bounded range distance oracle $\D^j_i$ with the range $[\alpha^j_i,\beta^j_i]$ using \Cref{lem:DObounded-range}. \footnote{Note that initially we could use \Cref{lem:DObounded-range} directly on the range $[\beta_i,\alpha_i]$. However, the further subdivision to sub ranges of size polynomial in $m$ saves an $\log\log\frac1\eps$ factor from the query time.}

\paragraph{The query algorithm.} Given a query curve $Q\in\reals^{d\times m}$, compute in $O(md\log(md)\log d)$ time a value $\tilde{\Delta}$ such that $\dfd(P,Q)\le\tilde{\Delta}\le md\cdot\dfd(P,Q)$ (using the algorithm from \Cref{thm:approximation}).
\begin{description}
	\item[Case 1:] $\tilde{\Delta}<\frac{l_1}{2}$.\qquad Return \texttt{SmallDistance$(P,Q)$}.
	\item[Case 2:] $\tilde{\Delta}>\frac{dm^2}{\eps}\cdot l_{m-1}$.\qquad Return $\dfd(P[1],Q)$.		
	\item If both cases 1 and 2 do not hold, then we have $l_0=\frac{l_1}{2}\le\tilde{\Delta}\le\frac{dm^2}{\eps}\cdot l_{m-1}=l_m$. There must be an index $i\in[1,m-1]$ such that one of the following two cases hold.
	\item[Case 3:]  $\frac{dm^2}{\eps}l_i\le\tilde{\Delta}\le \frac{1}{5}l_{i+1}$.\qquad 
	Let $S=\{P[j] \mid \Vert P[j]-P[j+1]\Vert \le l_i \}$, and let $P'$ be the curve obtained by removing the points of $S$ from $P$.
	Return \texttt{SmallDistance$(P',Q)$}.
	\item[Case 4:] $\frac{1}{5}l_i\le\tilde{\Delta}\le \frac{dm^2}{\eps}l_i$.\qquad 
	We have $\dfd(P,Q)\in[\frac{\tilde{\Delta}}{dm},\tilde{\Delta}]$, so let $j$ be an index such that $[\frac{\tilde{\Delta}}{dm},\tilde{\Delta}]\subseteq[\alpha^j_{i},\beta^j_{i}]$, query $\D^j_{i}$ with $Q$ and return the answer.	
\end{description}

\paragraph{Correctness.} We show that the query algorithm returns a value $\Delta^*$ such that $(1-\eps)\dfd(P,Q)\le\Delta^*\le(1+\eps)\dfd(P,Q)$. 

First, we claim that the four cases in our query algorithm are disjoint, and that $\tilde{\Delta}$ falls in one of them. The reason is that if cases 1-2 do not hold, then $l_0\le\tilde{\Delta}\le l_m$, so there must exists an index $i$ such that $ l_i\le \tilde{\Delta}\le l_{i+1}$. If $\frac{m^2}{\eps}l_i\le\tilde{\Delta}\le \frac{1}{5}l_{i+1}$ then we are in case 3, and otherwise, $l_i\le\tilde{\Delta}< \frac{dm^2}{\eps}l_i$ or $\frac{1}{5}l_{i+1}<\tilde{\Delta}\le l_{i+1}$ and we are in case 4.

We proceed by case analysis.
\begin{description}
	\item[Case 1:] $\tilde{\Delta}<\frac{l_1}{2}$.\qquad Then $\dfd(P,Q)\le\tilde{\Delta}\le\lambda(P)$, and by \Cref{clm:long_edges} we return $\dfd(P,Q)$.
	\item[Case 2:] $\tilde{\Delta}>\frac{dm^2}{\eps}\cdot l_{m-1}$.\qquad Then $\dfd(P,Q)\ge\frac{1}{md}\cdot\tilde{\Delta}>\frac{m}{\eps}\cdot l_{m-1}$. Notice that $\dfd(P[1],P)\le\sum_{i=1}^{m-1}l_i\le m\cdot\l_{m-1}<\eps\cdot\dfd(P,Q)$. By the triangle inequality, we have
	\[
	\dfd(P[1],Q)\le\dfd(P,Q)+\dfd(P[1],P)<(1+\eps)\dfd(P,Q),
	\] and
	\[
	\dfd(P[1],Q)\ge\dfd(P,Q)-\dfd(P[1],P)>(1-\eps)\dfd(P,Q).
	\]
	\item[Case 3:] $\frac{dm^2}{\eps}l_i\le\tilde{\Delta}\le \frac{1}{5}l_{i+1}$.\qquad Thus 
	\begin{equation}
	\frac{m}{\eps}l_{i}\le\frac{\tilde{\Delta}}{dm}\le\dfd(P,Q)\le\tilde{\Delta}\le\frac{1}{5}l_{i+1}~.\label{eq:Case3SymmetricStatic}
	\end{equation}
	
	Denote $P'=(P[j_1],P[j_2],\dots,P[j_u])$. We argue that $\lambda(P')> \frac14 l_{i+1}$, that is, for every $s\in [1,u-1]$, $\Vert P[j_s]-P[j_{s+1}]\Vert> \frac12 l_{i+1}$.
	Fix such an index $s$. If $j_s=j_{s+1}-1$, then as $P[j_s]\notin S$, $\Vert P[j_s]-P[j_{s+1}]\Vert\ge l_{i+1}$.
	Otherwise, as $P[j_s]\notin S$ and $\{{P[j_{s}+1]},\dots,{P[j_{s+1}-1]}\}\subseteq S$, by the triangle inequality,
	\begin{align*}
	\Vert P[j_{s}]-P[j_{s+1}]\Vert & \ge\Vert P[j_{s}]-P[j_{s}+1]\Vert-\sum_{t=1}^{j_{s+1}-j_{s}-1}\Vert P[j_{s}+t]-P[j_{s}+t+1]\Vert\\
	& \ge l_{i+1}-m\cdot l_{i}\overset{(\ref{eq:Case3SymmetricStatic})}{\ge}l_{i+1}-\frac{\eps}{5}\cdot l_{i+1}>\frac{1}{2}l_{i+1}~.
	\end{align*}
	
	Notice that $\dfd(P,P')\le m\cdot l_i\overset{(\ref{eq:Case3SymmetricStatic})}{\le}\eps\cdot \dfd(P,Q)$. This is as the cost of the paired walk $\omega=\{(P[1,j_1],P'[1])\}\cup\{(P[j_{s-1}+1,j_s],P'[s])\mid 2\le s\le u\}$ is at most $m\cdot l_i$ (again using triangle inequality). Thus  $\dfd(P',Q)\le\dfd(P,Q)+\dfd(P,P')\le(1+\eps)\dfd(P,Q)$ and $\dfd(P',Q)\ge\dfd(P,Q)-\dfd(P,P')\ge(1-\eps)\dfd(P,Q)$.
	
	Finally, as $\dfd(P',Q)\le(1+\eps)\dfd(P,Q)\overset{(\ref{eq:Case3SymmetricStatic})}{\le}\frac{1+\eps}{5}l_{i+1}<\frac{1}{4}l_{i+1}<\lambda(P')$, by \Cref{clm:long_edges} we can compute $\dfd(P',Q)$, which is a $(1+\eps)$-approximation of $\dfd(P,Q)$.
	
	\item[Case 4:] $\frac{1}{5}l_i\le\tilde{\Delta}\le \frac{dm^2}{\eps}l_i$.\qquad
	Then $\frac{1}{5dm}l_{i}\le\frac{\tilde{\Delta}}{dm}\le\dfd(P,Q)\le\tilde{\Delta}\le\frac{dm^{2}}{\eps}l_{i}$.			
	Let $\left\lfloor \log_{dm}\frac{1}{5dm}\right\rfloor \le j\le\left\lfloor \log_{dm}\frac{m}{\epsilon}\right\rfloor$
	 be an index such that
	$[\frac{\tilde{\Delta}}{dm},\tilde{\Delta}]\subseteq[l_{i}\cdot(dm)^{j},l_{i}\cdot(dm)^{j+2}]$. For example the maximal $j$ such that $l_{i}\cdot(dm)^{j}\le\frac{\tilde{\Delta}}{dm}$
	will do.
	By \Cref{lem:DObounded-range}, as $\dfd(P,Q)$ is in the range, using the bounded range distance oracle $\D^j_i$ we will return an $(1+\eps)$-approximation of $\dfd(P,Q)$.
	
\end{description}
\paragraph{Running time and storage space.} Computing $\tilde{\Delta}$ takes $O(md\log(md)\log d)$ time according to \Cref{thm:approximation}.
Deciding which case is relevant for our query takes $O(m)$ time. 
Case 1 takes $O(md)$ time according to \Cref{clm:long_edges}. Case 2 can be computed in $O(md)$ times, as it is simply finding the maximum among $m$ distances, each computed in $O(d)$ time. For case 3, we can compute $P'$ sequentially in $O(md)$ time, and then compute $\dfd(P',Q)$ in $O(md)$ time using \Cref{clm:long_edges}.
For case 4, according to \Cref{lem:DObounded-range}, the query time is $O(md\log\log md)$ time. 
Thus $O(md\log(md)\log d)$  in total.

\sloppy Cases 1-3 do not require any precomputed values, while in case 4 each $\D^j_i$ uses $O(\frac{1}{\eps})^{md}\cdot\log md=O(\frac{1}{\eps})^{md}$ space and $m\log md\cdot\left(O(\frac{1}{\eps})^{md}+O(d\log m)\right)=O(\frac{1}{\eps})^{md}$
expected preprocessing time.
Thus the total storage space and running time is thus  $m\cdot\log_{dm}(\frac{5dm^{2}}{\eps})\cdot O(\frac{1}{\eps})^{md}=O(\frac{1}{\eps})^{md}\cdot\log\eps^{-1}$. 
We conclude,
\begin{theorem}\label{thm:DOsymmetric}
	Given a curve $P\in\reals^{d\times m}$ and parameter $\eps\in(0,\frac14)$, there exists a $(1+\eps)$-distance oracle with $O(\frac{1}{\eps})^{dm}\cdot\log\eps^{-1}$ storage space and  preprocessing time, and $\tilde{O}(md)$ query time.
\end{theorem}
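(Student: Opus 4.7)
The plan is to build, in preprocessing, a family of bounded range distance oracles indexed by the edge lengths of $P$, and to design a query algorithm that first obtains a crude distance estimate in order to dispatch the query into one of four scenarios: three handled directly in linear time, and the fourth answered by a precomputed oracle lookup.

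In preprocessing, I would sort the edges of $P$ to obtain $l_1 \le \dots \le l_{m-1}$ and set the sentinels $l_0 = l_1/2$ and $l_m = (dm^2/\eps)\,l_{m-1}$. For every $i \in [m-1]$ and every integer $j$ in the range $[\lfloor \log_{dm}\tfrac{1}{5dm}\rfloor,\, \lfloor \log_{dm}\tfrac{m}{\eps}\rfloor]$, I would instantiate via \Cref{lem:DObounded-range} a bounded range distance oracle $\D_i^j$ on $[l_i(dm)^j,\, l_i(dm)^{j+2}]$. The multiplicative width $(dm)^2$ of each subrange is chosen so that the internal binary search inside \Cref{lem:DObounded-range} finishes in $O(\log\log(dm)) = \tilde{O}(1)$ probes, thereby shaving the $\log\log(1/\eps)$ factor that a single wide range would incur.

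For the query, given $Q\in\R^{d\times m}$, I would first apply \Cref{thm:approximation} to obtain an $md$-approximation $\tilde\Delta$ of $\dfd(P,Q)$ in time $\tilde O(md)$, and then dispatch by comparing $\tilde\Delta$ to the sentinels. When $\tilde\Delta < l_0$, the true distance is below $\lambda(P)$, so \Cref{clm:long_edges} applied to $P,Q$ via \Cref{alg:SmallDistance} returns $\dfd(P,Q)$ exactly. When $\tilde\Delta > l_m$, the triangle inequality gives $\dfd(P,P[1]) \le m\,l_{m-1} < \eps\,\dfd(P,Q)$, hence $\max_i \|P[1]-Q[i]\|$ is a $(1+\eps)$-approximation. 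When $\tilde\Delta$ falls in a ``gap'' $[(dm^2/\eps)l_i,\, l_{i+1}/5]$, I contract every edge of $P$ of length at most $l_i$ to obtain a curve $P'$ and run \Cref{alg:SmallDistance} on $P', Q$. Otherwise $\tilde\Delta \in [l_i/5,\, (dm^2/\eps)l_i]$ for some $i$, and I choose the unique $j$ with $[\tilde\Delta/(dm),\, \tilde\Delta]\subseteq[l_i(dm)^j,\, l_i(dm)^{j+2}]$ and query $\D_i^j$.

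For correctness, I would first check that the four cases partition the range of $\tilde\Delta$: if the first two fail, then some adjacent pair $l_i,l_{i+1}$ brackets $\tilde\Delta$, and comparing $\tilde\Delta$ against $(dm^2/\eps)l_i$ and $l_{i+1}/5$ shows that either the gap condition of Case~3 holds or Case~4 applies at index $i$ or $i+1$. The technical heart of the argument is Case~3, where I must establish two estimates: (i) $\dfd(P,P') \le m\,l_i \le \eps\,\dfd(P,Q)$, by exhibiting a one-to-many paired walk that groups each maximal run of contracted points with its surviving successor; and (ii) $\lambda(P') > l_{i+1}/4$, which follows from the triangle inequality since any two consecutive surviving points are separated by at least $l_{i+1} - m\,l_i \ge l_{i+1}/2$ once $m\,l_i \le \eps\,l_{i+1}/5$. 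Together these yield $\dfd(P',Q) \le (1+\eps)\dfd(P,Q) < \lambda(P')$, so \Cref{clm:long_edges} returns $\dfd(P',Q)$ exactly. I expect Case~3 to be the main obstacle: one must tune the gap ratio $(dm^2/\eps)$ versus $1/5$ carefully so that $m\,l_i$ is simultaneously negligible compared to both $\dfd(P,Q)$ and $l_{i+1}$, and I would double-check the constants. The space, preprocessing, and query bounds then follow directly: each $\D_i^j$ costs $O(1/\eps)^{md}$ by \Cref{lem:DObounded-range}, there are $O(m\log_{dm}(m/\eps))$ of them, yielding the stated $O(1/\eps)^{md}\log(1/\eps)$ total; the query time is $\tilde O(md)$, dominated by \Cref{thm:approximation} and, in Case~4, the single oracle lookup.
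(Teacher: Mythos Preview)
Your proposal is correct and follows essentially the same approach as the paper: the four-case dispatch based on the crude estimate $\tilde\Delta$, the edge-length sentinels $l_0,\dots,l_m$, the contraction argument in Case~3 with the two estimates $\dfd(P,P')\le m\,l_i$ and $\lambda(P')>l_{i+1}/4$, and the subdivision of each range $[\alpha_i,\beta_i]$ into overlapping subranges of ratio $(dm)^2$ to avoid the $\log\log(1/\eps)$ factor are all exactly as in the paper. Even the constants ($1/5$, $dm^2/\eps$) match.
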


\begin{remark}
	Interestingly, \cite{FFK20} constructed a near neighbor data structure with query time $O(md)$. Combined with the standard reduction \cite{HIM12}, they obtain a nearest neighbor data structure with query time $O(md\log n)$ (given that the number of input curves is $n$).   
	The case of distance oracle is similar with $n=1$. However, we cannot use the NNS data structure as a black box, as it is actually returns a neighbor and not a distance. Obtaining a truly linear query time ($O(md)$) in \Cref{thm:DOsymmetric} is an intriguing open question.
\end{remark}

\subsection{$(1+\eps)$-distance oracle for any $k$}\label{subs:static_asymmetric}
Let $\Pi$ be a $(k,1+\eps)$-simplification of $P$, which can be computed in $\tilde{O}(\frac{md}{\eps^{4.5}})$ time using \Cref{thm:SimplificationHighDim}.~\footnote{If $d\le 4$, then the running time will be $\tilde{O}(\frac m\eps)$. In any case, the contribution of this step to the preprocessing time is insignificant.}
In addition we obtain from \Cref{thm:SimplificationHighDim} an estimate $L$ such that $\dfd(P,\Pi)\le L\le (1+\eps)\dfd(P,\Pi^*)$, where $\Pi^*$ is the optimal $k$-simplification.

We construct a symmetric distance oracle $\mathcal{O}_\Pi$ for $\Pi$ using \Cref{thm:DOsymmetric}.
In addition, for every index $i\in[0,\left\lceil\log\frac1\eps\right\rceil]$ we construct an asymmetric bounded range distance oracle $\mathcal{O}_i$ for $P$, with the range $[2^{i-1}\cdot L,2^{i+3} \cdot L]$ using \Cref{lem:DObounded-range}. \footnote{Actually, as the aspect ratio is constant, we could equivalently used here \Cref{lem:DOdecision} with parameters $2^{i+3}\cdot L$ and $\frac{\eps}{16}$ instead of \Cref{lem:DObounded-range}.}

\paragraph{The query algorithm.}
Given a query curve $Q\in\reals^{d\times k}$, we query $\mathcal{O}_\Pi$ and get a value $\Delta$ such that $\dfd(Q,\Pi)\le\Delta\le(1+\eps)\dfd(Q,\Pi)$.
\begin{itemize}
	\item If $\Delta\ge\frac{1}{\eps}\cdot L$, return $(1+\eps)\Delta$.
	\item Else, if $\Delta\le 7L$, return the answer of $\mathcal{O}_0$ for $Q$.
	\item Else, let $i$ be the maximal index such that $2^i\cdot L\le \frac\Delta2$, and return the answer of $\mathcal{O}_i$ for $Q$.
\end{itemize}

\paragraph{Correctness.} We show that the query algorithm always return a value $\tilde{\Delta}$ such that $\dfd(P,Q)\le \tilde{\Delta}\le (1+6\eps)\dfd(P,Q)$. Afterwards, the $\eps$ parameter can be adjusted accordingly. By triangle inequality it holds that
\[
\frac{\Delta}{1+\epsilon}-L\le\dfd(Q,\Pi)-\dfd(P,\Pi)\le\dfd(P,Q)\le\dfd(Q,\Pi)+\dfd(P,\Pi)\le\Delta+L~.
\]
\begin{itemize}
	\item If $\Delta\ge\frac{1}{\eps}\cdot L$, then $\dfd(P,Q)\le(1+\eps)\Delta$. Furthermore, it holds that $\dfd(P,Q) \ge\frac{1}{1+\eps}\Delta-\eps\Delta\ge(1-2\eps)\Delta$, which implies $(1+\eps)\Delta\le\frac{1+\eps}{1-2\eps}\cdot\dfd(P,Q)<(1+6\eps)\cdot\dfd(P,Q)$.		
	
	\item Else, if $\Delta\le 7L$ then $\dfd(P,Q)\le 8L$. Since $Q$ is a $k$ point curve, we have that $\dfd(P,Q)\ge\dfd(P,\Pi^{*})\ge\frac{L}{1+\eps}>\frac{L}{2}$
	(here $\Pi^{*}$ is the optimal $k$-simplification).
	Hence $\dfd(P,Q)\in [\frac{L}{2},8L]$, and $\mathcal{O}_0$ returns a $(1+\eps)$-approximation for $\dfd(P,Q)$. 
	\item Else, $L<\frac17\Delta$, hence $\frac{\Delta}{2}\le\frac{\Delta}{1+\epsilon}-L\le\dfd(P,Q)\le\Delta+L\le2\Delta$. Recall that $i$ is chosen to be the maximal index such that $2^{i}\cdot L\le \frac\Delta2$. By maximality of $i$, we have $2\Delta\le2^{i+3}\cdot L$. Thus $\dfd(P,Q)\in\left[\frac\Delta2,2\Delta\right]\subseteq\left[2^i\cdot L,2^{i+3}\cdot L\right]$.
	As $\Delta<\frac{1}{\eps}\cdot L$ it follows that $i\le \log\frac{1}{2\eps}$. In particular we constructed an asymmetric bounded range distance oracle $\mathcal{O}_i$ for $P$, with the range $[2^{i-1}\cdot L,2^{i+3} \cdot L]$. 
	It follows that $\mathcal{O}_i$ returns a $(1+\eps)$-approximation for $\dfd(P,Q)$. 
\end{itemize}

To bound the space note that we constructed a single distance oracle using \Cref{thm:DOsymmetric} and $\log\frac1\eps$ distance oracles using \Cref{lem:DObounded-range} (each with constant ratio). Thus $O(\frac{1}{\eps})^{dk}\cdot\log\eps^{-1}$ in total.
Similarly, the expected preprocessing time is bounded by $m\log\frac{1}{\eps}\cdot\left(O(\frac{1}{\eps})^{kd}+O(d\log m)\right)$.
The query time is bounded by $O(kd\log(kd)\log d)$.

\DOasymmetric*

	\section{Curve simplification in the stream}\label{sec:streaming_simplification}
	Given a curve $P$ as a stream, our goal in this section is to maintain a $(k,1+\eps)$-simplification of $P$, with space bound that depend only on $k$ and $\eps$.
	A main ingredient is to maintain a $\delta$-simplification.	
	For the static model, Bereg \etal \cite{BJWYZ08} presented an algorithm that for fixed dimension that computed an optimal $\delta$-simplification in $O(m\log m)$ time. This algorithm was generalized to arbitrary dimension $d$ by the authors and Katz \cite{FFK20} (see \Cref{lem:optrsimplification}). 
	The algorithm is greedy: it finds the largest index $i$ such that $P[1,i]$ can be enclosed by a ball of radius $(1+\eps)\delta$, and then recurse for $P[i+1,m]$. The result is a sequence of balls, each of radius at most $(1+\eps)\delta$. The sequence of centers of the balls is a simplification of $P$ with distance $(1+\eps)\delta$. If the number of balls is larger than $k$, then $\dfd(P,\Pi)>\delta$ for every curve $\Pi$ with at most $k$ points. 
	The greedy algorithm essentially constructs a one-to-many paired walk along the resulted simplification and $P$.
	
	Let $\gMEB$ denote a streaming algorithm that maintains a $\gamma$-approximation of the minimum enclosing ball of a set of points. That is, in each point of time the algorithms has a center $\gMEB.c\in\R^d$ and a radius $\gMEB.r\in\reals$, such that all the points observed by this time are contained in $B_{\R^d}\left(\gMEB.c,\gMEB.r\right)$, and the minimum enclosing ball of the observed set of points has radius at least $\gMEB.r/\gamma$.
	In \Cref{subsec:MEB} we discuss several streaming \MEB~algorithms, all for $\gamma\in(1,2]$. For now, we will simply assume that we have such an algorithm as a black box.

	In \Cref{alg:GreedySimp} we describe a key sub-procedure of our algorithm called \texttt{GreedyStreamSimp}, which finds a greedy simplification while using $\gMEB$ as a black box. 
	\texttt{GreedyStreamSimp} receives as input a parameter $\delta$ and a curve $P$ in a streaming fashion, and returns a simplification $\Pi$ computed in a greedy manner.
	Specifically, it looks for the longest prefix of $P$ such that $\gMEB.r\le \delta$. That is the radius returned by $\gMEB$ is at most $\delta$
	Then, it continues recursively on the remaining points. Note that there is no bound on the size of the simplification $\Pi$ which  can have any size between $1$ and $|P|$.
	Nevertheless, we obtain a simplification $\Pi$ at distance at most $\delta$ from $P$, such that every curve at distance $\delta/\gamma$ from $P$ has length at least $\Pi$.
		
	\begin{algorithm}[h]
		\caption{\texttt{GreedyStreamSimp$(P,\delta)$}}\label{alg:GreedySimp}
		\DontPrintSemicolon
		\SetKwInOut{Input}{input}\SetKwInOut{Output}{output}
		\Input{A curve $P$, parameter $\delta>0$, a black box algorithm $\gMEB$}
		\Output{Simplification $\Pi$ of $P$, and a $\gMEB$ structure for the suffix of $P$ matched to the last point of $\Pi$.}
		\BlankLine
		Initialize $\gMEB$ with $\{P[1]\}$\;
		Set $\Pi\leftarrow \gMEB.c$\;
		\For{$i=2$ to $|P|$:}{
			Add $P[i]$ to $\gMEB$\;
			\If{$\gMEB.r\le\delta$}{
				Change the last point of $\Pi$ to $\gMEB(c)$\;
			}
			\Else{
				Initialize  $\gMEB$ with $P[i]$\;
				Set $\Pi\leftarrow\Pi\circ \gMEB.c$\;
			}
		}
		\Return $(\Pi,\gMEB)$\;
	\end{algorithm}

\begin{claim}\label{clm:greedy_iteration}
	Let $\Pi$ be a simplification computed by \emph{\texttt{GreedyStreamSimp}} with parameters $P,\delta$. Then $\dfd(P,\Pi)\le\delta$, and for any other simplification $\Pi'$ of $P$, if $\dfd(P,\Pi')\le\frac\delta\gamma$ then $|\Pi|\le |\Pi'|$.
\end{claim}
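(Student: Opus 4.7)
The plan is to verify the distance bound directly from the construction of \texttt{GreedyStreamSimp}, and then establish the length bound by a greedy-exchange induction on the breakpoints of the one-to-many paired walks, taking care to use the $\gamma$-approximation guarantee only via the \emph{true} MEB radius of prefixes.

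For the first part, I would trace the algorithm to observe that it implicitly partitions $P$ into contiguous subcurves: there exist indices $0=a_0<a_1<\dots<a_{|\Pi|}=|P|$ such that vertex $\Pi[j]$ is the center of a ball of radius at most $\delta$ enclosing $P[a_{j-1}+1,a_j]$ (the ball is closed and a new one opened precisely when $\gMEB.r$ first exceeds $\delta$). The one-to-many paired walk $\{(\Pi[j],P[a_{j-1}+1,a_j])\}_{j=1}^{|\Pi|}$ then has cost $\max_{j}\max_{i\in[a_{j-1}+1,a_j]}\|\Pi[j]-P[i]\|\le\delta$, so $\dfd(P,\Pi)\le\delta$.

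For the second part, I would assume without loss of generality that the walk witnessing $\dfd(P,\Pi')\le\delta/\gamma$ is one-to-many; otherwise, as noted in the preliminaries, one can drop vertices from $\Pi'$ to obtain a shorter simplification at the same distance, and it suffices to prove $|\Pi|\le|\Pi''|\le|\Pi'|$ for that shorter $\Pi''$. Let $0=i_0<i_1<\dots<i_{|\Pi'|}=|P|$ be the breakpoints with $\Pi'[j]$ matched to $P[i_{j-1}+1,i_j]$. The main claim is $a_j\ge i_j$ for all $j$, by induction. The base case is $a_0=0=i_0$. For the inductive step, fix $j$ and assume $a_{j-1}\ge i_{j-1}$. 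For any index $i$ with $a_{j-1}+1\le i\le i_j$, every point of $P[a_{j-1}+1,i]$ lies in $P[i_{j-1}+1,i_j]$ and is therefore within distance $\delta/\gamma$ from $\Pi'[j]$; hence the \emph{true} minimum enclosing ball of $P[a_{j-1}+1,i]$ has radius at most $\delta/\gamma$. By the $\gamma$-approximation guarantee of $\gMEB$, the approximate radius on this set satisfies $\gMEB.r\le\gamma\cdot(\delta/\gamma)=\delta$, so the greedy does not close its $j$-th ball before having absorbed $P[i_j]$. This yields $a_j\ge i_j$. Applying this with $j=|\Pi'|$ gives $a_{|\Pi'|}\ge i_{|\Pi'|}=|P|$, so the algorithm has terminated after at most $|\Pi'|$ balls, i.e.\ $|\Pi|\le|\Pi'|$.

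The only real subtlety is that $\gMEB.r$ is not in general guaranteed to be monotone in the input stream, so one cannot argue "the radius was fine on the full set, hence fine on every prefix" purely at the level of the approximate radius. The argument above circumvents this by invoking monotonicity of the \emph{true} MEB radius of prefixes and then applying the $\gamma$-approximation bound once, to each prefix $P[a_{j-1}+1,i]$ separately. Everything else is standard greedy bookkeeping.
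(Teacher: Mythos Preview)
Your proof is correct and follows the same greedy-stays-ahead strategy as the paper, using the $\gamma$-approximation guarantee of $\gMEB$ in the same essential way (true MEB radius bounded by $\delta/\gamma$ forces $\gMEB.r\le\delta$; equivalently, in the paper's contrapositive form, $\gMEB.r>\delta$ forces true MEB radius $>\delta/\gamma$). The only difference is organizational: the paper inducts on the point index $i$ and carries a two-part invariant (either $|\Pi''_i|>|\Pi_i|$, or $|\Pi''_i|=|\Pi_i|$ and the current greedy block is no longer than $\Pi''$'s), whereas you induct directly on the ball index $j$ with the single invariant $a_j\ge i_j$. Your formulation is somewhat cleaner and makes the ``greedy absorbs at least as much as $\Pi'$ per vertex'' intuition more transparent; the paper's version tracks the same information but at a finer granularity.
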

\begin{proof}
	First notice that $\dfd(P,\Pi)\le\delta$ is straightforward, because \texttt{GreedyStreamSimp} constructs a one-to-many paired walk $\omega$ along $\Pi$ and $P$ such that for each pair $(\Pi[i],\P_i)\in\omega$, $\P_i$ is contained in a ball of radius at most $\delta$.
	
	Let $\Pi'$ be a simplification of $P$ with $\dfd(P,\Pi')\le\frac\delta\gamma$, and consider an optimal walk $\omega$ along $\Pi'$ and $P$. If $\omega$ is not one-to-many, we remove vertices from $\Pi'$ until we get a simplification $\Pi''$ with  $\dfd(P,\Pi'')\le\frac\delta\gamma$ and an optimal one-to-many walk.
	Denote by $\Pi''_i$ the subcurve of $\Pi''$ that $\omega$ matches to $P[1,i]$, and by $A''_i$ the subsequence of points from $P[1,i]$ matched to the last point of $\Pi''_i$.
	In addition, denote by $\Pi_i$ and $\gMEB_i$ the state of these objects right after we finish processing $P[i]$, and let $A_i$ denote the subset of points from $P$ that where inserted to $\gMEB_i$.
	
	We show by induction on the iteration number, $i$, that either $|\Pi''_i|> |\Pi_i|$, or $|\Pi''_i|=|\Pi_i|$ and $|A''_i|\ge|A_i|$.
	For $i=1$ the claim is trivial. We assume that the claim is true for iteration $i\ge 1$, and prove that it also holds in iteration $i+1$.
	
	If in iteration $i$ we had $|\Pi''_i|> |\Pi_i|$, then in iteration $i+1$ either $|\Pi_{i+1}|=|\Pi_i|$ or $|\Pi_{i+1}|=|\Pi_i|+1$ and $|A_{i+1}|=1$, so the claim holds.
	
	Thus, assume that in iteration $i$ we had $|\Pi''_i|=|\Pi_i|$, and $|A''_i|\ge|A_i|$.
	If after adding $P[i+1]$ to $A_i$ we have $\gMEB.r>\delta$, then the minimum enclosing ball of $A_i\cup P[i+1]$ has radius larger than $\frac{\delta}{\gamma}$. This means that the minimum enclosing ball of $A''_i\cup P[i+1]$ also has radius larger than $\frac{\delta}{\gamma}$, and thus the length of both $\Pi''_i$ and $\Pi_i$ increase by 1, so $|\Pi''_{i+1}|=|\Pi_{i+1}|$, and $|A''_{i+1}|\ge|A_{i+1}|=1$.
	
	Else, if $\gMEB.r\le \delta$, then the minimum enclosing ball of $A_i\cup P[i+1]$ has radius at most $\delta$, and $A_{i+1}= A_i\cup P[i+1]$. If $|\Pi''_{i+1}|>|\Pi''_i|$ then we are done because $|\Pi''_i|=|\Pi_i|=|\Pi_{i+1}|$. Else, if $|\Pi''_{i+1}|=|\Pi''_i|$ then $P[i+1]$ is added to both $A''_i$ and $A_i$, and we get $|\Pi''_{i+1}|=|\Pi_{i+1}|$ and $|A''_{i+1}|\ge|A_{i+1}|$.
\end{proof}

\begin{algorithm}[h]
	\caption{\texttt{LeapingStreamSimp$(k,\gMEB,\init,\inc)$}}\label{alg:StreamSimp}
	\DontPrintSemicolon
	\SetKwInOut{Input}{input}\SetKwInOut{Output}{output}
	\Input{A curve $P$ in a streaming fashion, parameters $k\in\N$ and $\init\ge 1,\inc\ge 2$, a black box algorithm $\gMEB$}
	\Output{Simplification $\Pi$ of $P$ with at most $k$ points}
	\BlankLine
	Read $P[1,k+1]$ \tcp*{Ignore one of any two equal consecutive points}
	Set $\delta\leftarrow\init\cdot\frac12\min_{i\in[k]}\|P[i]-P[i+1]\|$\;
	Set $(\Pi,\gMEB)\leftarrow \texttt{GreedyStreamSimp}(P,\delta)$\;
	\For{$i\ge k+2$ to $m$:}{
		Read $P[i]$ and add it to $\gMEB$\;
		\If{$\gMEB.r\le\delta$}{
			Change the last point in $\Pi$ to $\gMEB(c)$\;
		}
		\Else{
			Initialize $\gMEB$ with $P[i]$\;
			Set $\Pi\leftarrow\Pi\circ \gMEB.c$\;
			\While{$|\Pi|=k+1$}{
				$\delta\leftarrow\delta\cdot\inc$\;
				Set $(\Pi,\gMEB)\leftarrow \texttt{GreedyStreamSimp}(\Pi,\delta)$\;
			}
		}
	}
	\Return $\Pi$\;
\end{algorithm}

	In \Cref{alg:StreamSimp} we present our main procedure for the streaming simplification algorithm called \texttt{LeapingStreamSimp}. Essentially, this algorithm tries to imitate the \texttt{GreedyStreamSimp} algorithm. Indeed, if we would know in advance the distance between $P$ to an optimal simplification $\Pi^*$ of length $k$, then we could find such a simplification by applying 
	\texttt{GreedyStreamSimp} with parameter $\gamma\cdot\delta^*$.
	However, as $\dfd(P,\Pi^*)$ is unknown in advance, \texttt{LeapingStreamSimp} tries to guess it.
	
	In addition to $k$ and $\gMEB$, \texttt{LeapingStreamSimp} also gets as input the parameters $\init\ge 1$ and $\inc\ge 2$. $\init$ is used for the initial guess of $\dfd(P,\Pi^*)$, while $\inc$ is used to update the current guess, when the previous guess is turned out to be too small.
	In more detail, \texttt{LeapingStreamSimp} starts by reading the first $k+1$ points (as up to this point our simplification is simply the observed curve). At this stage, the optimal simplification of length $k$ is at distance $\frac12\min_{i\in[k]}\|P[i]-P[i+1]\|$.
	The algorithm updates its current guess $\delta$ of $\dfd(P,\Pi^*)$ to $\init\cdot\frac12\min_{i\in[k]}\|P[i]-P[i+1]\|$, and execute \texttt{GreedyStreamSimp} on the $k+1$ observed points with parameter $\delta$. 
	Now the \texttt{LeapingStreamSimp} algorithm simply simulates \texttt{GreedyStreamSimp} with parameter $\delta$ as long as the simplification contains at most $k$ points. Once this condition is violated (that is, our guess turned out to be too small), the guess $\delta$ is multiplied by $\inc$. Now we compute a greedy simplification of the current simplification $\Pi$ using the new parameter $\delta$. This process is continued until we obtain a simplification of length at most $k$. At this point, we simply turn back to the previous simulation of the greedy simplification.

	As a result, eventually \texttt{LeapingStreamSimp} will hold an estimate $\delta$ and a simplification $\Pi$, such that $\Pi$ is an actual simplification constructed by the \texttt{GreedyStreamSimp} with parameter $\delta$. Alas, $\Pi$ was not constructed with respect to the observed curve $P$, but rather with respect to some other curve $P'$, where $\dfd(P,P')<\frac{2}{\inc}\delta$ (\Cref{clm:one_iteration}). Furthermore, the estimate $\delta$ will be bounded by the distance to the optimal simplification $\delta^*$, times $\approx\gamma\cdot\inc$ (\Cref{lem:streaming_dominates_offline}).

	In the analysis of the algorithm, by $\Pi_i$ and $\delta_i$ we refer to the state of the algorithm right after we finish processing $P[i]$.

	\begin{claim}\label{clm:one_iteration}
		After reading $m$ points, $\dfd(\Pi_m,P[1,m])\le(1+\frac{2}{\inc})\delta_m$. Moreover, there exists a curve $P'$ such that $\Pi_m$ is the simplification returned by \emph{\texttt{GreedyStreamSimp}} for the curve $P'$ and parameter $\delta_m$, and $\dfd(P',P[1,m])\le \frac{2\delta_m}{\inc}$.
	\end{claim}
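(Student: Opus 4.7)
The plan is to induct on $m$ and \emph{explicitly construct} the virtual curve $P'=P'_m$ as a function of the algorithm's state after reading $P[m]$. We will maintain the invariant that $\Pi_m$ is the output of \texttt{GreedyStreamSimp} on $P'_m$ with parameter $\delta_m$, so by \Cref{clm:greedy_iteration} one immediately has $\dfd(\Pi_m,P'_m)\le \delta_m$; the first assertion $\dfd(\Pi_m,P[1,m])\le(1+\nicefrac{2}{\inc})\delta_m$ then follows from the second assertion and a single triangle inequality. It therefore suffices to establish the invariant $\dfd(P'_m,P[1,m])\le \nicefrac{2\delta_m}{\inc}$.

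\textbf{Base case and no-leap step.} The base case $m=k+1$ takes $P'_{k+1}=P[1,k+1]$, so $\dfd(P'_{k+1},P[1,k+1])=0$. For the inductive step, first consider the case where reading $P[m]$ does not trigger the while loop. Whether the \emph{if}-branch (refining the last point of $\Pi$) or the \emph{else}-branch with $|\Pi|\le k$ (appending a fresh center) fires, the update coincides exactly with what \texttt{GreedyStreamSimp} would do on $P'_{m-1}\circ P[m]$ with the unchanged parameter $\delta_m=\delta_{m-1}$. Setting $P'_m:=P'_{m-1}\circ P[m]$ and using the fact that appending the same suffix to both curves cannot increase the Fréchet distance gives $\dfd(P'_m,P[1,m])\le \dfd(P'_{m-1},P[1,m-1])\le \nicefrac{2\delta_{m-1}}{\inc}=\nicefrac{2\delta_m}{\inc}$.

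\textbf{Leap step.} The substantive case is when the while loop fires $t\ge 1$ times while processing $P[m]$. Denote by $\tilde\Pi^{(0)}$ the simplification right after the \emph{else}-branch appends the new center (so $|\tilde\Pi^{(0)}|=k+1$), and by $\tilde\Pi^{(i)}$ for $i\ge 1$ the output of the $i$-th while iteration. By construction $\delta_m=\delta_{m-1}\inc^{t}$ and $\Pi_m=\tilde\Pi^{(t)}$. We \emph{define} the new virtual curve to be $P'_m:=\tilde\Pi^{(t-1)}$; then $\Pi_m$ is precisely the \texttt{GreedyStreamSimp} output on $P'_m$ with parameter $\delta_m$, as required. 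Applying \Cref{clm:greedy_iteration} at the $i$-th leap yields $\dfd(\tilde\Pi^{(i)},\tilde\Pi^{(i-1)})\le \delta_{m-1}\inc^{i}$. Moreover, by the no-leap analysis applied to $\tilde\Pi^{(0)}$ together with the inductive hypothesis at step $m-1$ and one triangle inequality, $\dfd(\tilde\Pi^{(0)},P[1,m])\le \delta_{m-1}+\dfd(P'_{m-1},P[1,m-1])\le \delta_{m-1}\bigl(1+\nicefrac{2}{\inc}\bigr)$.

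\textbf{Closing the bound and main obstacle.} Telescoping the triangle inequality along $\tilde\Pi^{(t-1)}\to\cdots\to\tilde\Pi^{(0)}\to P[1,m]$ gives
\[
\dfd(P'_m,P[1,m])\le \sum_{i=1}^{t-1}\delta_{m-1}\inc^{i}+\delta_{m-1}\bigl(1+\tfrac{2}{\inc}\bigr),
\]
and an elementary calculation shows this is bounded by $2\delta_{m-1}\inc^{t-1}=\nicefrac{2\delta_m}{\inc}$ whenever $\inc\ge 2$ (with equality throughout when $\inc=2$, which is what dictates the precise factor of $2/\inc$ in the statement). The main obstacle is choosing the correct virtual curve: taking $\tilde\Pi^{(t)}$ destroys the \Cref{clm:greedy_iteration} hook, while $\tilde\Pi^{(0)}$ would force us to absorb the cost of \emph{every} leap into the final bound; the choice $\tilde\Pi^{(t-1)}$ is exactly what makes the geometric sum telescope to match the target, and nailing down this bookkeeping across cascading iterations of the while loop is the only delicate step.
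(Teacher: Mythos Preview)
Your argument is correct and follows the same inductive scheme as the paper: build a virtual curve $P'_m$, handle the no-leap step by appending $P[m]$, and treat the leap step separately. The one noteworthy difference is the choice of $P'_m$ in the leap case. The paper takes $P''=\Pi_{m-1}\circ P[m]$ (your $\tilde\Pi^{(0)}$) and simply asserts that $\Pi_m$ is the output of \texttt{GreedyStreamSimp} on $P''$ with parameter $\delta_m$; you instead take $P'_m=\tilde\Pi^{(t-1)}$ and telescope through the intermediate while-iterations. Your choice is in fact the more careful one: when the while loop fires $t\ge 2$ times, the algorithm literally computes $\Pi_m$ as \texttt{GreedyStreamSimp}$(\tilde\Pi^{(t-1)},\delta_m)$, not as a single call on $\tilde\Pi^{(0)}$ at the final scale, so the paper's last sentence in that case glosses over a step. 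Both routes reach the same bound $\tfrac{2\delta_m}{\inc}$ (the paper via $\delta_{m-1}\le\delta_m/\inc$ and $(1+\tfrac{2}{\inc})\le 2$, you via the geometric sum $\sum_{i=1}^{t-1}\inc^{i}+(1+\tfrac{2}{\inc})\le 2\,\inc^{t-1}$ for $\inc\ge 2$), and your explicit bookkeeping of the cascading leaps cleanly closes that small gap.
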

	\begin{proof}
		The first part of the claim is a corollary that follows from the second part. Indeed, by \Cref{clm:greedy_iteration} we have $\dfd(\Pi_m,P')\le \delta_m$, and by the triangle inequality,
		\[
		\dfd(\Pi_m,P[1,m])\le \dfd(\Pi_m,P')+\dfd(P',P[1,m])\le (1+\frac{2}{\inc})\delta_m.
		\]
		
		We prove the second part by induction on $m$. For $m=k+1$ the claim is clearly true for $P'=P[1,k+1]$.
		Assume that the claim is true for $m-1\ge k+1$, so by the induction hypothesis there exists a curve $P'$ such that $\dfd(P',P[1,m-1])\le \frac{2\delta_{m-1}}{\inc}$, and $\Pi_{m-1}$ is the simplification returned by \texttt{GreedyStreamSimp} for the curve $P'$ and parameter $\delta_{m-1}$.
		
		If there is no leap step, then $\delta_m=\delta_{m-1}$. Let $P''=P'\circ P[m]$, then $\dfd(P'',P[1,m])\le \frac{2\delta_{m-1}}{\inc}=\frac{2\delta_m}{\inc}$.
		Since in this case \texttt{LeapingStreamSimp} imitates the steps of \texttt{GreedyStreamSimp}, we get that $\Pi_m$ will be exactly the simplification returned by \texttt{GreedyStreamSimp} for the curve $P''$ and parameter $\delta_m$.
		
		Else, if a leap step is taken, then $\delta_m=\delta_{m-1}\cdot \inc^h$ for some $h\ge 1$, so $\delta_{m-1}=\frac{\delta_m}{\inc^h}\le \frac{\delta_m}{\inc}$. 
		By the first part of the claim we have $\dfd(\Pi_{m-1},P[1,m-1])\le (1+\frac{2}{\inc})\delta_{m-1}$.
		
		Let $P''=\Pi_{m-1}\circ P[m]$, then for $\inc\ge 2$ \[
		\dfd(P'',P[1,m])\le\dfd(\Pi_{m-1},P[1,m-1])\le(1+\frac{2}{\inc})\delta_{m-1}\le(1+\frac{2}{\inc})\frac{\delta_m}{\inc}\le\frac{2\delta_m}{\inc} . \]
		The claim follows as the algorithms sets $\Pi_m$ to be the simplification returned by \texttt{GreedyStreamSimp} on the curve $P''$ and parameter $\delta_m$.
	\end{proof}

	Consider an optimal $k$-simplification $\Pi^*_m$ of $P[1,m]$, and denote $\delta^*_m=\dfd(\Pi^*_m,P[1,m])$.
	We will assume that $\inc>2\gamma$. 
	Set $\eta=\frac{\gamma\inc}{\inc-2\gamma}$.
	\begin{lemma}\label{lem:streaming_dominates_offline}
		Let $h$ be the minimal such that $\eta\cdot \delta_m^*\le\delta_{k+1}\cdot\inc^h$. Then $\delta_m\le \delta_{k+1}\cdot\inc^h$.
	\end{lemma}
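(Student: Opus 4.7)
The plan is to argue by contradiction. Write $\tilde{\delta} := \delta_{k+1}\cdot\inc^h$. Since $\delta$ only grows by multiplicative factors of $\inc$ starting from $\delta_{k+1}$, it always lies in $\{\delta_{k+1}\cdot\inc^i\}_{i\ge 0}$, so if $\delta_m > \tilde{\delta}$ then there is a first index $m'$ with $\delta_{m'} > \tilde{\delta}$; necessarily $\delta_{m'-1}\le\tilde{\delta}$ and $\delta_{m'}\ge\tilde{\delta}\cdot\inc$, and the while loop during the processing of $P[m']$ must execute a final leap that takes $\delta$ from some value $\delta_{m'}/\inc\ge\tilde{\delta}$ up to $\delta_{m'}$. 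Just before this final leap the current simplification has length $k+1$, and the goal is to derive a contradiction from this.

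To analyse that pre-leap $\Pi$, I apply \Cref{clm:one_iteration} at time $m'-1$ to obtain a curve $P'_{m'-1}$ such that $\Pi_{m'-1}$ is the greedy simplification of $P'_{m'-1}$ at parameter $\delta_{m'-1}$ with $\dfd(P'_{m'-1},P[1,m'-1])\le 2\delta_{m'-1}/\inc$. Setting $P^{*}:=P'_{m'-1}\circ P[m']$, appending the single point $P[m']$ preserves the distance bound and yields $\dfd(P^{*},P[1,m'])\le 2\delta_{m'-1}/\inc\le 2\tilde{\delta}/\inc$. A short telescoping through the successive while iterations, identical in spirit to the case analysis inside the proof of \Cref{clm:one_iteration}, shows that the pre-leap $\Pi$ coincides with the greedy simplification of the single curve $P^{*}$ at the single parameter $\delta_{m'}/\inc$, and has length $k+1$ by assumption.

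I then invoke \Cref{clm:greedy_iteration}: it forces every length-$(\le k)$ simplification of $P^{*}$ to be at distance strictly more than $(\delta_{m'}/\inc)/\gamma \ge \tilde{\delta}/\gamma$ from $P^{*}$. To contradict this, I produce such a short simplification from the optimum. Restricting an optimal $k$-simplification $\Pi^{*}_{m}$ of $P[1,m]$ (of cost $\delta^{*}_{m}$) together with its one-to-many walk to the prefix $P[1,m']$ yields a curve $\Pi^{*}_{1}$ with at most $k$ vertices and $\dfd(\Pi^{*}_{1},P[1,m'])\le \delta^{*}_{m}$. The triangle inequality then gives
\[
\dfd(\Pi^{*}_{1}, P^{*}) \le \delta^{*}_{m} + \frac{2\tilde{\delta}}{\inc},
\]
and the minimality of $h$ combined with $\eta=\gamma\inc/(\inc-2\gamma)$ yields $\delta^{*}_{m}\le \tilde{\delta}(\inc-2\gamma)/(\gamma\inc)$; substituting reduces the right-hand side to exactly $\tilde{\delta}/\gamma$, the desired contradiction.

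The main subtlety lies in the second paragraph: the algorithm repeatedly greedifies its current $\Pi$ at increasing parameters rather than greedifying $P^{*}$ directly, so identifying the pre-leap $\Pi$ with a single greedy simplification of $P^{*}$ at parameter $\delta_{m'}/\inc$ uses the same ``nested greedy equals single greedy'' structure that is already implicit in the proof of \Cref{clm:one_iteration}. It is discharged by a straightforward induction on the number of leap iterations performed during the processing of $P[m']$; everything else reduces to a triangle inequality and unpacking the definition of $h$ and $\eta$.
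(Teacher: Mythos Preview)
Your proof is essentially the same as the paper's: contradict by taking the first index where $\delta$ exceeds $\tilde\delta=\delta_{k+1}\cdot\inc^h$, invoke \Cref{clm:one_iteration} at the previous index to get a nearby curve $P^{*}$ (the paper's $P'\circ P[i]$), then use \Cref{clm:greedy_iteration} together with a length-$k$ comparison curve and the identity $\eta=\gamma\inc/(\inc-2\gamma)$ to reach a contradiction. The only cosmetic differences are that you look at the \emph{last} leap (parameter $\delta_{m'}/\inc\ge\tilde\delta$) rather than the moment $\delta$ first equals $\tilde\delta$, and you build the comparison curve by restricting $\Pi^{*}_{m}$ whereas the paper takes an optimal $k$-simplification of $P[1,i]$; both choices yield the same bound $\delta^{*}_{m}+2\tilde\delta/\inc\le\tilde\delta/\gamma$.

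One caution on your ``nested greedy equals single greedy'' step: the literal assertion that the pre-leap $\Pi$ equals \texttt{GreedyStreamSimp}$(P^{*},\delta_{m'}/\inc)$ for the \emph{fixed} curve $P^{*}=P'_{m'-1}\circ P[m']$ is not what the induction actually gives you. What a repeat of the \Cref{clm:one_iteration} argument across the while iterations yields is that the pre-leap $\Pi$ equals \texttt{GreedyStreamSimp}$(P',\delta_{m'}/\inc)$ for \emph{some} curve $P'$ with $\dfd(P',P[1,m'])\le 2(\delta_{m'}/\inc)/\inc$ (the witness curve changes at each leap). This is exactly what the paper also glosses over, and your final inequality still closes with this corrected $P'$ since $\delta_{m'}/\inc\ge\tilde\delta$; just be aware that the identification is with a moving witness, not with the fixed $P^{*}$.
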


	\begin{proof}
		Assume by contradiction that $\delta_m>\delta_{k+1}\cdot inc^h$, and let $i$ be the minimum index such that $\delta_i>\delta_{k+1}\cdot inc^h$. Then, when reading the $i$th point, the algorithm preforms a leap step (otherwise $\delta_i=\delta_{i-1}\le\delta_{k+1}\cdot inc^h$).
			
		By \Cref{clm:one_iteration}, there exists a curve $P'$ such that $\Pi_{i-1}$ is the simplification returned by \texttt{GreedyStreamSimp} for the curve $P'$ and parameter $\delta_{i-1}$, and $\dfd(P',P[1,i-1])\le \frac{2\delta_{i-1}}{inc}$. 
		
		Consider the time when the algorithm sets $\delta\leftarrow\delta_{k+1}\cdot inc^h$. Since $\delta_i>\delta_{k+1}\cdot inc^h$, the algorithm calls \texttt{GreedyStreamSimp} with the curve $P'\circ P[i]$ and parameter $\delta_{k+1}\cdot inc^h$, and get a simplification of length $k+1$ (otherwise, $\delta_i\le\delta_{k+1}\cdot inc^h$).
		
		Consider an optimal $k$-simplification $\tilde{\Pi}$ of the curve $P[1,i]$ with distance $\delta^*_m=\dfd(\tilde{\Pi},P[1,i])$
		By the triangle inequality, 
		\[
		\dfd(\tilde{\Pi},P'\circ P[i])\le \dfd(\tilde{\Pi},P[1,i])+\dfd(P[1,i],P'\circ P[i])\le \delta_m^*+\frac{2\delta_{i-1}}{inc}.
		\]
		 Therefore, by \Cref{clm:greedy_iteration}, the simplification returned by \texttt{GreedyStreamSimp} for the curve $P'\circ P[i]$ with parameter $\gamma(\delta_m^*+\frac{2\delta_{i-1}}{inc})$ has length at most $k$. 
		 But by the minimality of $i$ 
		\begin{align*}
		\gamma(\delta_{m}^{*}+\frac{2\delta_{i-1}}{inc}) & \le\frac{\gamma}{\eta}\delta_{k+1}\cdot inc^{h}+\frac{2\gamma}{inc}\delta_{k+1}\cdot inc^{h}\\
		& =\left(\frac{\inc-2\gamma}{\inc}+\frac{2\gamma}{inc}\right)\cdot\delta_{k+1}\cdot inc^{h}=\delta_{k+1}\cdot inc^{h}~.
		\end{align*}
		This contradicts the fact that \texttt{GreedyStreamSimp} returns a simplification of length $k+1$ when $\delta$ is set to $\delta_{k+1}\cdot  inc^h$.	
	\end{proof}

	We are now ready to prove the main theorem.
	\StreamSimplification*
	\begin{proof}
		The algorithm is very simple: for every $i\in [1,\lceil\log_{(1+\eps)}\frac{1}{\eps}\rceil]$, run the algorithm \texttt{LeapingStreamSimp}$(k,\gMEB,(1+\eps)^i,\frac1\eps)$, that is, \texttt{LeapingStreamSimp} with parameters $\init=(1+\eps)^i$ and $\inc=\frac1\eps$. 
		After observing the curve $P[1,m]$, the $i$'th instance of the algorithm will hold a simplification $\Pi_{m,i}$, and distance estimation $\delta_{m,i}$. The algorithm finds the index $i_{\min}$ for which $\delta_{m,i}$ is minimized, and returns $\Pi_{m,i}$ with $L=(1+\frac{2}{\inc})\delta_{m,i}$.
		
		Note that the space required for each copy of \texttt{LeapingStreamSimp} is $S(d,\gamma)+O(kd)$ as in each iteration it simply hold a single version of $\gMEB$ and at most $k$ points of the current simplification. Thus the space guarantee holds.  
		Recall that the optimal simplification is denoted by $\Pi_m^*$, where $\dfd(P[1,m],\Pi_m^*)=\delta_m^*$. 		
		We will argue that $\dfd(P[1,m],\Pi_{m,i_{\min}})=\gamma(1+O(\eps))\delta_m^*$. Afterwards, the $\eps$ parameter can be adjusted accordingly.
		
		We can assume that $m>k$, as otherwise we can simply return the observed curve $P$ (and $L=0$).
		Set $\delta_{\min}=\lambda[1,k+1]=\delta^*_{k+1}$\footnote{Recall that $\lambda[1,k+1]=\frac12\min_{i\in[k]}\|P[i]-P[i+1]\|$. }
		First note that as $\Pi_m^*$ contains at most $k$ points, 
		it follows that $\delta^*_m\ge \delta_{\min}$.
		Hence there are indices $1\le j\le \lceil\log_{(1+\eps)}\frac{1}{\eps}\rceil$ and $h\ge0$ such that
		\[
		(1+\eps)^{j-1}(\frac{1}{\eps})^{h}\delta_{\min}< \eta\cdot\delta_m^*\le (1+\eps)^j(\frac{1}{\eps})^{h}\delta_{\min}~.
		\]
		Note that for this particular $j$, we have that $(1+\eps)^j(\frac{1}{\eps})^{h}\delta_{\min}=\delta^j_{k+1}\cdot \inc^h$. 
		Hence by \Cref{lem:streaming_dominates_offline} we have that 
		$\delta_{m,j}\le (1+\eps)^j(\frac{1}{\eps})^{h}\delta_{\min}\le (1+\eps)\cdot \eta\cdot\delta_m^*$.
		Using \Cref{clm:one_iteration} we have that 		
		\begin{equation}
		\dfd(P[1,m],\Pi_{m,i_{\min}})\le(1+\frac{2}{\inc})\delta_{m,i_{\min}}\le(1+2\eps)\delta_{m}^{j}\le(1+4\eps)\cdot\eta\cdot\delta_{m}^{*}=(1+O(\eps))\cdot\gamma\cdot\delta_{m}^{*}~,\label{eq:simplificationMin}
		\end{equation}
		where the last step follows as
		$\eta=\frac{\gamma\inc}{\inc-2\gamma}=\gamma\cdot\frac{1}{1-2\gamma\eps}=(1+8\eps)\cdot\gamma$ for\footnote{For $\gamma=1+\eps$ and $\eps\le \frac14$ we will obtain $\eta\le (1+6\eps)\gamma$.} $\gamma\le 2$ and $\eps\le\frac{1}{8}$.
	\end{proof}

	\subsection{Approximating the minimum enclosing ball}\label{subsec:MEB}
	Computing the minimum enclosing ball of a set of points in Euclidean space is a fundamental problem in computational geometry.
	In the static setting, Megiddo \cite{Meg84} showed how to compute an $\MEB$ in $O(n\log n)$ time (for fixed dimension $d$), while Kumar \etal provided a static $(1+\eps)$-\MEB~algorithm running in $O(\frac{nd}{\eps}+\eps^{-4.5}\log\frac1\eps)$ time.
	
	A very simple $2\mbox{-}\MEB$ data structure in the streaming setting can be constructed as follows. Let $x$ be the first observed point, and set $2\mbox{-}\MEB.c\leftarrow x$ and $2\mbox{-}\MEB.r\leftarrow 0$. For each point $y$ in the remainder of the stream, set $2\mbox{-}\MEB.r\leftarrow \max\{2\mbox{-}\MEB.r, \Vert x-y\Vert \}$. In other words, this algorithm simply compute the distance from $x$ to its farthest point from the set. The approximation factor is $2$ because any ball that enclose $x$ and its farthest point $y$ has radius at least $\Vert x-y\Vert /2$. The space used by this algorithm is clearly $O(d)$. In addition, notice that the center of the ball in this algorithm is a point from the stream. This means that when using this $2\mbox{-}\MEB$ in our streaming simplification algorithm, we obtain a simplification $\Pi$ with points from $P$. This is sometimes a desirable property (for example, in applications from computational biology, see e.g. \cite{BJWYZ08}). Moreover, using \Cref{thm:StreamSimplification} we obtain a $(2+\eps)$ approximation factor, which is close to optimal because a vertex-restricted $k$-simplification is a $2$-approximation for an optimal (non-restricted) $k$-simplification of a given curve.
	\begin{corollary}
		For every parameters $\eps\in(0,\frac14)$ and $k\in\N$, there is a streaming algorithm which uses $O(\frac{\log\eps^{-1}}{\eps}\cdot kd))$ space, and given a curve $P$ in $\R^d$ in a streaming fashion, computes a vertex-restricted $(k,2+\eps)$-simplification $\Pi$ of $P$.
	\end{corollary}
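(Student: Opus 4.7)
The plan is to instantiate \Cref{thm:StreamSimplification} with the simple $2\mbox{-}\MEB$ streaming algorithm described immediately before the corollary, and then verify the vertex-restricted property. Concretely, I would take $\gamma=2$ and use the algorithm that stores only the first observed point $x$ as the center, maintaining the radius as the largest distance from $x$ to any subsequent stream point. This $\gMEB$ uses space $S(d,2)=O(d)$, so plugging $\gamma=2$ and $S(d,\gamma)=O(d)$ into \Cref{thm:StreamSimplification} yields a streaming algorithm with space $O\!\left(\frac{\log\eps^{-1}}{\eps}\cdot(O(d)+kd)\right)=O\!\left(\frac{\log\eps^{-1}}{\eps}\cdot kd\right)$ and approximation factor $\gamma(1+\eps')=2(1+\eps')$. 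Choosing $\eps'=\eps/2$ gives a $(k,2+\eps)$-simplification, as required.

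The remaining point is to establish the vertex-restricted property, which is not guaranteed by \Cref{thm:StreamSimplification} in general but follows from the specific structure of the $2\mbox{-}\MEB$ we picked. First I would observe that every center produced by this $2\mbox{-}\MEB$ is, by construction, a stream point (the first point inserted into the current instance). Next I would trace through \texttt{GreedyStreamSimp}: each vertex of its output $\Pi$ equals $\gMEB.c$ at the moment of creation, and since $\gMEB$ is re-initialized on a single stream point $P[i]$ at that moment, the vertex is exactly $P[i]$. Finally, I would check that \texttt{LeapingStreamSimp} preserves this: its leap step invokes \texttt{GreedyStreamSimp} on the previous simplification $\Pi$ (not on the original stream), but since $\Pi$ already consists of vertices of $P$, and the re-simplification selects centers that are points of $\Pi$, the output still consists of vertices of $P$, in their original order.

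The only subtle step is the last one: ensuring that the leap operation, which re-simplifies $\Pi$ with a larger $\delta$, does not introduce centers outside $P$. Here the key fact is that the $2\mbox{-}\MEB$ always returns the first-inserted point as its center, so applying \texttt{GreedyStreamSimp} to any curve whose vertices already lie in $P$ produces a new curve whose vertices are a subset of those, hence still in $P$ and in the correct order. This closes the argument and yields the desired vertex-restricted $(k,2+\eps)$-simplification within the stated space bound. The approximation factor $2+\eps$ is essentially tight in this setting, because a vertex-restricted $k$-simplification can only approximate an unrestricted optimal $k$-simplification up to a factor of $2$ (via triangle inequality applied to any non-vertex candidate center and its nearest stream point in the matched cluster).
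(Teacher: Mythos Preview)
Your proposal is correct and follows essentially the same approach as the paper: instantiate \Cref{thm:StreamSimplification} with the simple $2\mbox{-}\MEB$ algorithm (which has $S(d,2)=O(d)$ and always returns a stream point as center), then adjust $\eps$ to turn the $2(1+\eps')$ factor into $2+\eps$. The paper states the corollary immediately after the informal observation that the $2\mbox{-}\MEB$ center is always a stream point, without spelling out the details; your treatment of the vertex-restricted property---in particular the inductive check that the leap step, which re-applies \texttt{GreedyStreamSimp} to the current $\Pi$ rather than to $P$, still outputs an ordered subsequence of $P$ because $\Pi$ already is one and $2\mbox{-}\MEB$ selects first-inserted points---is more careful than what the paper makes explicit, and is a genuine point worth verifying.
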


	For a better approximation factor (using a simplification with arbitrary vertices), we can use the following $\gMEB$ algorithms. Chan and Pathak \cite{CP14} (improving over \cite{AS15}) constructed an $\gamma-\MEB$ algorithm for $\gamma=1.22$, also using $O(d)$ space. We conclude,
	\begin{restatable}{corollary}{StremingSimpEpsilon}\label{cor:StremingSimpEpsilon}
		For every parameters $\eps\in(0,\frac14)$ and $k\in\N$, there is a streaming algorithm which uses $O(\eps)^{-\frac{d+1}{2}}\log^{2}\eps^{-1}+O(kd\eps^{-1}\log\eps^{-1})$
		space, and given a curve $P$ in $\R^d$ in a streaming fashion, computes an $(k,1+\eps)$-simplification $\Pi$ of $P$.
	\end{restatable}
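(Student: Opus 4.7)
My plan is to derive the corollary directly by instantiating \Cref{thm:StreamSimplification} with a suitable streaming approximate minimum enclosing ball algorithm. Specifically, I would choose a parameter $\gamma = 1+\eps'$ with $\eps' = \Theta(\eps)$ (chosen small enough that $(1+\eps')(1+\eps) \le 1+O(\eps)$), and plug in an off-the-shelf streaming $(1+\eps')\text{-}\MEB$ algorithm as the black box. This reduces the entire proof to (i) identifying a streaming $\gMEB$ whose space matches the target, and (ii) a routine arithmetic check that the bound of \Cref{thm:StreamSimplification} collapses to the claimed expression.

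For the black box, I would use the coreset-based streaming $(1+\eps)$-MEB algorithm (e.g.\ that of Agarwal and Sharathkumar \cite{AS15}), which in $d$-dimensional Euclidean space maintains a $(1+\eps')$-approximation in space
\[
S(d, 1+\eps') \;=\; O\bigl((\eps')^{-(d-1)/2}\log(\eps')^{-1}\bigr).
\]
Substituting into the bound from \Cref{thm:StreamSimplification}, the total space becomes
\[
O\!\left(\tfrac{\log \eps^{-1}}{\eps}\right)\cdot\bigl(S(d,1+\eps') + kd\bigr)
\;=\; O(\eps)^{-\frac{d+1}{2}}\log^{2}\eps^{-1} \;+\; O\!\left(kd\cdot\tfrac{1}{\eps}\log\eps^{-1}\right),
\]
matching the stated space. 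The approximation guarantee from \Cref{thm:StreamSimplification} is $\gamma(1+\eps) = (1+\eps')(1+\eps) = 1 + O(\eps)$, so rescaling $\eps$ by a constant factor finishes the argument.

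The only nontrivial step is the bookkeeping around the MEB black box: we need an $(1+\eps')$-MEB streaming algorithm whose space is $O((\eps')^{-(d-1)/2}\,\mathrm{polylog}(\eps'^{-1}))$ rather than, e.g., $O((\eps')^{-d})$, so that the final exponent in $\eps^{-1}$ lands at $(d+1)/2$ after the $\tfrac{1}{\eps}\log\eps^{-1}$ overhead from \Cref{thm:StreamSimplification}. Once the right black box is in hand, the verification is essentially a one-line computation; no further ideas beyond those already in \Cref{thm:StreamSimplification} are needed.
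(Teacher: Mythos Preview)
Your approach is correct and essentially identical to the paper's: the corollary is obtained by plugging a streaming $(1+\eps')$-\MEB\ black box into \Cref{thm:StreamSimplification} and simplifying the resulting space bound. The only minor discrepancy is the source of the black box---the paper supplies it as \Cref{lem:1+epsMEB} (via the $\eps$-kernel streaming algorithm of \cite{Zar11}, as observed in \cite{CP14}), whereas \cite{AS15} in this paper refers to a constant-factor \MEB\ result, not a $(1+\eps)$-\MEB.
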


	Finally, as was observed by Chan and Pathak \cite{CP14}, using streaming techniques for $\eps$-kernels \cite{Zar11}, for every $\eps\in(0,\frac12)$ there is an $(1+\eps)\mbox{-}\MEB$ algorithm that using $O(\eps^{-\frac{d-1}{2}}\log\frac1\eps)$ space.
	\begin{lemma}\label{lem:1+epsMEB}
		For every parameter $\eps\in(0,\frac12)$, there is a $(1+\eps)\mbox{-}\MEB$ algorithm that uses $O(\eps)^{-\frac{d-1}{2}}\log\frac1\eps$ space.
	\end{lemma}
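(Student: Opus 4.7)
The plan is to reduce minimum enclosing ball to maintaining an $\eps$-kernel in the stream, for which known results give exactly the claimed space bound. Recall that an $\eps$-kernel of a point set $S\subset\R^d$ is a (small) subset $K\subseteq S$ whose directional widths $(1-\eps)$-approximate those of $S$ in every direction $u\in\mathbb{S}^{d-1}$. A classical fact is that the minimum enclosing ball radius, as well as the optimal center, are determined by directional extents, and therefore computing the MEB on an $\eps$-kernel yields a $(1+O(\eps))$-approximation to the MEB of the full set.

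First, I would invoke the streaming $\eps'$-kernel construction of Zarrabi-Zadeh~\cite{Zar11}, with $\eps'=\eps/c$ for an appropriate constant $c$. This construction maintains, for a stream of points in $\R^d$, a kernel $K$ of size $O(\eps'^{-(d-1)/2})$ using total working space $O(\eps'^{-(d-1)/2}\log\eps'^{-1}) = O(\eps)^{-(d-1)/2}\log\eps^{-1}$, which matches the desired bound.

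Next, to answer a query (or to expose $\MEB.c$ and $\MEB.r$ at any moment), I would compute an (approximately) minimum enclosing ball of the stored kernel $K$ in a static fashion; since $|K|=O(\eps^{-(d-1)/2})$ this is affordable and does not affect the space budget. Let $(c_K,r_K)$ denote the resulting center/radius. I would then set $\MEB.c\leftarrow c_K$ and $\MEB.r\leftarrow(1+\eps)r_K$ (a small inflation to guarantee that every point of the stream is actually contained in the reported ball). By definition of an $\eps'$-kernel, the radius of the true MEB of the stream is at most $(1+O(\eps'))r_K$, so a suitable choice of $c$ gives approximation factor at most $1+\eps$.

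The one step that needs a small argument, rather than a direct quote, is the reduction from MEB-approximation to $\eps$-kernel approximation: namely, that if $K$ is an $\eps'$-kernel of $S$ then the MEB of $K$ approximates the MEB of $S$ within $1+O(\eps')$. This is standard (MEB is a ``faithful measure'' in the kernel framework of Agarwal--Har-Peled--Varadarajan), and is the main conceptual content beyond quoting~\cite{Zar11}. I expect this to be the only nontrivial step; once it is stated, the rest of the proof is bookkeeping of constants in $\eps'$ versus $\eps$, together with plugging the size bound from Zarrabi-Zadeh's theorem into the space accounting.
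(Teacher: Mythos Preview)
Your proposal is correct and follows essentially the same route as the paper: maintain a streaming $\eps'$-kernel via \cite{Zar11}, compute an (approximate) MEB of the kernel, and inflate slightly to ensure containment of the full stream. The paper differs only in that it proves the key reduction step explicitly (its \Cref{clm:KernalOfMEB} shows that any ball enclosing an $\eps$-kernel, once inflated by a $1+3\eps$ factor, encloses the whole set), whereas you appeal to the general ``faithful measure'' machinery of Agarwal--Har-Peled--Varadarajan; both are valid and yield the same bound.
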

	As we relay heavily on \Cref{lem:1+epsMEB}, and do not aware of a published proof, we attach a proof sketch in \Cref{app:MEBproof}
	\begin{restatable}{corollary}{StremingSimpConstant}\label{cor:StremingSimpConstant}
		For every parameters $\eps\in(0,\frac14)$ and $k\in\N$, there is a streaming algorithm which uses $O(\frac{\log\eps^{-1}}{\eps}\cdot kd)$ space, and given a curve $P$ in $\R^d$ in a streaming fashion, computes an $(k,1.22+\eps)$-simplification $\Pi$ of $P$.
	\end{restatable}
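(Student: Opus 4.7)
The plan is to apply Theorem \ref{thm:StreamSimplification} as a black box, instantiating it with the streaming $\gamma$-\MEB\ algorithm of Chan and Pathak \cite{CP14} for $\gamma = 1.22$, whose space complexity is $S(d, 1.22) = O(d)$. Since $\gamma = 1.22 \in [1,2]$, the precondition of Theorem \ref{thm:StreamSimplification} is met.

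Substituting $S(d, 1.22) = O(d)$ into the space bound $O(\frac{\log \eps^{-1}}{\eps} \cdot (S(d,\gamma) + kd))$ from Theorem \ref{thm:StreamSimplification} yields
\[
O\!\left(\tfrac{\log \eps^{-1}}{\eps} \cdot (d + kd)\right) \;=\; O\!\left(\tfrac{\log \eps^{-1}}{\eps} \cdot kd\right),
\]
as claimed, where the $d$ term is absorbed since $k \ge 1$. The approximation factor guaranteed by Theorem \ref{thm:StreamSimplification} is $\gamma(1+\eps) = 1.22(1+\eps) = 1.22 + 1.22\eps$.

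To obtain the stated form $1.22 + \eps$, I would rescale the approximation parameter: run the algorithm with $\eps' := \eps / 1.22$ in place of $\eps$. This changes the $\frac{\log \eps^{-1}}{\eps}$ factor only by a constant, so the space bound stays $O(\frac{\log \eps^{-1}}{\eps} \cdot kd)$, while the approximation factor becomes $1.22 + 1.22 \eps' = 1.22 + \eps$. There is essentially no obstacle here; the corollary is a direct plug-in. The only step that needs a quick sanity check is that rescaling $\eps$ by a factor of $1.22$ preserves the regime $\eps \in (0, \tfrac14)$ required by Theorem \ref{thm:StreamSimplification}, which it trivially does up to adjusting constants.
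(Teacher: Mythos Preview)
Your proposal is correct and matches the paper's intended approach: the corollary is stated as an immediate consequence of Theorem~\ref{thm:StreamSimplification} instantiated with the Chan--Pathak $1.22$-\MEB\ algorithm using $O(d)$ space. The paper does not spell out the $\eps$-rescaling step explicitly, but your handling of it is exactly what is implicitly meant.
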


	\section{Distance oracle: the streaming case}\label{sec:streaming_distance_oracle}
	Similarly to our static distance oracle, we first describe a construction (this time, in the streaming model) of a data structure that stores a $(k,r,\eps)$-cover $\C$ of size $O(\frac1\eps)^{kd}$ for $P$, and has a linear look-up time. Then, we show how to combine several of those structure (together with a streaming simplification) to produce a streaming distance oracle.

	\subsection{Cover of a curve}
	This entire subsection is dedicated to proving the following lemma.
	\begin{lemma}\label{lem:StramDOdecision}
		Given parameters $r\in \R_+$, $\eps\in(0,\frac14)$, 
		and $k\in \N$, there is a streaming algorithm that uses $O(\frac{1}{\eps})^{kd}$ space, and given a curve $P$ in $\R^d$ constructs a decision distance oracle with $O(kd)$ query time.
	\end{lemma}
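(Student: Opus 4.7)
The plan is to adapt the static construction (\Cref{lem:DOdecision}) into an incremental data structure. For each $j\in[k]$, we maintain a Cuckoo hash table $\D_j$ indexing the family
\[
\tilde{\C}_m^{(j)}=\{W\in\G_m^j\,:\,\dfd(P[1,m],W)\le(1+\eps)r\},
\]
where $\G_m=\bigcup_{i=1}^m G_{\eps,r}(P[i],(1+\eps)r)$, and each tracked $W\in\tilde{\C}_m^{(j)}$ is tagged with its exact one-to-many discrete Fr\'echet distance $D_m^{(j)}(W):=\dfd(P[1,m],W)$. The top table $\D_k$ will serve as the cover returned to queries; the lower tables exist only to drive the update. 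Applying the static size bound of \cite{FFK20} to $P[1,m]$ and length $j$ yields $|\tilde{\C}_m^{(j)}|=O(1/\eps)^{jd}$, so the total storage across all levels is $\sum_{j=1}^{k}O(1/\eps)^{jd}=O(1/\eps)^{kd}$.

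When $P[m+1]$ arrives, I would process the levels in decreasing order $j=k,k-1,\dots,1$ so that $\D_{j-1}$ still holds its time-$m$ values when level $j$ is processed. For level $j$, first, for every $W\in\tilde{\C}_m^{(j)}$, recompute by the one-to-many discrete Fr\'echet recurrence
\[
D_{m+1}^{(j)}(W)=\max\!\Bigl(\|P[m+1]-W[j]\|,\ \min\bigl(D_{m}^{(j)}(W),\,D_{m}^{(j-1)}(W[1,j-1])\bigr)\Bigr),
\]
where the prefix value is read from $\D_{j-1}$ or taken to be $+\infty$ if absent, removing $W$ whenever the updated value exceeds $(1+\eps)r$. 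Second, letting $N=G_{\eps,r}(P[m+1],(1+\eps)r)$ of size $O(1/\eps)^d$ by \Cref{clm:grid_points_in_ball}, for every $(W',w)\in\tilde{\C}_m^{(j-1)}\times N$ attempt to insert $W'\circ w$ into $\D_j$ with value $\max(\|P[m+1]-w\|,\,D_m^{(j-1)}(W'))$, provided this is at most $(1+\eps)r$ and $W'\circ w$ is not already present. Queries are answered exactly as in the static case: snap $Q\in\reals^{d\times k}$ coordinate-wise to the grid and look up the snapped curve in $\D_k$, returning its stored distance (or NO) in $O(kd)$ time via Cuckoo hashing.

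The main obstacle is verifying that this update is exact despite storing only one DP value per curve and potentially missing prefix information. The key observation is that if $W\in\tilde{\C}_m^{(j)}$ has an untracked prefix $W[1,j-1]\notin\tilde{\C}_m^{(j-1)}$, then $D_m^{(j-1)}(W[1,j-1])>(1+\eps)r\ge D_m^{(j)}(W)$, so the inner $\min$ in the recurrence reduces to $D_m^{(j)}(W)$ and the $+\infty$ substitution is harmless. Conversely, any newly entering $W\in\tilde{\C}_{m+1}^{(j)}\setminus\tilde{\C}_m^{(j)}$ has $D_m^{(j)}(W)>(1+\eps)r$, so the recurrence forces $D_m^{(j-1)}(W[1,j-1])\le(1+\eps)r$, meaning its prefix lies in $\tilde{\C}_m^{(j-1)}$ and $W$ is indeed produced by the insertion step. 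An induction on $m$ then certifies that $\D_j$ stores $\tilde{\C}_m^{(j)}$ with correct values for every $j$ and $m$; the cover property of $\D_k$ follows from the same snapping argument as in \Cref{lem:DOdecision}, which together with \Cref{obs:cover} yields the claimed decision distance oracle.
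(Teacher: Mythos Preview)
Your update rule is incomplete: it maintains the \emph{one-to-many} discrete Fr\'echet distance (each $W[j]$ matched to a contiguous block of $P$), not $\dfd$. The full recurrence is
\[
\dfd(P[1,m{+}1],W[1,j])=\max\Bigl(\|P[m{+}1]-W[j]\|,\ \min\bigl(D_{m}^{(j)}(W),\,D_{m}^{(j-1)}(W[1,j{-}1]),\,D_{m+1}^{(j-1)}(W[1,j{-}1])\bigr)\Bigr),
\]
and by processing levels in decreasing order you deliberately suppress the third term. Consequently $\D_k$ only stores grid curves whose one-to-many distance to $P$ is at most $(1+\eps)r$, a strict subset of the intended cover. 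A concrete failure: take $P=((0,0),(0,1),(0,0))$ and the grid curve $W=((0,0),(1,0),(0,0))$; then $\dfd(P,W)=1$ via the walk $(1,1),(2,1),(3,2),(3,3)$, but the one-to-many distance is $\sqrt{2}$ (the only one-to-many walk is the diagonal). With $r=1$ and small $\eps$, your $\D_3$ omits $W$, so a query $Q=W$ (which has $\dfd(P,Q)=1\le r$) snaps to $W$ and incorrectly returns NO. Your justification ``any newly entering $W$ has $D_m^{(j-1)}(W[1,j{-}1])\le(1+\eps)r$'' is circular: it uses your truncated recurrence as the definition of $D_{m+1}^{(j)}$, so it only certifies that you are correctly maintaining the one-to-many quantity, not $\dfd$.

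The paper's \texttt{ExtendCover} avoids this by extending each $W'\in\C_{m,j}$ not just by a single grid point but by every suffix $X\in G_{\eps,r}(P[m{+}1],(1+\eps)r)^{k'}$ of length $1\le k'\le k-j$; this is exactly what accounts for the missing case where the new point $P[m{+}1]$ is matched to several consecutive vertices $W[j{+}1],\dots,W[j{+}k']$. An alternative repair of your scheme is to process levels in \emph{increasing} order and, at level $j$, use both the time-$m$ and the freshly computed time-$(m{+}1)$ tables at level $j{-}1$ (so that insertions may chain through $\tilde\C_{m+1}^{(j-1)}$ and the recurrence gets its third term); this costs only a constant factor in space. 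As written, however, the argument has a genuine gap.
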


	The cover that we describe below is an extended version of the cover described in section \Cref{subsec:StaticDecisionDO}, which also contains grid-curves of length smaller than $k$. Those smaller curves will allow us to update the cover when a new point of $P$ is discovered, when all we have is the new point and the previous cover.
	More precisely, we store a set of covers $\mathcal{C}_{i}=\{\mathcal{C}_{i,k'}\}_{1\le k'\le k}$ for $P[1,i]$, such that $\mathcal{C}_{{i},k'}$ is a $(k',\eps,r)$-cover for $P[1,i]$ with grid curves, exactly as we constructed for the static case. In other words, it contains exactly the set of all curves $W$ with $k'$ points from $\G_{i}=\bigcup_{1\le t\le i} G_{\eps,r}(P[t],(1+\eps)r)$ and $\dfd(P[1,i],W)\le(1+\eps)r$. We call such a cover a \emph{$(k',\eps,r)$-grid-cover}.
	
	\Cref{alg:ExtendCover} (\texttt{ExtendCover}), is a sub-routine that constructs the set of covers $\mathcal{C}_i$ for $P[1,i]$, given only the set $\mathcal{C}_{i-1}$ (for $i\ge 2$) and the new point $P[i]$. 
	\Cref{alg:GreedyDecision} (\texttt{StreamCover}) is the streaming algorithm that first reads $P[1]$ and construct a set $\C_1=\{\C_{1,k'}\}_{1\le k'\le k}$ such that $\C_{1,k'}$ is a $(k',\eps,r)$-grid-cover for $P[1]$,\footnote{\label{ftn:cover}This is the set of all the curves with at most $k$ points from $G_{\eps,r}(P[1],(1+\eps)r)$, with their distance to $P[1]$.} and then calls \texttt{ExtendCover} for each new observed point. 
		
	\begin{algorithm}[h]
		\caption{\texttt{StreamCover$(P,k,\eps,r)$}}\label{alg:GreedyDecision}
		\DontPrintSemicolon
		\SetKwInOut{Input}{input}\SetKwInOut{Output}{output}
		\Input{A curve $P$, parameters $r>0,k\in\N,\eps\in(0,\frac14)$}
		\Output{A $(k,\eps,r)$-cover $\C$ for $P$}
		\BlankLine
		Read $P[1]$.\;
		Construct a set $\C_1=\{\C_{1,k'}\}_{1\le k'\le k}$ such that $\C_{1,k'}$ is a $(k',\eps,r)$-grid-cover for $P[1]$.$^{\ref{ftn:cover}}$ \;
		Set $i\leftarrow 2$\;
		\While{read $P[i]$}{
			Set $\C_i\leftarrow$\texttt{ExtendCover$(\C_{i-1},P[i],k,\eps,r)$}\;
			Set $i\leftarrow i+1$ and delete $\C_{i-1}$\label{line:deleteOi-1}\;
		}
		Return $\C_i$\;
	\end{algorithm}

	\begin{algorithm}[h]
		\caption{\texttt{ExtendCover$(\C',p,k,\eps,r)$}}\label{alg:ExtendCover}
		\DontPrintSemicolon
		\SetKwInOut{Input}{input}\SetKwInOut{Output}{output}
		\Input{A $(k,\eps,r)$-cover $\C'$ for some (non-empty) curve $P'$, a point $p$, parameters $r>0,k\in\N,\eps\in(0,\frac14)$}
		\Output{A $(k,\eps,r)$-cover $\C$ for $P'\circ p$}
		\BlankLine
		
		Set $\C\leftarrow\emptyset$ and $\tilde{\C}\leftarrow\emptyset$\;
		Construct the set $\tilde{\C}$ of all the curves with at most $k$ points from $G_{\eps,r}(p,(1+\eps)r)$.\;	
		\For{each $W\in \C'$ of length $j<k$}{
			\For{each $X\in \tilde{\C}$ of length $k'\le k-j$}{
				Set $\dist\leftarrow\max\{\C'.\dist(W),\max_{1\le t\le k'}\{\Vert X[t]-p\Vert\}\}$\;
				\textbf{insert} $(W\circ X,\dist)$ into $\C$ 
				\tcp*{if $W\circ X\in\mathcal{C}$, keep the entry with minimal $\dist$}
			}
		}	
		\For{each $W\in \C'$ of length $k'$ \textbf{such that} $W[k']\in G_{\eps,r}(p,(1+\eps)r)$}{
			Set $\dist\leftarrow\max\{\C'.\dist(W),\|W[k']-p\|\}$\;
			\textbf{insert} $(W,\dist)$ into $\C$\tcp*{if $W\in\mathcal{C}$, keep the entry with minimal $\dist$}
		}
		Return $\C$\;
	\end{algorithm}

	Assume that $C'=\{\C_{i-1,j}\}_{1\le k'\le k}$ such that $\C_{i-1,k'}$ is a $(k',\eps,r)$-grid-cover for $P[1,i-1]$.
	We show that given the point $P[i]$ and $C'$, \Cref{alg:ExtendCover} outputs a set $\C=\{\C_{i,k'}\}_{1\le k'\le k}$ such that $\C_{i,k'}$ is a $(k',\eps,r)$-grid-cover for $P[1,i]$.
	
	Let $W$ be a curve with points from $\G_{i}=\bigcup_{1\le t\le i} G_{\eps,r}(P[t],(1+\eps)r)$ such that $\dfd(P[1,i],W)\le(1+\eps)r$.
	Consider an optimal walk along $W$ and $P[1,i]$, and let $j\le k'$ be the smallest index such that $P[i]$ is matched to $W[j]$. Notice that $W[j,k']$ is contained in $G_{\eps,r}(P[i],(1+\eps)r)$ and thus $W[j,k']$ is in $\tilde{C}$.

	\begin{figure}[h]
		\centering
		\includegraphics[scale=1.2]{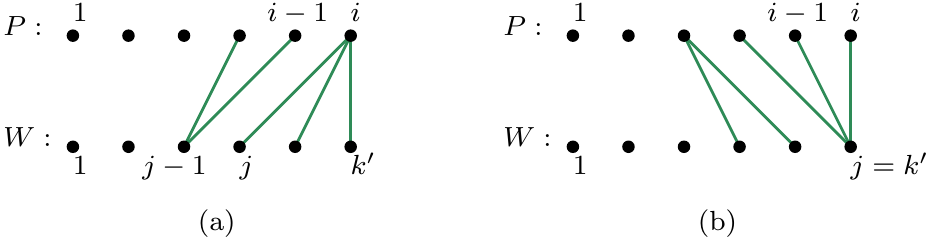}
		\caption{\label{fig:dynamic_prog}Constructing $C_{i,k'}$ from $\{C_{i-1,j}\}_{1\le j\le k}$.}
	\end{figure}

	If $P[i-1]$ is matched to $W[j-1]$ (see \Cref{fig:dynamic_prog}(a)) then
	$\dfd(P[1,i-1],W[1,j-1])\le (1+\eps)r$, and by the induction hypothesis $\mathcal{C}_{i-1,j-1}$ contains $W[1,j-1]$, with the value $\C'.\dist(W[1,j-1])=\dfd(P[1,i-1],W[1,j-1])$. Moreover, 
	$\dfd(P[1,i],W)=\max\{\dfd(P[1,i-1],W[1,j-1]), \dfd(P[i],W[j,k'])\}$, and indeed the algorithm inserts $W=W[1,j-1]\circ W[j,k']$ to $\C$ with the distance $\max\{\C'.\dist(W'[1,j-1]),\dfd(P[i],W[j,k'])\}$.
	
	Else, if $P[i-1]$ is matched to $W[j]$ (see \Cref{fig:dynamic_prog}(b)).
	Note that in this case it must be that $j=k'$ because $P[i]$ is also matched to $W[j]$.
	then $\dfd(P[1,i-1],W[1,k'])\le (1+\eps)r$, and by the induction hypothesis $\mathcal{C}_{i-1,k'}$ contains $W$, with the value $\C'.\dist(W)=\dfd(P[1,i-1],W)$. This time 
	$\dfd(P[1,i],W)=\max\{\dfd(P[1,i-1],W), \Vert W[k']-P[i]\Vert \}$, and indeed the algorithm inserts $W$ to $\C$ with the distance $\max\{\C'.\dist(W'),\Vert W[k']-P[i]\Vert \}$.
	
	The other direction (showing that if $W$ is in $\C$ then $W$ is a grid-curve with at most $k'$ points and $\dfd(P[1,i],W)\le(1+\eps)r$) can be proven by reversing the arguments.
	
	The space required for our algorithm is bounded by size of the set $\{\C_{i,k'}\}_{1\le k'\le k}$. Since each $\C_{i,k'}$ is exactly a $(k',\eps,r)$-grid-cover as described in \Cref{subsec:StaticDecisionDO}, we have $\left|\C_{i,k'}\right|=O(\frac{1}{\eps})^{k'd}$, and the total space is bounded by $\sum_{k'=1}^{k}\left|\C_{i,k'}\right|\le k\cdot O(\frac{1}{\eps})^{kd} = O(\frac{1}{\eps})^{kd}$.
	
	The query algorithm remains the same: Given a query $Q\in \R^k\times d$, we first compute a rounded curve $W'$ as in \Cref{subsec:StaticDecisionDO} (where $W'[i]$ is the closest grid point to $Q[i]$). If $W'$ is in the cover, we return $W'$ and $\dist(W')$, otherwise we return NO. The query time is thus $O(kd)$.
	
	\subsection{Cover with growing values of $r$}
	In the static scenario (\Cref{subs:static_asymmetric}), we used a set of bounded range distance oracles, each contains a set of covers (decision distance oracles) of the input curve $P$. However, we chose the ranges with respect to the distance $L$ between $P$ and a $(k,1+\eps)$-simplification $\Pi$ of $P$, and in the streaming scenario, this distance can increase in each round.
			
	The \texttt{StreamCover} algorithm presented in the previous subsection maintains a $(k,\eps,r)$-cover of $P$ for some given initial value $r$ which does not change. However, as more points of $P$ are read, it might be that the cover becomes empty (which also means that $L$ becomes larger than $r$). 
	Therefore, \Cref{alg:LeapingStreamingDecision} (\texttt{LeapingStreamCover}) presented in this subsection, simulates \texttt{StreamCover} until the cover becomes empty. Then, it increases $r$ by some given factor (similarly to the leap step in \Cref{alg:StreamSimp}), recompute the cover for the new value $r$, and continue simulating \texttt{StreamCover} for the new cover and $r$.
	Note that as in \Cref{alg:StreamSimp}, a leaping step can occur several times before the algorithm move on to the next point of $P$. Nevertheless, by first computing a simplification, we can actually compute how many leap steps are required without preforming them all.
		
	We start by reading the first $k+1$ points, and assume that there are no two consecutive identical points in $P[1,k+1]$ (otherwise ignore the duplicate, and continue reading until observing $k+1$ points without counting consecutive duplicates). Up until this point, we can simply compute a cover as we did in the static case. 

	Following the notation in the previous section, denote by $\Pi_m^*$ an optimal $k$-simplification of $P[1,m]$, and let $\delta^*_m=\dfd(P[1,m],\Pi_m^*)$.
	Denote by $r_m$ the value of $r$ at the end of round $m$ (i.e., when $P[m]$ is the last point read by \texttt{StreamCover}, right before reading $P[m+1]$).

	In addition to $\eps$ and $k$, the input for \Cref{alg:LeapingStreamingDecision} contains two parameters, $\init>0$ and $\inc\ge 2$. The parameter $\init$ is the initial value of $r$, and $\inc$ is the leaping factor by which we multiply $r$ when the cover becomes empty.
	
	\begin{algorithm}[h]
		\caption{\texttt{LeapingStreamCover$(P,k,\eps,\init,\inc)$}}\label{alg:LeapingStreamingDecision}
		\DontPrintSemicolon
		\SetKwInOut{Input}{input}\SetKwInOut{Output}{output}
		\Input{A curve $P$, parameters $k\in\N,\eps\in(0,\frac12)$, $\init>0,\inc\ge 2$}
		\Output{A $(k,\eps,r)$-cover $\C$ for $P$ for some $r\in \delta^{*}\cdot\left[\frac{1}{1+2\eps},(1+2\eps)\cdot\inc\right]$}
		\BlankLine
		Read $P[1,k+1]$.\;
		Set $r\leftarrow\init\cdot \lambda(P[1,k+1])$\;
		Construct the set $\C_{k+1}$ of $(k',\eps,r)$-grid-covers for $P[1,k+1]$, for $1\le k'\le k$.\;
		Set $i\leftarrow k+2$\;
		\While{read $P[i]$}{
			Set $\C_i\leftarrow$\texttt{ExtendCover$(\C_{i-1},P[i],k,\eps,r)$}\;
			\While{$\C_i=\emptyset$}{
				Let $W\in\C_{i-1}$ be an arbitrary curve\;
				Set $r\leftarrow r\cdot \inc$\;
				Set $\C_i\leftarrow$\texttt{StreamCover$(W\circ P[i],k,\eps,r)$}\;
			}
			Set $i\leftarrow i+1$ and delete $\C_{i-1}$\;
		}
		Return $\C=\C_i$\;
	\end{algorithm}

	Our goal is to maintain a set of covers with $r_m$ values that are not too far from $\delta^*_m$. For this,	we run our algorithm with $\inc=2^t$ for the minimum integer $t$ such that $2^t\ge\frac{25}{\eps}$, and $\init=2^i$ for some $i\in[0,t-1]$. Notice that $t=\log\frac1\eps+O(1)$.
	The intuition is that in order to get a good estimation for the true $\delta^*_m$, we will run $t$ instances of our algorithm, with initial $r$ values growing exponentially between $\delta^*_{k+1}=\lambda(P[1,k+1])$ and $\inc\cdot\delta^*_{k+1}$. Once an instance fail (i.e., its cover becomes empty), the $r$ value is multiplied by $\inc$ until the cover becomes non-empty. Roughly speaking, if $h$ is the number of times that we had to multiply the initial value $\init$ by $\inc$ so that the cover is non-empty, then $\inc^{h-1}\cdot\init\cdot \delta^*_{k+1}\lesssim\delta^*_m\lesssim \inc^h \cdot\init\cdot \delta^*_{k+1}$ (because otherwise $\Pi^*_m$ is an evidence that the cover is not empty after $h-1$ multiplications), and thus $\inc^h\cdot \init\cdot\delta^*_{k+1}\in\delta^*_m\cdot[\Theta(1),O(\frac{1}{\eps})]$.
	
	\begin{lemma}\label{lem:leapingStreaming}
		At the end of round $m$, $r_m\in\delta_{m}^{*}\cdot\left[\frac{1}{1+2\eps},(1+2\eps)\cdot\inc\right]$. Moreover, there exists a curve $P'$ such that $\C_m$ is a $(k,\eps,r_m)$-grid-cover for $P'$ and $\dfd(P',P[1,m])\le\frac{2}{\inc}\cdot r_m$.
	\end{lemma}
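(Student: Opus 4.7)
The plan is to prove the two conclusions simultaneously by induction on the round $m$, closely following the template of Claim 5.2 and Lemma 5.3 for the streaming simplification algorithm. I will work under the assumption that $\inc$ is sufficiently large relative to $1/\eps$ (concretely $\inc \ge 25/\eps$, matching the outer algorithm's choice $\inc = 2^{\lceil \log(25/\eps)\rceil}$), since the $(1+2\eps)$-approximation has to absorb the $\frac{2}{\inc}$-drift between the ``virtual'' curve $P'$ and the true prefix $P[1,m]$.

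For part (2), the base case $m=k+1$ is immediate by taking $P'=P[1,k+1]$ since $\C_{k+1}$ is constructed directly as the grid-covers for this prefix. For the inductive step, if round $m$ triggers no leap then ExtendCover yields a cover for $P''\circ P[m]$ where $P''$ is the witness from the previous round, $r_m=r_{m-1}$, and the distance bound is preserved via $\dfd(P''\circ P[m],P[1,m])\le\dfd(P'',P[1,m-1])$. If a leap occurs and produces $r_m=r_{m-1}\cdot\inc^h$ with $h\ge 1$, the final iteration of the leap loop picks some $W\in\C_{m-1}$ and sets $\C_m$ to be the cover for $W\circ P[m]$; taking $P'=W\circ P[m]$ and bounding via the triangle inequality, $\dfd(P',P[1,m])\le\dfd(W,P[1,m-1])\le(1+\eps)r_{m-1}+\frac{2}{\inc}r_{m-1}$, which is at most $\frac{2}{\inc}r_m=2\inc^{h-1}r_{m-1}$ for $h\ge 1$ and $\inc$ large enough.

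The main obstacle is the upper bound $r_m\le(1+2\eps)\inc\,\delta^*_m$ of part (1), which I will prove by contradiction in the spirit of Lemma 5.3. Set $\eta=1+2\eps$ and let $h$ be minimal with $r_{k+1}\inc^h\ge\eta\delta^*_m$, so $r_{k+1}\inc^h<\eta\inc\,\delta^*_m$. Suppose $r_m>r_{k+1}\inc^h$ and let $i$ be the first round whose $r_i$ exceeds this value; at some step of the leap loop in round $i$ (or, in the borderline case $r_{i-1}=r_{k+1}\inc^h$, at the opening ExtendCover), the algorithm evaluated a cover with parameter $r^*:=r_{k+1}\inc^h$ and got empty. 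To derive a contradiction I round an optimal $k$-simplification $\Pi^*_i$ of $P[1,i]$ to the $\frac{\eps r^*}{\sqrt d}$-grid to obtain $\tilde\Pi$, and show $\dfd(\tilde\Pi,W\circ P[i])\le(1+\eps)r^*$ via
\[
\dfd(\tilde\Pi,W\circ P[i])\le\tfrac{\eps r^*}{2}+\delta^*_i+\bigl(1+\eps+\tfrac{2}{\inc}\bigr)r_{i-1},
\]
together with $\delta^*_i\le\delta^*_m\le r^*/\eta$ and $r_{i-1}\le r^*/\inc$; the constants close because $\tfrac{1}{1+2\eps}+\tfrac{\eps}{2}<1$ and $\tfrac{1}{\inc}$ contributes only $O(\eps)$. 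Since every vertex of $\tilde\Pi$ then lies within $(1+\eps)r^*$ of some vertex of $W\circ P[i]$, it sits on the grid used by StreamCover, so $\tilde\Pi\in\C_i$, contradicting emptiness.

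The lower bound $r_m\ge\delta^*_m/(1+2\eps)$ is then easy: $\C_m$ is non-empty (the leap loop only exits with a non-empty cover), so picking any $W\in\C_m$ (a $\le k$-vertex grid curve), part (2) and the triangle inequality give $\dfd(P[1,m],W)\le\frac{2}{\inc}r_m+(1+\eps)r_m\le(1+2\eps)r_m$, and minimality of $\delta^*_m$ over $k$-vertex curves yields $\delta^*_m\le(1+2\eps)r_m$. The hard part, as noted, is tracking the constants so that the $(1+2\eps)$ factor survives all triangle inequalities in the upper-bound contradiction; this is precisely what forces the specific leap factor $\inc=\Theta(1/\eps)$ rather than the minimal $\inc\ge 2$ admitted by the algorithm's input conditions.
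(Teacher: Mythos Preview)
Your proof is correct, and for Part~2 (the witness curve $P'$) and the lower bound $r_m\ge\delta^*_m/(1+2\eps)$ you follow the paper essentially verbatim. The real divergence is in the upper bound $r_m\le(1+2\eps)\inc\,\delta^*_m$: the paper proves this inductively within the same round-by-round induction, splitting the leap case into $h=1$ (use emptiness at scale $r_{m-1}$) and $h\ge2$ (use emptiness at scale $r_m/\inc$), whereas you import the global contradiction template of Lemma~5.3, fixing a threshold $r^*=r_{k+1}\inc^h$ and looking at the first round that overshoots it. Both are valid; your route is arguably cleaner in that the upper bound no longer needs the inductive hypothesis of Part~1 (only Part~2), at the cost of having to explicitly exhibit a grid curve $\tilde\Pi$ in the allegedly empty cover.

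Two small bookkeeping points to tighten. First, your displayed bound $\dfd(\tilde\Pi,W\circ P[i])\le\frac{\eps r^*}{2}+\delta^*_i+(1+\eps+\frac{2}{\inc})r_{i-1}$ together with $r_{i-1}\le r^*/\inc$ covers only the non-borderline case $r_{i-1}<r^*$; in the borderline case $r_{i-1}=r^*$ the empty cover comes from \texttt{ExtendCover} on $P''\circ P[i]$ (with $P''$ the round-$(i-1)$ witness), and the third term should be $\frac{2}{\inc}r_{i-1}=\frac{2}{\inc}r^*$. This is strictly smaller, so the same inequality closes, but the two cases deserve separate sentences. Second, the implication ``$h$ minimal $\Rightarrow r_{k+1}\inc^h<\eta\,\inc\,\delta^*_m$'' uses $h\ge1$; when $h=0$ you instead need the base-case fact $r_{k+1}=\init\cdot\delta^*_{k+1}\le\frac{\inc}{2}\delta^*_m$, which you should state.
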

	\begin{proof}
		The proof is by induction on $m$. For the base case, $m=k+1$, note that after reading $P[1,k+1]$, we have $r_{k+1}=\init\cdot\delta^*_{k+1}\in \delta^*_{k+1}\cdot\left[1,\frac{\inc}{2}\right]\subseteq\delta_{m}^{*}\cdot\left[\frac{1}{1+2\eps},(1+2\eps)\cdot\inc\right]$, and $\C_{k+1}$ is a $(k,\eps,r_{k+1})$-grid-cover for $P'=P[1,k+1]$.
		
		For the induction step, suppose that $\C_{m-1}$ is a $(k,\eps,r_{m-1})$-grid-cover for a curve $P'$ where $r_{m-1}\in \delta_{m-1}^{*}\cdot \left[\frac{1}{1+2\eps},(1+2\eps)\cdot\inc\right]$ and $\dfd(P',P[1,m-1])\le\frac{2}{\inc}\cdot r_{m-1}$.
		
		If there is no leap step in round $m$, then $r_m=r_{m-1}$, and as shown in the previous subsection, \texttt{ExtendCover} returns a $(k,\eps,r_m)$-cover $\C_m$ for the curve $P'\circ P[m]$.
		We claim that the induction hypothesis holds w.r.t. $P'\circ P[m]$.
		Clearly, $\dfd(P'\circ P[m],P[1,m])\le\dfd(P',P[1,m-1])\le\frac{2}{\inc}\cdot r_{m-1}=\frac{2}{\inc}\cdot r_{m}$.
		Next, note that  $r_{m}=r_{m-1}\le(1+2\eps)\cdot\inc\cdot\delta_{m-1}^{*}\le(1+2\eps)\cdot\inc\cdot\delta_{m}^{*}$, because $\delta^*_{m-1}\le \delta^*_m$. Finally, as there was no leap step, there is some curve $W\in \C_m$ such that $\dfd(W,P'\circ P[m])\le(1+\eps)r_m$. By the triangle inequality,
		\begin{align*}
		\delta_{m}^{*}\le\dfd(W,P[1,m]) & \le\dfd(W,P'\circ P[m])+\dfd(P'\circ P[m],P[1,m])\\
		& \le(1+\eps)r_{m}+\frac{2}{\inc}\cdot r_{m}\le(1+2\eps)r_{m}~.
		\end{align*} 
		We conclude that  	$r_m\in\delta_{m}^{*}\cdot\left[\frac{1}{1+2\eps},(1+2\eps)\cdot\inc\right]$. 
		
		Next, we consider the case where round $m$ is a leap step.
		As we preformed a leap step, $\C_m=\emptyset$, so there is no grid-curve at distance at most $(1+\eps)r_{m-1}$ from $P'\circ P[m]$. Therefore, it follows from the triangle inequality
		that there is no curve at distance $r_{m-1}$ from $P'\circ P[m]$,
		as by rounding any curve we get a grid-curve within distance $\eps r_{m-1}$ from it. In particular, 
		\begin{align}
		\delta_{m}^{*}=\dfd(\Pi_{m}^{*},P[1,m]) & \ge\dfd(\Pi_{m}^{*},P'\circ P[m])-\dfd(P'\circ P[m],P[1,m])\nonumber \\
		& \ge r_{m-1}-\frac{2}{\inc}r_{m-1}=(1-\frac{2}{\inc})\cdot r_{m-1}~.\label{eq:delta*greaterrm-1}
		\end{align}
		
		Let $W_{m-1}$ be an arbitrary grid-curve at distance $(1+\eps)r_{m-1}$ from $P'$ chosen by the algorithm. The algorithm choose $r_m=r_{m-1}\cdot\inc^{h}$, such that $h\ge1$ is the minimal integer such that there is a grid-curve $W_m$ of length $k$ at distance $(1+\eps)r_m$ from $W_{m-1}\circ P[m]$. 
		The algorithm then constructs a $(k,\eps,r_m)$-grid-cover for $W_{m-1}\circ P[m]$. It holds that
		\begin{align}
		\dfd(W_{m-1}\circ P[m],P[1,m]) & \le\dfd(W_{m-1},P')+\dfd(P',P[1,m-1])\nonumber \\
		& \le(1+\eps)r_{m-1}+\frac{2}{\inc}\cdot r_{m-1}\le2\cdot r_{m-1}\le\frac{2}{\inc}r_{m}~.\label{eq:Wmbound}
		\end{align}
		It remains to prove that $r_m$ is in $\delta_{m}^{*}\cdot\left[\frac{1}{1+2\eps},(1+2\eps)\cdot\inc\right]$. Firstly, 
		\begin{align*}
		(1+\eps)r_{m} & \ge\dfd(W_{m},W_{m-1}\circ P[m])\\
		& \ge\dfd(W_{m},P[1,m])-\dfd(P[1,m],P'\circ P[m])-\dfd(P'\circ P[m],W_{m-1}\circ P[m])\\
		& \ge\delta_{m}^{*}-\frac{2}{\inc}\cdot r_{m-1}-(1+\eps)\cdot r_{m-1}~,
		\end{align*}
		where the third inequality holds as every length $k$ curve is at distance at least $\delta_m^*$ from $P[1,m]$, and the induction hypothesis. It follows that $\delta_{m}^{*}\le\left(1+\eps+\frac{3+\eps}{\inc}\right)\cdot r_{m}\le(1+2\eps)\cdot r_{m}$.
		
		For the second bound we continue by case analysis.
		If $h=1$, then $r_m=\inc\cdot r_{m-1}$, and by \cref{eq:delta*greaterrm-1}, $\delta_{m}^{*}\ge(1-\frac{2}{\inc})\cdot r_{m-1}\ge\frac{1-\eps}{\inc}\cdot r_{m}$. Thus $r_m\le (1+2\eps)\cdot\inc\cdot \delta_{m}^{*}$.
		Else, $h\ge2$ thus $r_{m-1}\cdot \inc^2\le r_m$. It follows that there is no grid curve of length $k$ at distance $(1+\eps)\cdot \frac{r_m}{\inc}$ from $W_{m-1}\circ P[m]$. In particular, there is no length $k$ curve at distance  $\frac{r_m}{\inc}$ from $W_{m-1}\circ P[m]$. Hence 
		\begin{align*}
		\delta_{m}^{*}=\dfd(\Pi_{m}^{*},P[1,m]) & \ge\dfd(\Pi_{m}^{*},W_{m-1}\circ P[m])-\dfd(W_{m-1}\circ P[m],P[1,m])\\
		& \overset{(\ref{eq:Wmbound})}{\ge}\frac{r_{m}}{\inc}-2\cdot r_{m-1}\ge\frac{r_{m}}{\inc}-\frac{2}{\inc^{2}}\cdot r_{m}\ge(1-\eps)\cdot\frac{r_{m}}{\inc}~,
		\end{align*}
		Implying $r_m\le (1+2\eps)\cdot\inc\cdot \delta_{m}^{*}$.
		The lemma now follows.
	\end{proof}

	\subsection{General distance oracle}
	The high level approach that we use here is similar to \Cref{subs:static_asymmetric}, except that the simplification and the oracles have to be computed in a streaming fashion. The main challenge is therefore that the value $L\approx\dfd(P,\Pi)$ on which the entire construction of \Cref{subs:static_asymmetric} relay upon, is unknown in advance.
	 
	The objects stored by our streaming algorithm in round $m$ are as follows: an approximated $k$-simplification $\Pi_m$ of the observed input curve $P[1,m]$, with a value $L_m$ (from \Cref{thm:StreamSimplification}), a distance oracle $\mathcal{O}_{\Pi_m}$ for $\Pi_m$ (from \Cref{thm:DOsymmetric}), and a set of $O(\log\frac{1}{\eps})$ covers of $P[1,m]$ computed by the \texttt{LeapingStreamCover} algorithm.
	
	For the sake of simplicity, we assume that there are no two equal consecutive points among the first $k+1$ points in the data stream (as we can just ignore such redundant points).
	
	\paragraph{The algorithm.}
	First, using \Cref{cor:StremingSimpEpsilon} we maintain a $(k,1+\eps)$-simplification $\Pi_m$ of $P[1,m]$ with a value $L_m$ such that 
	\begin{equation}\label{eq:L_m}
	\delta_m^*\le \dfd(P[1,m],\Pi_m)\le L_m\le (1+\eps)\delta_m^*\le(1+\eps)\dfd(P[1,m],\Pi_m)~,
	\end{equation}
	where the first and last inequalities follow as $\delta_m^*$ is the minimal distance from $P[1,m]$ to any curve of length $k$.
	In addition, at the end of each round, using \Cref{thm:DOsymmetric}, we will compute a static $(1+\eps)$-distance oracle $\mathcal{O}_{\Pi_m}$ for $\Pi_m$.
		
	\sloppy Secondly, let $t$ be the minimum integer such that $2^t\ge\frac{25}{\eps}$.
	As in the previous subsection, we set $\inc=2^t$ and $\init_i=2^i$ for $i\in[0,t-1]$. Then we run $t$ instances of \texttt{LeapingStreamingDecision} simultaneously: for every $i\in[0,t-1]$, we run  \texttt{LeapingStreamingDecision$(P,k,\eps,\init_i,\inc)$}.
	
	Denote by $\C_{i,m}$ the $(k,\eps,r_{i,m})$-cover created by the execution of \texttt{LeapingStreamingDecision$(P,k,\eps,\init_i,\inc)$} at the end of round $m$, where $r_{i,m}$ is the distance parameter of the cover $\C_{i,m}$. Note that $r_{i,m}=2^i\cdot \inc^j\cdot\delta^*_{k+1}$ for some index $j\ge0$.
	By \Cref{obs:cover} and \Cref{lem:leapingStreaming}, at the end of round $m$ we have a $(k,2\eps,r_{i,m})$-decision distance oracle $\mathcal{O}_{i,m}$ for some curve $P'$ such that $\dfd(P[1,m],P')\le\frac{2}{\inc}r_{i,m}$.
	By \Cref{lem:leapingStreaming},
	\begin{equation}
	r_{i,m}=2^i\cdot \inc^j\cdot\delta^*_{k+1}\in\left[\frac{1}{1+2\eps},(1+2\eps)\cdot\inc\right]\cdot\delta_{m}^{*}\overset{(\ref{eq:L_m})}{\subseteq}\left[\frac{1}{1+4\eps},(1+2\eps)\cdot\inc\right]\cdot L_{m}~.\label{eq:r_m_iBoundary}
	\end{equation}
	The query algorithm follows the lines in \Cref{subs:static_asymmetric}.  
	Given a query curve $Q\in\reals^{d\times m}$, we query $\mathcal{O}_{\Pi_m}$ and get a value $\Delta$ such that $\dfd(Q,\Pi_m)\le\Delta\le(1+\eps)\dfd(Q,\Pi_m)$.
	\begin{itemize}
		\item If $\Delta\ge\frac{1}{\eps}\cdot L_m$, return $(1+\eps)\Delta$.
		\item Else, if $\Delta\le 3 L_m$, let
		$j\ge0$ and $i\in[0,t-1]$ be the unique indices such that 
		${2^i\cdot \inc^j\cdot\delta^*_{k+1}\le 10\cdot L_m <  2^{i+1}\cdot \inc^j\cdot\delta^*_{k+1}}$.\footnote{\label{ftn:unique_indexes}Such indexes $i,j$ exists as $L_m\ge\delta^*_m\ge\delta^*_{k+1}$, and they are unique because $inc=2^t$.} Return $\O_{i,m}(Q)+\frac{2}{\inc}r_{i,m}$.
		\item Else, set $\alpha=\lceil\frac{\Delta}{L_m}\rceil\in[4,\lceil\frac1\eps\rceil]$, and let
		$j\ge0$ and $i\in[0,t-1]$ be the unique indices such that 
		${2^{i}\cdot \inc^j\cdot\delta^*_{k+1}\le 10\cdot\alpha L_m < 2^{i+1}\cdot \inc^j\cdot\delta^*_{k+1}}$.$^{\ref{ftn:unique_indexes}}$ Return $\O_{i,m}(Q)+\frac{2}{\inc}r_{i,m}$.
	\end{itemize}
	
	\paragraph{Correctness.} We show that in each of the above cases, the query algorithm returns a value in ${\left[1,1+O(\eps)\right]\cdot \dfd(P,Q)}$. Afterwards, the $\eps$ parameter can be adjusted accordingly. By the triangle inequality it holds that
	\begin{equation}\label{eq:triangle1}
	\dfd(P[1,m],Q) \le\dfd(Q,\Pi_{m})+\dfd(P[1,m],\Pi_{m})\le\Delta+L_{m}~,
	\end{equation}
	and
	\begin{equation}\label{eq:triangle2}
	\dfd(P[1,m],Q)\ge\dfd(Q,\Pi_{m})-\dfd(P[1,m],\Pi_{m})\ge\frac{\Delta}{1+\epsilon}-L_{m}~.
	\end{equation}
	\begin{itemize}
		\item If $\Delta\ge\frac{1}{\eps}\cdot L_m$, then $\dfd(P[1,m],Q) \overset{(\ref{eq:triangle1})}{\le}(1+\eps)\Delta$, and $\dfd(P[1,m],Q)\overset{(\ref{eq:triangle2})}{\ge}\frac{1}{1+\eps}\Delta-\eps\Delta\ge(1-2\eps)\Delta$. It follows that $(1+\eps)\Delta\le\frac{1+\eps}{1-2\eps}\cdot\dfd(P[1,m],Q)=(1+O(\eps))\cdot\dfd(P[1,m],Q)$.
	\end{itemize}
		
	For the next two cases, we first show that there exists a value $\phi$ such that $\dfd(P[1,m],Q)\in [\frac{1}{4},4]\cdot \phi$ (each case has a different $\phi$ value). Then, the rest of the analysis for both cases continues at $\clubsuit$ (as it is identical given $\phi$).
	\begin{itemize}
		\item If $\Delta\le 3 L_m$ then $\dfd(P[1,m],Q)\overset{(\ref{eq:triangle1})}{\le} 4 L_m$.
		We also have that $\dfd(P[1,m],Q)\ge\dfd(P[1,m],\Pi_{m}^{*})=\delta^*_m\overset{(\ref{eq:L_m})}{\ge}\frac{L_{m}}{1+\eps}$. Hence $\dfd(P[1,m],Q)\in [\frac{1}{1+\eps},4]\cdot L_m\subseteq [\frac{1}{4},4]\cdot L_m$.
		Set $\phi=L_m$.
		
		\item Else, $L_m<\frac{1}{3}\Delta$, and hence
		$\dfd(P[1,m],Q)\overset{(\ref{eq:triangle1})}{\le}\frac43\Delta$ and
		$\dfd(P[1,m],Q)\overset{(\ref{eq:triangle2})}{\ge}\frac{1}{1+\eps}\Delta-\frac13\Delta>\frac{1}{3}\Delta$. 
		By the definition of $\alpha$, it holds that $\Delta\le\alpha\cdot L_{m}\le(\frac{\Delta}{L_{m}}+1)\cdot L_{m}\le\frac{4}{3}\Delta$, hence $\Delta\in[\frac34,1]\cdot \alpha L_m$. Thus
		$
		\dfd(P[1,m],Q)\in[\frac{1}{3}\Delta,\frac43\Delta]\subseteq[\frac{1}{4},\frac43]\cdot \alpha L_{m}\subseteq[\frac{1}{4},4]\cdot \alpha L_{m}
		$.
		Set $\phi=\alpha L_{m}$.
	\end{itemize}			
	{\huge $\clubsuit$}
	We have $\dfd(P[1,m],Q)\in[\frac{1}{4},4]\cdot \phi$, and let $j\ge0$ and $i\in[0,s-1]$ be the unique indices such that $2^i\cdot \inc^j\cdot\delta^*_{k+1}\le 10\cdot \phi <  2^{i+1}\cdot \inc^j\cdot\delta^*_{k+1}$.$^{\ref{ftn:unique_indexes}}$
	Consider the decision distance oracle $\mathcal{O}_{i,m}$.
	If $r_{i,m}\neq 2^{i}\cdot \inc^j\cdot\delta^*_{k+1}$, then one of the followings hold in contradiction to \cref{eq:r_m_iBoundary}:
	\begin{align*}
	r_{m}^{i} & \ge2^{i}\cdot\inc^{j+1}\cdot\delta^*_{k+1}\ge\frac{\inc}{2}\cdot2^{i+1}\cdot\inc^{j}\cdot\delta^*_{k+1}\ge\frac{\inc}{2}\cdot10\cdot\phi>(1+2\eps)\cdot\inc\cdot L_{m}~,\\
	r_{m}^{i} & \le2^{i}\cdot\inc^{j-1}\cdot\delta^*_{k+1}\le\frac{1}{\inc}\cdot2^{i}\cdot\inc^{j}\cdot\delta^*_{k+1}\le\frac{1}{\inc}\cdot10\cdot\phi\le\frac{\alpha}{\inc}\cdot10\cdot L_{m}\overset{(*)}{<}\frac{1}{1+4\eps}\cdot L_{m}~,
	\end{align*}
	where the inequality $^{(*)}$ follows as $\frac{\alpha}{\inc}\cdot10\le\lceil\frac{1}{\eps}\rceil\cdot\frac{\eps}{25}\cdot10<\frac{1}{2}<\frac{1}{1+4\eps}$.
	We conclude that $r_{i,m}= 2^{i}\cdot \inc^j\cdot\delta^*_{k+1}$ and hence $r_{i,m}\le 10\cdot \phi<2\cdot r_{i,m}$.
			
	Following \Cref{lem:leapingStreaming}, let $P'$ be the curve for which $\C_{i,m}$ is an $(k,\eps,r_{i,m})$ cover. Then 
	\begin{align*}
	\dfd(P',Q) & \le\dfd(P',P[1,m])+\dfd(P[1,m],Q)\\
	& \le\frac{2}{\inc}\cdot r_{i,m}+4\phi\le\left(\frac{2}{\inc}+\frac{4}{5}\right)r_{i,m}\le r_{i,m}
	\end{align*}
	Thus $\O_{i,m}$ will return some value. Recall that our algorithm returns $\O_{i,m}(Q)+\frac{2}{\inc}r_{i,m}$. 
	It holds that 
	\[
	\O_{i,m}(Q)\ge\dfd(P',Q)\ge\dfd(P[1,m],Q)-\dfd(P',P[1,m])\ge\dfd(P[1,m],Q)-\frac{2}{\inc}r_{i,m}~.
	\]
	In addition,
	\begin{align*}
	\O_{i,m}(Q)\le(1+\eps)\dfd(P',Q)\le\dfd(P',Q)+\eps r_{i,m} & \le\dfd(P[1,m],Q)+\dfd(P',P[1,m])+\eps r_{i,m}\\
	& \le\dfd(P[1,m],Q)+\frac{2}{\inc}r_{i,m}+\eps r_{i,m}~,
	\end{align*}
	It follows that the returned value is bounded by $\dfd(P[1,m],Q)+\left(\frac{4}{\inc}+\eps\right)r_{i,m}\le\dfd(P[1,m],Q)+\left(\frac{4}{\inc}+\eps\right)10\cdot \phi=(1+O(\eps))\cdot\dfd(P[1,m],Q)$.

	\paragraph{Space and Query Time} To maintain a simplification $\Pi_m$, according to \Cref{cor:StremingSimpEpsilon} we used $O(\eps)^{-\frac{d+1}{2}}\log^{2}\eps^{-1}+O(kd\eps^{-1}\log\eps^{-1})$ space.
	The distance oracle $\mathcal{O}_{\Pi_m}$, require $O(\frac{1}{\eps})^{dk}\cdot\log\eps^{-1}$ space by \Cref{thm:DOsymmetric}. Finally, we use $O(\log \frac1\eps)$ decision distance oracles, with covers constructed by the \texttt{LeapingStreamCover} algorithm. As this algorithm simulates \texttt{StreamCover}, by \Cref{lem:StramDOdecision} the space consumption is $O(\frac{1}{\eps})^{kd}$.
	We conclude that the total space used by our streaming algorithm is $O(\frac{1}{\eps})^{dk}\cdot\log\eps^{-1}$.
	
	Regarding query time, we first query $\Pi_m$, which takes $\tilde{O}(kd)$ time (\Cref{cor:StremingSimpEpsilon}). Afterwards, we might preform another query in $O(kd)$ time (\Cref{lem:StramDOdecision}). All other computations take $O(kd)$ time. The theorem follows.
	
	\mainStreamingDO*

		\section{Distance oracle to a sub-curve and the ``Zoom-in'' problem}\label{sec:subcurve_distance_oracle}
	In this section we consider the following generalization of the distance oracle problem.

	\begin{problem}
		Given a curve $P\in \reals^{d\times m}$ and parameter $\eps>0$, preprocess $P$ into a data structure that given a query curve $Q\in \reals^{d\times k}$, and two indexes $1\le i\le j\le m$, returns an $(1+\eps)$-approximation of $\dfd(P[i,j],Q)$.
	\end{problem}
	
	A trivial solution is to store for any $1\le i\le j\le m$ a distance oracle for $P[i,j]$, then the storage space increases by a factor of $m^2$. In cases where $m$ is large, one might wish to reduce the storage space at the cost of increasing the query time or approximation factor.
	
	We begin by introducing a new problem called the ``zoom-in'' problem, which is closely related to the above problem. Our solution to the ``zoom-in'' problem will be used as a skeleton for a distance oracle to a sub-curve.
	
	\subsection{The ``zoom-in'' problem}
	When one needs to visualize a large curve, it is sometimes impossible to display all its details, and displaying a simplified curve is a natural solution. In some visualization applications, the user wants to ``zoom-in'' and see a part of the curve with the same level of details. For example, if the curve represents the historical prices of a stock, one might wish to examine the rates during a specific period of time. In such cases,  a new simplification needs to be calculated. In the following problem, we wish to construct a data structure that allows a quick zoom-in (or zoom-out) operation.
	
	\begin{problem}[Zoom-in to a curve]
		Given a curve $P\in \reals^{d\times m}$ and an integer $1\le k< m$, preprocess $P$ into a data structure that given $1\le i< j\le m$, return an optimal $k$-simplification of $P[i,j]$.
	\end{problem}

	To make the space and preprocessing time reasonable, we introduce a bi-criteria approximation version of the zoom-in problem:
	Instead of returning an optimal $k$-simplification of $P[i,j]$, the data structure will return an $(\alpha,k,\gamma)$-simplification of $P[i,j]$ (i.e. a curve $\Pi\in\reals^{d\times \alpha k}$ such that $\dfd(P[i,j],\Pi)\le \gamma\dfd(P[i,j],\Pi')$ for any $\Pi'\in \reals^{d\times k}$).
	
	We will use the two following observations.	
	\begin{observation}\label{obs:concatination}
		Let $\{P_i\}_{i=1}^{s}$, $\{Q_i\}_{i=1}^{s}$ be curves. Then $\dfd(P_1\circ P_2\circ\dots\circ P_s,Q_1\circ Q_2\circ\dots\circ Q_s)\le \max_i\{\dfd(P_i,Q_i)\}$.
	\end{observation}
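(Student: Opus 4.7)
The plan is a direct unpacking of the definition of the discrete Fréchet distance as a minimum cost over paired walks, together with the fact that paired walks can be concatenated. The observation is essentially a monotonicity/compositionality property of $\dfd$, so the proof is short and the main task is just to verify that the concatenation of walks is admissible.

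First, I would fix, for each $i \in [s]$, an optimal paired walk $\omega_i = \{(\P_j^{(i)}, \Q_j^{(i)})\}_{j=1}^{t_i}$ along $P_i$ and $Q_i$, so that its cost equals $\dfd(P_i, Q_i)$. Then I would form the concatenated sequence
\[
\omega \;=\; \omega_1 \circ \omega_2 \circ \dots \circ \omega_s,
\]
namely the sequence of pairs obtained by listing the pairs of $\omega_1$, then those of $\omega_2$, and so on. The next step is to check that $\omega$ satisfies the two conditions in the definition of a paired walk along $P_1 \circ \dots \circ P_s$ and $Q_1 \circ \dots \circ Q_s$: (a) the first (resp.\ second) coordinates of the pairs in $\omega$ form a partition of $P_1 \circ \dots \circ P_s$ (resp.\ $Q_1 \circ \dots \circ Q_s$) into consecutive non-empty sub-curves, because each $\omega_i$ already partitions $P_i$ and $Q_i$ in order and concatenation preserves the order; and (b) each pair in $\omega$ has at least one side of length one, which is inherited from the corresponding pair in some $\omega_i$.

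Finally, I would bound the cost: since the cost of a paired walk is the maximum over its pairs, and the pairs of $\omega$ are exactly the union of the pairs of the $\omega_i$'s, we have
\[
\mathrm{cost}(\omega) \;=\; \max_{i \in [s]} \mathrm{cost}(\omega_i) \;=\; \max_{i \in [s]} \dfd(P_i, Q_i).
\]
The observation then follows because $\dfd(P_1 \circ \dots \circ P_s, Q_1 \circ \dots \circ Q_s)$ is the minimum cost over all paired walks, hence in particular is at most $\mathrm{cost}(\omega)$. There is no real obstacle here; the only thing to be a bit careful about is the book-keeping in checking that $\omega$ is a legitimate paired walk, and this follows immediately from the definitions.
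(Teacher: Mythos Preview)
Your argument is correct and is exactly the natural one: concatenate optimal paired walks for the pieces and observe that the result is a valid paired walk for the concatenated curves whose cost is the maximum of the piece costs. The paper itself does not give a proof of this observation at all (it is stated without proof as self-evident), so there is nothing to compare against; your write-up simply makes explicit what the paper leaves implicit.
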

	
	\begin{observation}\label{obs:subcurve_simp}
		Let $P$ be a curve and $P'$ a sub-curve of $P$. Let $\Pi'$ be a $(k,\gamma)$-simplifications of $P'$. Then for any $\Pi\in\R^{d\times k}$ it holds that $\dfd(P',\Pi')\le\gamma\dfd(P,\Pi)$.
	\end{observation}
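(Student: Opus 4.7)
The plan is to combine two ingredients: (i) a restriction of an optimal paired walk between $P$ and $\Pi$ down to the sub-range corresponding to $P'$, and (ii) the defining property of a $(k,\gamma)$-simplification. First, I would fix an arbitrary $\Pi\in\R^{d\times k}$ and an optimal paired walk $\omega=\{(\mathcal{P}_s,\Pi_s)\}_{s=1}^t$ along $P$ and $\Pi$, whose cost equals $\dfd(P,\Pi)$. Writing $P'=P[i,j]$, let $a$ be the index of the pair whose $P$-side contains $p_i$ and $b$ the index of the pair whose $P$-side contains $p_j$.

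Next, I would construct a paired walk along $P'$ and $\Pi_{\mathrm{sub}}:=\Pi_a\circ\Pi_{a+1}\circ\cdots\circ\Pi_b$ by retaining the middle pairs $(\mathcal{P}_{a+1},\Pi_{a+1}),\dots,(\mathcal{P}_{b-1},\Pi_{b-1})$ unchanged and replacing the endpoints by the suffix of $\mathcal{P}_a$ starting at $p_i$ paired with $\Pi_a$, and the prefix of $\mathcal{P}_b$ ending at $p_j$ paired with $\Pi_b$. The only mildly delicate check is that this remains a valid paired walk: the condition ``$|\mathcal{P}_s|=1$ or $|\Pi_s|=1$'' is preserved because whenever $|\mathcal{P}_a|>1$ holds for the original pair we must have had $|\Pi_a|=1$, and shrinking $\mathcal{P}_a$ cannot break that (symmetrically for $\mathcal{P}_b$); also the trimmed $\mathcal{P}_a$ and $\mathcal{P}_b$ remain non-empty since they still contain $p_i$ and $p_j$, respectively. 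Because the cost of a paired walk is a maximum over point-pair distances, passing from $\omega$ to the restricted walk can only decrease it, yielding $\dfd(P',\Pi_{\mathrm{sub}})\le\dfd(P,\Pi)$.

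Finally, $\Pi_{\mathrm{sub}}$ is a contiguous sub-curve of $\Pi\in\R^{d\times k}$ and therefore has at most $k$ vertices. Invoking the defining property of $\Pi'$ as a $(k,\gamma)$-simplification of $P'$ with the length-at-most-$k$ witness $\Pi_{\mathrm{sub}}$ gives $\dfd(P',\Pi')\le\gamma\cdot\dfd(P',\Pi_{\mathrm{sub}})$, and chaining with the bound from the previous paragraph delivers the desired $\dfd(P',\Pi')\le\gamma\cdot\dfd(P,\Pi)$. The only real obstacle is the trimming argument for the paired walk, and the case analysis sketched above handles it.
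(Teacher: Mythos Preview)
Your proof is correct and follows essentially the same approach as the paper: restrict an optimal paired walk between $P$ and $\Pi$ to the range of $P'$, obtaining a sub-curve $\Pi_{\mathrm{sub}}$ of $\Pi$ (the paper calls it $\Pi''$) with $\dfd(P',\Pi_{\mathrm{sub}})\le\dfd(P,\Pi)$, then apply the defining property of a $(k,\gamma)$-simplification. The paper dismisses the restriction step with ``clearly,'' whereas you carefully verify that the trimmed sequence remains a valid paired walk; this extra care is warranted but does not change the argument.
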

	\begin{proof}
		Consider a paired walk $\omega$ along $P$ and $\Pi$, and let $\Pi''$ be a sub-curve of $\Pi$ that contains all the points of $\Pi$ that were matched by $\omega$ to the points of $P'$.
		Then clearly $\dfd(P',\Pi'')\le \dfd(P,\Pi)$, and by the definition of $(k,\gamma)$-simplification, $\dfd(P',\Pi')\le \gamma \dfd(P',\Pi'')\le \gamma\dfd(P,\Pi)$.
	\end{proof}
	
	Below, we present a data structure for the zooming problem with $O(mk\log\frac{m}{k})$ space, which returns $(k,1+\eps,2)$-simplifications in $O(kd)$ time.
	
	For simplicity of the presentation, we will assume that $m$ is a power of $2$ (otherwise, add to the curve $P$, $2^{\lceil\log m\rceil}-m$ copies of the point $P[m]$).
	Construct a recursive structure with $\log\frac{m}{k}$ levels as follows. The first level contains a $(k,1+\eps,1)$-simplifications of $P[i,\frac{m}{2}]$ for any $1\le i\le \frac{m}{2}$,
	and $P[\frac{m}{2}+1,j]$ for any $\frac{m}{2}+1\le j\le m$. In the second level, we recurs with $P[1,\frac{m}{2}]$ and $P[\frac{m}{2}+1,m]$. The $i$th level corresponds to $2^i$ sets of simplifications, each set corresponds to a sub-curve of length $\frac{m}{2^i}$. In the last level, the length of the corresponding sub-curves is at most $k$.
	The total space of the data structure is $O(mkd\log\frac{m}{k})$, this is as each point $P[i]$ is responsible for a single simplification (a curve in $\R^{d\times k}$) in $\log\frac{m}{k}$ different levels.
	As all the curves at the $i$ level of the recursion have length $\frac{m}{2^i}$, using \Cref{thm:SimplificationHighDim}, the preprocessing time is  
	\[
	\sum_{i=1}^{\log\frac{m}{k}}m\cdot\tilde{O}(\frac{m}{2^{i}}\cdot\frac{d}{\eps^{4.5}})=\tilde{O}(m^{2}d\eps^{-4.5})~.
	\]
	If $d$ is fixed, then according to \Cref{thm:SimplificationHighDim}, the preprocessing time will be 
	\[
	\sum_{i=1}^{\log\frac{m}{k}}m\cdot O\left(\frac{m}{2^{i}}\cdot(\frac{1}{\eps}+\log\frac{m}{2^{i}\eps}\log\frac{m}{2^{i}})\right)=O\left(m^{2}\cdot(\frac{1}{\eps}+\log\frac{m}{\eps}\log m)\right)=\tilde{O}(\frac{m^{2}}{\eps})~.
	\]
	
	Given two indexes $1\le i< j\le m$, if $j-i\le k$, simply return $P[i,j]$. Else, let $t$ be the smallest integer such that $i\le x\cdot \frac{m}{2^t}< j$ for some $x\in [2^{t-1}]$. 
	Let $\Pi_1$ and $\Pi_2$ be the simplifications of $P[i, x\cdot \frac{m}{2^t}]$ and $P[ x\cdot \frac{m}{2^t}+1,j]$, respectively. Return the concatenation $\Pi_1\circ\Pi_2$.
	
	We argue that $\Pi_1\circ\Pi_2$ is a $(k,1+\eps,2)$-simplification of $P[i,j]$. Indeed, let $\Pi\in\R^{d\times k}$ be an arbitrary length $k$ curve.  By \Cref{obs:subcurve_simp}, we have $\dfd(P[i, x\cdot \frac{m}{2^t}],\Pi_1)\le(1+\eps)\dfd(P[i,j],\Pi)$ and $\dfd(P[x\cdot\frac{m}{2^t}+1,j],\Pi_2)\le(1+\eps)\dfd(P[i,j],\Pi)$. By \Cref{obs:concatination} we conclude that $\dfd(P[i,j],\Pi_1\circ\Pi_2)\le(1+\eps)\dfd(P[i,j],\Pi)$.
	
	\zoomin*

	\subsection{$(1+\eps)$-factor distance oracle to a sub-curve}
	Notice that as described in \Cref{obs:trivial_oracle}, by the triangle inequality, a solution to the zooming problem can be used in order to answer approximate distance queries to a sub-curve in $O(k^2d)$ time. However, the approximation factor will be constant.	
	
	Our simplification for $P[i,j]$ is obtained by finding a partition of $P[i,j]$ into two disjoint sub-curves, for which we precomputed an $(k,1+\eps)$-simplifications. To achieve a $(1+\eps)$ approximation factor, instead of storing $(k,1+\eps)$-simplifications, we will store distance oracles that will be associated with the same set of sub-curves, and then find an optimal matching between the query $Q$ and a partition of $P[i,j]$.
	
	Let $\mathcal{O}$ be a $(1+\eps)$-distance oracle with storage space $S(m,k,d)$, query time $T(m,k,d)$, and $PT(m,k,d)$ expected preprocessing time. Using \Cref{thm:DOasymmetric} we can obtain $S(m,k,d)=O(\frac{1}{\eps})^{dk}\cdot\log\eps^{-1}$, $T(m,k,d)=\tilde{O}(kd)$, and  $PT(m,k,d)=m\log\frac{1}{\eps}\cdot\left(O(\frac{1}{\eps})^{kd}+O(d\log m)\right)$.

	Given two indexes $1\le i< j\le m$, if $j-i\le k$, simply compute and return $\dfd(P[i,j],Q)$ in $O(k^2d)$ time. 
	Else, let $t$ and $x$ be the integers as in the previous subsection, set $y=x\cdot \frac{m}{2^t}$, and let $\mathcal{O}_1$ and $\mathcal{O}_2$ be distance oracles for $P[i,y]$ and $P[y+1,j]$, respectively. Return 
	\begin{align*}
	\tilde{\Delta}=\min \{ 
	&\min_{1\le q\le k} \max\{\mathcal{O}_1(Q[1,q]),\mathcal{O}_2(Q[q,k])\},\\
	&\min_{1\le q\le k-1} \max\{\mathcal{O}_1(Q[1,q]),\mathcal{O}_2(Q[q+1,k])\} \}.
	\end{align*}
	The query time is therefore $O(k^2d + k\cdot T(m,k,d))=\tilde{O}(k^2d)$. The storage space is
	$m\log\frac{m}{k}\cdot S(m,k,d)=m\log m\cdot O(\frac{1}{\eps})^{dk}\cdot\log\frac1\eps$ because we construct $m\log\frac{m}{k}$ distance oracles (instead of simplifications).
	The expected preprocessing time is 
	\begin{align*}
	\sum_{i=1}^{\log\frac{m}{k}}m\cdot PT(\frac{m}{2^{i}},k,d) & =\sum_{i=1}^{\log\frac{m}{k}}m\cdot\frac{m}{2^{i}}\log\frac{1}{\eps}\cdot\left(O(\frac{1}{\eps})^{kd}+O(d\log\frac{m}{2^{i}})\right)\\
	& =m^{2}\log\frac{1}{\eps}\cdot\left(O(\frac{1}{\eps})^{kd}+O(d\log m)\right)~.
	\end{align*}
	
	\paragraph{Correctness.} We argue that $\dfd(P[i,j],Q)\le\tilde{\Delta}\le(1+\eps)\dfd(P[i,j],Q)$.
	
	\noindent
	If $\tilde{\Delta}=\min_{1\le q\le k} \max\{\mathcal{O}_1(Q[1,q]),\mathcal{O}_2(Q[q,k])\}$, then $$\tilde{\Delta}\ge\max\{\dfd(P[i,y],Q[1,q]),\dfd(P[y+1,j],Q[q,k])\}\ge\dfd(P[i,j],Q).$$Similarly, if $\tilde{\Delta}=\min_{1\le q\le k-1} \max\{\mathcal{O}_1(Q[1,q]),\mathcal{O}_2(Q[q+1,k])\}$ then 
	$$\tilde{\Delta}\ge \min_{1\le q\le k-1} \max\{\mathcal{O}_1(Q[1,q]),\mathcal{O}_2(Q[q+1,k])\}\ge \dfd(P[i,j],Q).$$
	Consider an optimal paired walk $\omega$ along $P[i,j]$ and $Q$, and let $1\le q\le k$ be the maximum index such that $\omega$ matches $P[y]$ and $Q[q]$.
	If $\omega$ matches $P[y+1]$ and $Q[q]$ then $$\dfd(P[i,j],Q)=\max\{\dfd(P[i,y],Q[1,q]),\dfd(P[y+1,j],Q[q,k])\}$$ and therefore $\max\{\mathcal{O}_1(Q[1,q]),\mathcal{O}_2(Q[q,k])\}\le (1+\eps)\dfd(P[i,j],Q)$.
	Else, it must be that $\omega$ matches $P[y+1]$ and $Q[q+1]$ (due to the maximality of $q$) and then $$\dfd(P[i,j],Q)=\max\{\dfd(P[i,y],Q[1,q]),\dfd(P[y+1,j],Q[q+1,k])\}$$ and therefore $\max\{\mathcal{O}_1(Q[1,q]),\mathcal{O}_2(Q[q+1,k])\}\le (1+\eps)\dfd(P[i,j],Q)$. 
	We conclude that $\tilde{\Delta}\le(1+\eps)\dfd(P[i,j],Q)$.
	
	\DOsubCurve*

		\section{High dimensional discrete \frechet\ algorithms}\label{sec:high_d_optimization}
	Most of the algorithms for curves under the (discrete) \frechet\ distance that were proposed in the literature, were only presented for constant or low dimension.
	The reason being is that it is often the case that the running time scale exponentially with the dimension (this  phenomena usually referred to as ``the curse of dimensionality'').
	
	In this section we provide a basic tool for finding a small set of critical values in $d$-dimensional space, and show how to apply it for tasks concerning approximation algorithms for curves under the discrete \frechet\ distance.
	While algorithms for those tasks exist in low dimensions, their generalization to high dimensional Euclidean space either do no exist or suffer from exponential dependence on the dimension.

	Chan and Rahmati~\cite{CR18} (improving Bringmann and Mulzer~\cite{BM16}) presented an algorithm that given two curves $P$ and $Q$ in $\reals^{d\times m}$, and a value $1\le f \le m$, finds a value $\tilde{\Delta}$ such that $\dfd(P,Q)\le\tilde{\Delta}\le f\dfd(P,Q)$, in time $O(m\log m+m^2/f^2)\cdot\exp(d)$. 
	Actually, their algorithm consist of two part: decision and optimization.
	Fortunately, the decision algorithm is only polynomial in $d$:
	\begin{theorem}[\cite{CR18}]\label{thm:approximation_desicion}
		Given two curves $P$ and $Q$ in $\reals^{d\times m}$, there exists an algorithm with running time $O(md+(md/f)^2d)$ that returns YES if $\dfd(P,Q)\le 1$, and NO if $\dfd(P,Q)\ge f$; If $1\le \dfd(P,Q)\le f$, the algorithm may return either YES or NO.
	\end{theorem}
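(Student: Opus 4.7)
The plan is to implement the $f$-gap decision by adapting the sparse free-space dynamic program of \cite{BM16,CR18} so that the dependence on the ambient dimension $d$ is polynomial rather than exponential. At a high level, the algorithm first coarsens each of the two curves in linear time, then runs a reachability computation on a carefully restricted subset of the free-space diagram of the coarsened curves, and finally uses the triangle inequality to translate the coarse answer back to the original curves.

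For the coarsening I would invoke the greedy streaming $\delta$-simplification of \Cref{alg:GreedySimp} with $\delta=f/8$, using the simple $2\text{-}\MEB$ primitive from \Cref{subsec:MEB}; this produces simplifications $P'$ and $Q'$ of $P$ and $Q$ with $\dfd(P,P'),\dfd(Q,Q')\le f/8$ in a single linear-time pass costing $O(md)$. A side property of the greedy procedure (\Cref{clm:greedy_iteration}) is that any two consecutive vertices of $P'$ (resp.\ of $Q'$) are at Euclidean distance strictly greater than $f/8$. For the reachability subroutine I would then decide whether $(m_1,m_2)$ is monotonically reachable from $(1,1)$ in the threshold free-space $F_\tau=\{(i,j) : \|P'[i]-Q'[j]\|\le \tau\}$ with $\tau=1+f/4$, using the classical Eiter--Mannila monotone-path recurrence restricted to $F_\tau$. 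By the triangle inequality, $\dfd(P',Q')$ differs from $\dfd(P,Q)$ by at most $f/4$; hence (assuming $f\ge 3$, say; otherwise one reduces to exact computation in $O(m^2 d)$) $\dfd(P,Q)\le 1$ implies reachability in $F_\tau$, giving the YES answer, and $\dfd(P,Q)\ge f$ implies non-reachability, giving the NO answer.

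The hard part is bounding $|F_\tau|$ by $O((md/f)^2)$ with only a polynomial (here linear) dependence on $d$ in the per-cell cost. A naive Euclidean-volume packing bound gives $(\tau/(f/8))^d$, which is exponential in $d$ and therefore forbidden. The trick, following \cite{CR18}, is an amortised charging argument that replaces the packing estimate: explored cells are charged to sub-walks of length $\Omega(f)$ along the simplified curves, and the total charge is counted by traversing each curve once, yielding a dimension-free bound of $O((md/f)^2)$. Each of these cells contributes a single Euclidean distance evaluation in $O(d)$ time, so the DP phase costs $O((md/f)^2\cdot d)$; combined with the $O(md)$ coarsening pass this gives the stated total $O(md+(md/f)^2\cdot d)$.
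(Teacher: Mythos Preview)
This theorem is quoted from \cite{CR18} and carries no proof in the present paper; there is therefore no ``paper's own proof'' to compare against. What can be assessed is whether your sketch stands on its own.

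The setup and correctness parts are fine: greedy $\delta$-simplification with $\delta=f/8$ and the $2\text{-}\MEB$ primitive does give $P',Q'$ with $\dfd(P,P'),\dfd(Q,Q')\le f/8$ and consecutive vertices more than $f/8$ apart, and the triangle-inequality reduction to reachability in $F_\tau$ with $\tau=1+f/4$ is standard and correct.

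The running-time analysis, however, has a real gap. Your central claim is that the number of explored free-space cells is $O((md/f)^2)$, justified only by an unspecified ``amortised charging argument.'' This claim is false as stated. Take $a,b\in\R^d$ with $\|a-b\|\in(f/8,f/4]$ and let $P=Q=(a,b,a,b,\dots)$ of length $m$. The greedy $f/8$-simplification leaves both curves unchanged (every new point is farther than $\delta$ from the current centre), so $|P'|=|Q'|=m$; and since $\|a-b\|\le f/4\le\tau$, \emph{every} pair $(i,j)$ lies in $F_\tau$ and is reachable from $(1,1)$. Hence $|F_\tau|=m^2$, which for $f>d$ exceeds $(md/f)^2$. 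The ``consecutive vertices are $\Omega(f)$ apart'' property bounds the Euclidean length of sub-walks, not the number of vertices that can sit inside a ball of radius $\Theta(f)$; a curve can wiggle indefinitely inside such a ball.

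In other words, your simplification does not reduce the size of the free-space diagram at all in the worst case, so the sparse DP does not achieve the stated bound. Whatever mechanism in \cite{CR18} makes the decision procedure polynomial in $d$, it is not captured by ``simplify once with $\delta=\Theta(f)$ and run DP on $F_\tau$ with $\tau=\Theta(f)$''; you would need either a different simplification that actually bounds $|P'|,|Q'|$ by $O(md/f)$, or a genuinely different way to restrict the DP to $O((md/f)^2)$ cells.
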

	The optimization procedure is the one presented by Bringmann and Mulzer in \cite{BM16}, which adds an $O(m\log m)$ additive factor to the running time (for constant $d$). However, the running time of the optimization procedure depends exponentially on $d$.
	In \Cref{thm:approximation} we show that the exponential factor in the running time can be removed without affecting the approximation factor.
	\CrudeFreshetApproxHighDim*

	Bereg et. al.~\cite{BJWYZ08} presented an algorithm that computes in $O(mk\log m \log(m/k))$ time an optimal $k$-simplification of a curve $P\in\reals^{3\times m}$.
	In \Cref{thm:SimplificationHighDim} we improve the running time and generalize this result to arbitrary high dimension $d$, while allowing a $1+\eps$ approximation.
	Note that \cite{BJWYZ08} works only for dimension $d\le 3$, and for $k=\Omega(m)$ has quadratic running time, while our algorithm runs in essentially linear time (up to a polynomial dependence in $\eps$), for arbitrarily large dimension $d$.
	\SimplificationHighDim*

	The algorithms for both \Cref{thm:approximation} and \Cref{thm:SimplificationHighDim} use the following lemma.  
	\begin{lemma}\label{lem:allDistances}
		Consider a set $V$ of $n$ points in $\R^d$ and an interval $[a,b]\subset\R_+$. Then for every parameter $O\left(nd\cdot\left(\log \textsl{}n+\frac{1}{\eps}\log(\frac{b}{a}d)\right)\right)$ time algorithm, that returns a set $M\subset\R_+$ of $O(\frac{n}{\epsilon}d\log(d\frac{b}{a}))$ numbers such that for every pair of points $x,y\in V$ and a real number $\beta\in[a,b]$, there is a number $\alpha\in M$ such that $\alpha\le \beta\cdot\|x-y\|_2\le (1+\eps)\cdot \alpha$.
	\end{lemma}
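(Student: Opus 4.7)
The plan is to output a union $M = \bigcup_{e \in B} \{(1+\eps/2)^j \cdot e : j \in J_e\}$, where $B$ is a compact ``base'' set of distances that spans all pairwise Euclidean distances up to a bounded multiplicative factor, and for each $e \in B$ the integer set $J_e$ indexes a geometric progression that covers every value $\beta\|x-y\|_2$ (for $\beta \in [a,b]$) arising from a pair $(x,y)$ represented by $e$. I would construct $B$ coordinate by coordinate: for each $i \in [d]$, sort $\{x_i\}_{x \in V}$ in $O(n\log n)$ time and add the $n-1$ consecutive-rank differences $e_{k,i} = x_{(k+1),i} - x_{(k),i}$ to $B$. This yields $|B| = O(nd)$, with total sorting time $O(nd \log n)$.

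For the coverage property I would use the standard $\ell_\infty/\ell_2$ sandwich. Fix an arbitrary pair $x \ne y$ and let $i^* = \argmax_i |x_i - y_i|$, so that $\|x-y\|_\infty = |x_{i^*}-y_{i^*}|$ and $\|x-y\|_2 \in [\|x-y\|_\infty,\, \sqrt d\,\|x-y\|_\infty]$. Writing $|x_{i^*}-y_{i^*}|$ as a sum of the consecutive $e_{k,i^*}$'s within the appropriate rank range, there is an index $k^*$ with $e_{k^*,i^*} \le |x_{i^*}-y_{i^*}| \le n\cdot e_{k^*,i^*}$, and hence $\|x-y\|_2 \in [e_{k^*,i^*},\, n\sqrt d\, e_{k^*,i^*}]$. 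Consequently $\beta\|x-y\|_2 \in [a\cdot e_{k^*,i^*},\, bn\sqrt d\cdot e_{k^*,i^*}]$, and choosing $J_{e_{k^*,i^*}}$ to discretize this range at multiplicative step $(1+\eps/2)$ guarantees some $\alpha \in M$ with $\alpha \le \beta\|x-y\|_2 \le (1+\eps/2)\alpha$; rescaling $\eps$ by a constant gives the claimed $(1+\eps)$ approximation.

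The main obstacle is matching the stated output size $O(\frac{nd}{\eps}\log(db/a))$ exactly rather than the slightly looser $O(\frac{nd}{\eps}\log(ndb/a))$ that the crude sum-of-differences argument yields: the factor $n$ in $|x_{i^*}-y_{i^*}| \le n\cdot e_{k^*,i^*}$ injects a spurious $\log n$ into the scale range of each base. I would remove it by augmenting $B$ with dyadic-gap differences $x_{(k+2^j),i} - x_{(k),i}$ per coordinate, so that each pairwise rank range contains a recorded dyadic subsegment of at least half its length, tightening the factor-$n$ slack to a constant; an equivalent fix is a canonical balanced-BST decomposition of each sorted coordinate. Either refinement keeps $|B| = O(nd)$ asymptotically, makes $|J_e| = O(\log(db/a)/\eps)$ and hence $|M| = O(\frac{nd}{\eps}\log(db/a))$, and the total preprocessing time becomes $O(nd\log n)$ for the sorts plus $O(|M|)$ for materializing the scales, matching the claimed $O(nd\log n + \frac{nd}{\eps}\log(db/a))$ budget.
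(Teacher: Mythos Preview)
Your high-level plan---reduce to coordinates via the $\ell_\infty/\ell_2$ sandwich, build a linear-size set of ``base'' distances per coordinate, then lay down a geometric progression over a range of size $\mathrm{poly}(d,b/a)$---is exactly the paper's strategy. You also correctly isolate the only real difficulty: getting a base set of size $O(n)$ per coordinate that approximates every pairwise coordinate gap to within a \emph{constant} factor (so that the progression length is $O(\frac{1}{\eps}\log(db/a))$ rather than $O(\frac{1}{\eps}\log(ndb/a))$).

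The gap is in your proposed fixes. For the ``all dyadic-gap differences'' $x_{(k+2^j),i}-x_{(k),i}$, your covering argument (the two length-$2^j$ windows anchored at $k_1$ and at $k_2$ with $2^j\le L<2^{j+1}$) is correct and does give a factor-$2$ slack, but it requires \emph{all} starting positions $k$, so the count is $\sum_{j}(n-2^j)=\Theta(n\log n)$ per coordinate, not $O(n)$. For the ``canonical balanced-BST decomposition'' (dyadic intervals aligned to powers of two), you do get $O(n)$ segments, but the claim that a contained segment of at least half the \emph{rank} length captures a constant fraction of the \emph{value} gap is false: take sorted values $0,\dots,0,1,\dots,1$ with the jump at rank $n/2$, and query the range $[2,n-1]$; every aligned dyadic subsegment lies entirely on one side of the jump and has value-difference $0$. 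Rank proportions simply do not control value proportions. Either way you end up with an extra $\log n$, missing the stated $|M|=O(\frac{nd}{\eps}\log(db/a))$ and the matching time bound.

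The paper closes this gap by using a $1$-dimensional well-separated pair decomposition with constant separation on each coordinate set $V_i=\{x_i:x\in V\}$: a $\frac{1}{\delta}$-WSPD in $\R^1$ has $O(n/\delta)$ pairs, and for every $p,q\in V_i$ the representative distance of the pair containing $(p,q)$ approximates $|p-q|$ to within a factor $1+2\delta$. With $\delta=\tfrac12$ this gives $O(n)$ base values per coordinate and a constant (factor~$2$) slack, after which the $\ell_\infty/\ell_2$ bound yields $\tfrac12\delta_i\le\|x-y\|_2\le 2\sqrt d\,\delta_i$ and the progression has length $O(\frac{1}{\eps}\log(db/a))$, exactly matching the claimed size and time. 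If you replace your dyadic step with a 1D WSPD (or any linear-size structure with the same pairwise constant-factor guarantee), your proof goes through.
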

	\begin{proof}
		For every $i\in [d]$, denote by $x_i$ the $i$'th coordinate of a point $x$, and let ${V_i=\{x_i\mid x\in V\}\subset\R}$. Set $\delta=\frac12$. We construct a $\frac1\delta$-WSPD (well separated pair decomposition) $\mathcal{W}_i$ for $V_i$. Specifically, $\mathcal{W}_i=\left\{\{A_1,B_1\},\dots,\{A_s,B_s\}\right\}$ is a set of $s\le\frac n\delta$ pairs of sets $A_j,B_j\subseteq V_i$ such that for every $x,y\in V_i$, there is a pair $\{A_j,B_j\}\in\mathcal{W}_i$ such that $x\in A_j$ and $y\in B_j$ (or vice versa), and for every $j\in[s]$, $\max\{\diam(A_j),\diam(B_j)\}\le\delta\cdot d(A_j,B_j)$, where $ d(A_j,B_j)=\min_{x\in A_j,y\in B_j}|x-y|$. Such a WSPD exists, and it can be constructed in $O(n\log n+\frac n\delta)$ time (see e.g. \cite{HarPeledBook11}, Theorem 3.10).

	Observe that by the definition of WSPD, and the triangle inequality, for any $\{A,B\}\in\mathcal{W}_i$ and two points $p\in A$ and $q\in B$, it holds that
	\begin{equation}\label{eq:wspd}
	d(A,B)\le\left|p-q\right|\le d(A,B)+\diam(A)+\diam(B)\le(1+2\delta)\cdot d(A,B)=2\cdot d(A,B)
	\end{equation}

	For each set $\mathcal{W}_i$ and pair $\{A,B\}\in \mathcal{W}_i$, pick some arbitrary points $x'\in A$ and $y'\in B$, and set $\delta_i=|x'-y'|$. By \Cref{eq:wspd}, 
	$d(A,B)\le \delta_i \le 2\cdot d(A,B)$.	
	
	Now for each $1\le i\le d$ set $$M_{i}=\left\{\delta_i \cdot(1+\eps)^{q}\mid\{A,B\}\in\mathcal{W}_{i},~q\in\left[\lfloor\log_{1+\eps}\frac a2\rfloor,\lfloor\log_{1+\eps}(2b\sqrt{d})\rfloor\right]\right\},$$
	and let $M=\bigcup_{i=1}^d M_i$.
	We argue that the set $M$ satisfies the condition of the lemma. Note that indeed $|M|\le d\cdot\frac{n}{\delta}\cdot\log_{1+\epsilon}(\frac{4b}{a}\sqrt{d})=O(\frac{n}{\epsilon}d\log(d\cdot\frac{b}{a}))$. Further, the construction time is 
	\[
	O(n\log n+\frac{n}{\delta})\cdot d+O(d\cdot\frac{n}{\delta}\cdot\log_{1+\epsilon}(\frac{4b}{a}\sqrt{d}))=O\left(nd\cdot\left(\log n+\frac{1}{\eps}\log(\frac{b}{a}d)\right)\right)~.
	\]
	
	Consider some pair $x,y\in V$ and let $i$ be the coordinate where $|x_i-y_i|$ is maximized. Then
	$
	\left|x_{i}-y_{i}\right|=\|x-y\|_{\infty}\le\|x-y\|_{2}\le\sqrt{d}\cdot\|x-y\|_{\infty}
	$.
	Let $\{A,B\}\in\mathcal{W}_i$ be a pair such that $x_{i}\in A$, and $y_{i}\in B$. By \Cref{eq:wspd}, 
	$d(A,B)\le \left|x_{i}-y_{i}\right| \le 2\cdot d(A,B)$, and therefore $\frac12 \delta_i\le \left|x_{i}-y_{i}\right| \le 2\delta_i$.	
	
	We conclude that $\frac12\delta_i\le\|x-y\|_{2}\le2\sqrt{d}\cdot \delta_i$. It follows that for every real parameter $\beta\in[a,b]$, there is a unique integer $q\in\left[\lfloor\log_{1+\eps}\frac{a}{2}\rfloor,\lfloor\log_{1+\eps}(2b\sqrt{d})\rfloor\right]$ such that		
	$$(1+\epsilon)^{q}\cdot \delta_i\le\beta\|x-y\|_{2}\le (1+\epsilon)^{q+1}\cdot \delta_i.$$ As $(1+\epsilon)^{q}\cdot \delta_i\in M$, the lemma follows.
	\end{proof}

	\subsection{Approximation algorithm: proof of \Cref{thm:approximation}}\label{subsec:m-approx}
	First, if $f\le 2$, we compute $\dfd(P,Q)$ exactly in $O(m^2d)$ time.
	Otherwise, we set $f'=f/2$.
	
	Next, we apply the algorithm from \Cref{lem:allDistances} for the points of $P\cup Q$ with parameter $\eps=\frac12$ and interval $[1,1]$, to obtain a set $M$ of $O(md\log d)$ scalars which is constructed in $O(md\log (md))$ time.
	Notice that there exists two points $x\in P$ and $y\in Q$ such that $\dfd(P,Q)=\|x-y\|$. Therefore, there exists $\alpha^*\in M$ such that $\alpha^*\le \dfd(P,Q) \le \frac32 \alpha^*<2\alpha^*$.
	
	Then, we sort the numbers in $M$ (in $O(md\log (md)\log d)$ time), and let $\alpha_1,\dots,\alpha_{|M|}$ be the sorted list of scalars. We call $\alpha_i$ a YES-entry if the algorithm from \Cref{thm:approximation_desicion} returns YES on the input $f'$ and $P,Q$ scaled by $2\alpha_i$, and NO-entry if it returns NO on this input.
	Notice that any $\alpha_i$ must be a NO-entry if $2\alpha_i\cdot f'\le\dfd(P,Q)$, and any $\alpha_i$ must be a YES-entry if $2\alpha_i\ge\dfd(P,Q)$. Moreover, $\alpha_{|M|}$ must be a YES-entry because $\dfd(P,Q) \le 2\alpha^*\le2\alpha_{|M|}$.
	
	If $\alpha_1$ is a YES-query, then $\dfd(P,Q)\le 2\alpha_1f'$.
	We return $\tilde{\Delta}=2\alpha_1f'$, and as $\alpha_1\le\alpha^*\le \dfd(P,Q)$  we get that $\dfd(P,Q)\le\tilde{\Delta}=2\alpha_1f'\le 2f'\dfd(P,Q)=f\dfd(P,Q)$.
	
	Else, using binary search on $\alpha_1,\dots,\alpha_{|M|}$, we find some $i$ such that $\alpha_i$ is a YES-entry and $\alpha_{i-1}$ is a NO-entry, and return $\tilde{\Delta}=2\alpha_if'$.
	Since $\alpha_i$ is a YES-entry, we have $\dfd(P,Q)\le 2\alpha_if'$.
	As $\alpha_{i-1}$ is a NO-entry, we have $\dfd(P,Q)> 2\alpha_{i-1}$, and thus $\alpha^*> \alpha_{i-1}$ and $\alpha^*\ge\alpha_i \ge \alpha_{i-1}$, so $\alpha_i\le\alpha^*\le \dfd(P,Q)$ and we get that $\dfd(P,Q)\le\tilde{\Delta}=2\alpha_if'\le 2f'\dfd(P,Q)=f\dfd(P,Q)$.
	This search takes $\log(md)\cdot O(md+(md/f)^{2}d)$ time.
	\qed

	\subsection{Computing a $(k,1+\eps)$-simplification: proof of \Cref{thm:SimplificationHighDim}}\label{sec:approx-simp}
	Bereg et al.~\cite{BJWYZ08} presented an algorithm that for constant $d$ computes an optimal $\delta$-simplification (this is a simple greedy simplification using Megiddo \cite{Meg84} linear time minimum enclosing ball algorithm). 
	The authors and Katz \cite{FFK20}, generalized this algorithm to high dimension $d$ by providing an algorithm, that given a scalar $\delta$, computes an approximation to the optimal $\delta$-simplification.
	\begin{lemma}[\cite{FFK20}]\label{lem:optrsimplification}
		Let $C$ be a curve consisting of $m$ points in $\R^d$. Given parameters $r>0$, and $\eps\in(0,1]$, there exists an algorithm that runs in $O\left(\frac{d\cdot m\log m}{\eps}+m\cdot\eps^{-4.5}\log\frac{1}{\eps}\right)$ time and returns a curve $\Pi$ such that $\dfd(C,\Pi)\le (1+\eps)r$. Furthermore, for every curve $\Pi'$ with $|\Pi'|<|\Pi|$, it holds that $\dfd(C,\Pi')>r$.
	\end{lemma}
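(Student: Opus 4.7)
The plan is to binary-search over a geometric set of candidate radii produced by \Cref{lem:allDistances}, looking for the smallest $r$ at which the greedy $r$-simplification of \Cref{lem:optrsimplification} returns at most $k$ vertices. The returned objects are that curve $\Pi$ and the estimate $\delta := (1+\eps')r_{\min}$, where $\eps'$ is a scaled-down copy of $\eps$.

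First I would characterise $\delta^*$ in terms of vertex distances of $P$. Fix an optimal $k$-simplification $\Pi^*$ together with an optimal one-to-many paired walk; this walk partitions $P$ into contiguous blocks $S_1,\dots,S_k$, and the optimal choice of $\Pi^*[j]$ is the centre of the minimum enclosing ball of $S_j$, so $\delta^* = \max_j r_j$ where $r_j$ is the MEB radius of $S_j$. The MEB radius of any set lies in $[\diam(S)/2,\diam(S)]$, and $\diam(S_{j^*})=\|p-q\|$ for the argmax $j^*$ and some pair $p,q\in S_{j^*}\subseteq P$, so $\delta^* = \beta\,\|p-q\|$ with $\beta\in[\tfrac12,1]$. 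This is the regime of \Cref{lem:allDistances} with $V=\{P[1],\dots,P[m]\}$ and $[a,b]=[\tfrac12,1]$; applied with accuracy $\eps':=\Theta(\eps)$ it produces, in $\tilde O(md/\eps)$ time, a geometric set $M$ of size $O((m/\eps)\,d\log d)$ containing some $\alpha\le\delta^*\le(1+\eps')\alpha$, and hence (by the geometric-progression structure of $M$ with ratio $1+\eps'$) also an element $\alpha'\in[\delta^*,(1+\eps')\delta^*]$.

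Next I would sort $M$ and binary-search for the smallest $r_{\min}\in M$ at which \Cref{lem:optrsimplification} returns at most $k$ vertices. Monotonicity follows directly from the two clauses of that lemma: if $r\ge\delta^*$ then $\Pi^*$ certifies a length-$k$ simplification within distance $r$, so the minimality clause forces the greedy output to have at most $|\Pi^*|\le k$ vertices; conversely, if the greedy output exceeds $k$ vertices then every length-$k$ curve is at distance $>r$ from $P$, so $\delta^*>r$. The element $\alpha'$ witnesses $r_{\min}\le\alpha'\le(1+\eps')\delta^*$, and the greedy guarantee then yields $\dfd(P,\Pi)\le(1+\eps')r_{\min}\le(1+\eps')^2\delta^*$; choosing $\eps'$ so that $(1+\eps')^2\le 1+\eps$ closes the correctness argument, and $\delta:=(1+\eps')r_{\min}$ is the returned distance estimate.

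For the running time, building $M$ costs $\tilde O(md/\eps)$, and the $O(\log(md/\eps))$ greedy invocations cost $O\bigl(\tfrac{md\log m}{\eps}+m\eps^{-4.5}\log\eps^{-1}\bigr)$ each by \Cref{lem:optrsimplification}, for a total of $\tilde O(md/\eps^{4.5})$. For fixed $d$ I would substitute the exact $O(m\log m)$ greedy $\delta$-simplification based on Megiddo's linear-time MEB (as in \cite{BJWYZ08}); then $|M|$ collapses to $O(m/\eps)$ and builds in $O(m\log m+m/\eps)$ time, yielding the announced $m\cdot O(\eps^{-1}+\log m\log(m/\eps))$ bound. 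The main obstacle is conceptual rather than computational: the vertices of $\Pi^*$ live at arbitrary points of $\R^d$, so one cannot hope to directly approximate $\delta^*$ by vertex-pair distances of $P$; the MEB-radius-versus-diameter estimate is the crucial step that unlocks \Cref{lem:allDistances} and dictates the choice $[a,b]=[\tfrac12,1]$.
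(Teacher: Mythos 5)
Your proposal does not prove the stated lemma; it proves a different result --- essentially \Cref{thm:SimplificationHighDim} --- while invoking the stated lemma as a black box. \Cref{lem:optrsimplification} is the inner decision procedure: for a \emph{fixed} radius $r$, produce in $O(\frac{dm\log m}{\eps}+m\eps^{-4.5}\log\frac1\eps)$ time a curve $\Pi$ with $\dfd(C,\Pi)\le(1+\eps)r$ such that every strictly shorter curve is at distance $>r$ from $C$. Your text explicitly refers to ``the greedy $r$-simplification of \Cref{lem:optrsimplification}'' and binary-searches over the radii at which that lemma ``returns at most $k$ vertices,'' so the object you are asked to construct is assumed throughout; the argument is circular with respect to the target. (What you wrote is, in substance, the paper's proof of \Cref{thm:SimplificationHighDim}, including the MEB-radius-versus-diameter observation; the only deviation is that you use the crude range $\beta\in[\frac12,1]$ where the paper sharpens to $[\frac12,\frac{1}{\sqrt2}]$ via Jung's theorem, which is immaterial there.)

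A proof of the lemma itself must supply two things your write-up never touches. First, the algorithm: greedily scan $C$, maintaining a $(1+\eps)$-approximate minimum enclosing ball of the current block (e.g.\ the coreset-based algorithm of Kumar et al., whose per-point cost is where the $\eps^{-4.5}\log\frac1\eps$ term in the running time comes from); close the block and append the ball's center to $\Pi$ as soon as the reported radius exceeds $(1+\eps)r$, i.e.\ as soon as the block is certified to have exact MEB radius greater than $r$. The bound $\dfd(C,\Pi)\le(1+\eps)r$ then follows from the one-to-many walk matching each block to its center. Second, and this is the real content, the minimality clause: if $\dfd(C,\Pi')\le r$, an optimal one-to-many walk partitions $C$ into $|\Pi'|$ blocks, each of exact MEB radius at most $r$, and an induction along the scan (exactly as in \Cref{clm:greedy_iteration} of this paper, instantiated with $\gamma=1+\eps$) shows the greedy partition never falls behind: at every prefix the greedy curve has no more vertices than the restriction of $\Pi'$, and when the counts are equal its last block extends at least as far. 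Hence $|\Pi|\le|\Pi'|$, so every curve shorter than $\Pi$ is at distance $>r$. Neither the exchange induction nor the approximate-MEB accounting appears in your proposal, so as a proof of \Cref{lem:optrsimplification} it has a complete gap.
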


	We begin with the following observation.
	\begin{observation}\label{obs:SimpInInterval}
		Consider a curve $P\in\R^{m\times d}$ and an optimal $k$-simplification $\Pi$ of $P$. Then there exists a pair of points $x,y\in P$ and a scalar $\beta\in[\frac12,\frac{1}{\sqrt{2}}]$ such that $\dfd(P,\Pi)=\beta\cdot\|x-y\|$.
	\end{observation}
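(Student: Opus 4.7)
The plan is to identify a vertex of $\Pi$ whose locally matched subcurve realizes both the optimal cost and a controlled diameter-to-radius ratio. First, I would pass to an optimal one-to-many paired walk $\omega=\{(\Pi[i],\mathcal{P}_i)\}_{i=1}^{|\Pi|}$ between $\Pi$ and $P$, where $\mathcal{P}_i$ is the non-empty subcurve of $P$ matched to $\Pi[i]$. Such a walk exists because if two distinct vertices of $\Pi$ were matched to a single point of $P$ we could delete one of them without increasing the Fréchet cost, contradicting neither optimality nor the length bound $|\Pi|\le k$ (this is the same observation used in the preliminaries for optimal $\delta$-simplifications).

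Next, I would relate $\dfd(P,\Pi)$ to minimum enclosing balls. For each $i$, let $c_i$ and $r_i^\star$ denote the center and radius of the MEB of $\mathcal{P}_i$. Since $\Pi[i]$ is a candidate enclosing-ball center for $\mathcal{P}_i$, we have $\max_{p\in\mathcal{P}_i}\|\Pi[i]-p\|\ge r_i^\star$, hence $\dfd(P,\Pi)\ge\max_i r_i^\star$. Conversely, defining $\Pi'$ with $\Pi'[i]:=c_i$ and using the same paired walk yields $\dfd(P,\Pi')\le\max_i r_i^\star$; by optimality of $\Pi$ we conclude $r:=\dfd(P,\Pi)=\max_i r_i^\star$. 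Fix any $i^\star$ achieving this maximum.

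Finally, I would extract the required pair. Let $x,y\in\mathcal{P}_{i^\star}$ be a diametral pair of $\mathcal{P}_{i^\star}$ and set $D=\|x-y\|$. Since $x,y$ lie in a ball of radius $r$ centered at $c_{i^\star}$, the triangle inequality gives $D\le 2r$, i.e., $r\ge D/2$. For the matching upper bound I would invoke Jung's theorem: any bounded set in $\R^d$ of diameter $D$ has minimum enclosing ball radius at most $D\sqrt{d/(2(d+1))}\le D/\sqrt{2}$; applied to $\mathcal{P}_{i^\star}$ this gives $r\le D/\sqrt{2}$. Setting $\beta:=r/D$ therefore yields $\beta\in\bigl[\tfrac12,\tfrac{1}{\sqrt{2}}\bigr]$ with $\dfd(P,\Pi)=\beta\cdot\|x-y\|$, which is exactly the statement. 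The only subtle points are the existence of a one-to-many optimal walk and the reduction to the MEB center, both of which follow solely from $\Pi$ being optimal among curves of length at most $k$; the rest is a direct application of Jung's inequality.
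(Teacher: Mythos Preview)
Your proof is correct and follows essentially the same approach as the paper: reduce to a one-to-many optimal walk, argue (via optimality of $\Pi$) that $\dfd(P,\Pi)$ equals the maximum MEB radius over the matched subcurves, then bound that radius in terms of the diameter of the corresponding subcurve using Jung's theorem. The only cosmetic difference is that you make the equality $\dfd(P,\Pi)=\max_i r_i^\star$ explicit via the recentered curve $\Pi'$, whereas the paper phrases the same step as a contradiction argument.
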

	\begin{proof}
		Let $\omega$ be a one-to-many paired walk along $\Pi$ and $P$. If no such a walk exists, then we can simply remove vertices from $\Pi$ without increasing the distance to $P$.
		
		Notice that there must exist an index $1\le j\le k$ 
		such that $(\Pi[j],P[i_1,i_2])\in\omega$, and the minimum enclosing ball $B$ of $P[i_1,i_2]$ has radius $\dfd(P,\Pi)$. Otherwise, for each $j$ we can move $\Pi[j]$ to the center of the appropriate minimum enclosing ball $B$, and decrease the distance between $\Pi[j]$ and $P[i_1,i_2]$, which will decrease $\dfd(P,\Pi)$, in contradiction to the optimality of $\Pi$.
		
		Let $x,y\in P$ be the points such that $\|x-y\|=\max_{i_1\le p,q\le i_2}\|P[p]-P[q]\|$ (the diameter of $P[i_1,i_2]$). Then the radius of $B$ is at least $\frac{1}{2}\|x-y\|$, and by \href{https://mathworld.wolfram.com/JungsTheorem.html}{Jung's Theorem}, it is at most $\sqrt{\frac{d}{2(d+1)}}\cdot\|x-y\|<\sqrt{\frac{1}{2}}\cdot\|x-y\|$.
	\end{proof}

	\begin{proof}[Proof of \Cref{thm:SimplificationHighDim}]
		We first present the algorithm for general dimension $d$. Afterwards, we will reduce the running time for the case where $d$ is fixed.
		
		Fix $\eps'=\frac\eps3$. Using \Cref{lem:allDistances} for the points of $P$ with parameter $\eps'$ and interval $[\frac12,\frac{1}{\sqrt{2}}]$, we obtain a set $M$ of $O(\frac{m}{\epsilon}d\log d)$ scalars, which is constructed in $O\left(md\cdot\left(\log m+\frac{1}{\eps}\log d\right)\right)$
		time. 
		We sort the numbers in $M$, and using binary search
		we find the minimal $\alpha\in M$, such that the algorithm of \Cref{lem:optrsimplification} with parameter $(1+\eps')\alpha$ returns a simplification $\Pi_\alpha$ of length at most $k$ such that $\dfd(P,\Pi_\alpha)\le (1+\eps')^2\alpha$.
		We return the curve $\Pi_\alpha$ with parameter $(1+\eps')^2\alpha$.
		
		Let $\delta^*$ be the distance between $P$ and an optimal $k$-simplification of $P$. We argue that $\dfd(P,\Pi_\alpha)\le (1+\eps)\delta^*$.
		First note that by \Cref{obs:SimpInInterval} there exists a pair of points $x,y\in P$ and a scalar $\beta\in[\frac12,\frac{1}{\sqrt{2}}]$ such that $\delta^*=\beta\cdot\|x-y\|$. It follows from \Cref{lem:allDistances} that there is some $\alpha^*\in M$ such that $\alpha^*\le \delta^*\le(1+\eps')\cdot \alpha^*$. In particular, the algorithm from \Cref{lem:optrsimplification} with the parameter $(1+\eps')\alpha^*$ would return a curve $\Pi_{\alpha^*}$ of length at most $k$ such that $\dfd(P,\Pi_{\alpha^*})\le(1+\eps')^2\alpha^*$. Hence our algorithm will find some $\alpha\in M$, so that $\alpha\le\alpha^*$, and will return the curve $\Pi_\alpha$. It holds that 
		\[
		\dfd(P,\Pi_{\alpha})\le(1+\eps')^{2}\alpha\le(1+\eps')^{2}\alpha^{*}\le(1+\eps')^{2}\delta^{*}<(1+\eps)\delta^{*}~.
		\]
		
		Sorting $M$ takes $O(|M|\log|M|)=O(\frac{md}{\epsilon}\log\frac{md}{\epsilon}\cdot \log d)$ time.
		Finally, we have at most $\log |M|$ executions of the algorithm of \Cref{lem:optrsimplification} which take us $O(\log\frac{md}{\epsilon})\cdot O\left(\frac{d\cdot m\log m}{\eps}+m\cdot\eps^{-4.5}\log\frac{1}{\eps}\right)$
		time.
		The overall running time is
		\[
		O\left(\frac{md}{\epsilon}\log\frac{md}{\epsilon}\cdot\log md+m\log\frac{md}{\epsilon}\cdot\eps^{-4.5}\log\frac{1}{\eps}\right)=\tilde{O}\left(\frac{md}{\eps^{4.5}}\right)~.
		\]
		
		For the case where the dimension $d$ is fixed, instead of using \Cref{lem:optrsimplification}, we will simply find an optimal $1(\eps')\alpha$ simplification in $O(m\log m)$ time using \cite{BJWYZ08}. The reset of the algorithm will remain the same. The correctness proof follows the exact same lines. The running time will be $O(m\cdot(\log m+\frac{1}{\eps}))+\log(\frac{m}{\eps})\cdot O(m\log m)=m\cdot O(\frac{1}{\eps}+\log\frac{m}{\eps}\log m)$.
	\end{proof}

	\newpage
	
	\bibliographystyle{alphaurlinit}
	\bibliography{refs}
	\appendix

	\section{Missing proofs}

\subsection{Proof of \Cref{clm:grid_points_in_ball}}\label{app:grid_points_in_ball}

\gridPointsInBall*
\begin{proof}
	We scale our grid so that the edge length is 1, hence we are looking for the number of lattice points in $B^d_2(x,\frac{c\sqrt{d}}{\eps})$. By Lemma 5 from \cite{FFK20} we get that this number is bounded by the volume of the $d$-dimensional ball of radius $\frac{c\sqrt{d}}{\eps}+\sqrt{d}\le \frac{(c+1)\sqrt{d}}{\eps}$.
	
	Using Stirling's formula we conclude that the volume of this ball is
	\[
	V^d_2 \left(\frac{(c+1)\sqrt{d}}{\eps} \right)=\frac{\pi^{d/2}}{\Gamma(\frac{d}{2}+1)}\left(\frac{(c+1)\sqrt{d}}{\eps} \right)^d = O\left(\frac{c}{\eps}\right)^d.
	\]
\end{proof}

\subsection{$(1+\eps)$-\MEB: Proof of \Cref{lem:1+epsMEB}}\label{app:MEBproof}

We begin with a definition of $\eps$-kernel of a set of points.
\begin{definition}[$\eps$-kernel]
	For a set of points $X\subseteq\R^d$, and a direction $\vec{u}\in\mathbb{S}^{d-1}$, the directional width of $X$ along $u$ is defined by $W(X,\vec{u}) =\max_{\vec{p},\vec{q}\in X}\langle \vec{p}-\vec{q},\vec{u}\rangle$ (here $\langle\cdot,\cdot\rangle$ denotes the inner product).
	A subset $Y\subseteq X$ of points is called an $\eps$-kernel of $X$ if for every direction $\vec{u}\in\mathbb{S}^{d-1}$, 
	$$W(Y,\vec{u})\ge(1-\eps)W(X,\vec{u})~.$$
\end{definition}
It was shown by \cite{Chan06}, that every set $X\subseteq\R^d$ has an $\eps$-kernel of size $O(\eps^{-\frac{d-1}{2}})$.
Zarrabi-Zadeh showed how to efficiently maintain an $\eps$-kernel in the streaming model.
\begin{theorem}[\cite{Zar11}]\label{thm:Zar11}
	Given a stream of points $X\subseteq\R^d$, an $\eps$-kernel of $X$ can be maintained using $O(\eps)^{-\frac{d-1}{2}}\cdot\log\frac1\eps$ space. 
	\footnote{Actually Zarrabi-Zadeh \cite{Zar11} bounds the space by $O(\eps^{-\frac{d-1}{2}})$, while assuming that $d$ is fixed, and hence hiding exponential factors in $d$.}
\end{theorem}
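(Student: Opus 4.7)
The plan is to describe a streaming scheme that maintains an $\eps$-kernel of the stream within the claimed space bound, built from two ingredients: a static $\eps$-kernel construction invoked as a black-box subroutine, and a streaming architecture based on a small number of scale-indexed buckets together with merge-and-reduce.

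First I would invoke the classical static $\eps$-kernel result: for any finite $X\subseteq\R^d$, one can compute an $\eps$-kernel of size $O(\eps^{-(d-1)/2})$ in polynomial time (Agarwal--Har-Peled--Varadarajan, refined by Chan). The tight size bound is curvature-aware, beating the $O(\eps^{-(d-1)})$ that one gets from a naive $\eps$-net on $\mathbb{S}^{d-1}$; I would use this construction as the ``reduce'' operation throughout.

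Next I would describe the streaming data structure as a collection of $L=O(\log \eps^{-1})$ buckets $B_0,\dots,B_{L-1}$, each storing at most one $\eps$-kernel of size $O(\eps^{-(d-1)/2})$. Each bucket corresponds to a dyadic scale relative to the current diameter of the stream. When a new point $p$ arrives, it is routed to the appropriate bucket and appended; whenever a bucket overflows the size threshold, its contents are reduced back to an $\eps$-kernel via the static procedure. Two buckets colliding at the same scale are merged (union, then reduce) and promoted one scale up. The final $\eps$-kernel returned at any query time is the union $\bigcup_i B_i$, which remains of size $O(\eps^{-(d-1)/2}\log \eps^{-1})$.

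The main obstacle is two-fold: keeping the number of active scales at $O(\log \eps^{-1})$ rather than $O(\log n)$, and ensuring that merging across scales does not accumulate error. For the first, the key idea of Zarrabi-Zadeh is to index buckets by the ratio of the point's local contribution to the current overall diameter of the stream: once this ratio drops below $\eps$, the bucket contributes negligibly to any directional width and can be absorbed into a coarser one. Thus when the diameter grows, obsolete fine-scale buckets are coalesced, capping the number of simultaneously active buckets at $O(\log \eps^{-1})$ regardless of $n$. For the second, I would observe that because each bucket handles a disjoint dyadic scale, the union of per-bucket $\eps$-kernels is itself an $\eps$-kernel of the whole stream \emph{without} compounding of errors across levels, since along any query direction the dominant contribution comes from a single scale. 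Multiplying the per-bucket size $O(\eps^{-(d-1)/2})$ by the $O(\log \eps^{-1})$ active buckets yields the claimed space bound.
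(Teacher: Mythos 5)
This theorem is imported from Zarrabi-Zadeh \cite{Zar11}; the paper gives no proof of its own (see the footnote), so your sketch has to stand on its own as a reconstruction of that algorithm, and as written it does not.

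The fatal step is the bucket-coalescing rule. You index buckets by the ratio of a point's ``local contribution'' to the current diameter and absorb a bucket once that ratio drops below $\eps$, on the grounds that it then ``contributes negligibly to any directional width.'' But the $\eps$-kernel guarantee is \emph{multiplicative per direction}: you must retain a $(1-\eps)$ fraction of $W(X,\vec{u})$ for \emph{every} $\vec{u}$, including directions in which $X$ is extremely thin. Take $n$ points on the segment $[0,1]\times\{0\}$ together with the two points $(\tfrac12,\pm\delta)$ for $\delta\ll\eps$: these two points sit at a scale far below $\eps\cdot{\rm diam}(X)$, yet they are the only witnesses of the width in the $y$-direction, and absorbing them into a coarser bucket drops $W(\cdot,\vec{e}_2)$ from $2\delta$ to $0$. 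So diameter-relative scales cannot cap the number of active levels at $O(\log\frac1\eps)$; this direction-dependent, affine-invariant nature of width is exactly the difficulty that makes \cite{Zar11} substantially more delicate than merge-and-reduce. A second, independent gap is error accumulation: ``whenever a bucket overflows, reduce back to an $\eps$-kernel'' re-applies the static reduction to sets that are already kernels of earlier contents, and a kernel of a kernel loses another additive $\Theta(\eps)$ in quality each time. With only $O(\log\frac1\eps)$ levels and an unbounded stream, pigeonhole forces some bucket to be re-reduced an unbounded number of times, so the final approximation error is not $O(\eps)$. Your justification --- that along any direction a single scale dominates --- addresses the (true and easy) decomposability of kernels under \emph{union}, not the compounding under repeated \emph{reduction}, which is the actual obstruction; classical merge-and-reduce escapes it only by using $\log n$ levels with per-level error $\eps/\log n$, i.e.\ precisely the $n$-dependent space overhead that \cite{Zar11} was designed to eliminate.
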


We make the following observation:
\begin{claim}\label{clm:KernalOfMEB}
	Consider a set $X\subseteq\R^d$, and let $Y$ be an $\eps$-kernel for $\eps\in(0,\frac14)$. Consider a ball $B(\vec{c},r)$ containing $Y$. Then $X\subseteq B(\vec{x},(1+3\eps)r)$.
\end{claim}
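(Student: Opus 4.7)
The plan is to prove the claim by a direct directional-width argument using the kernel property. Note that the statement has a typo: the target ball should be centered at $\vec{c}$, not $\vec{x}$. So we want to show that every $p \in X$ lies within distance $(1+3\eps)r$ of $\vec{c}$.

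First, fix an arbitrary $p \in X$ (the case $p = \vec{c}$ is trivial) and let $\vec{u} = (p - \vec{c})/\|p - \vec{c}\|$. Then $\|p - \vec{c}\| = \langle p - \vec{c}, \vec{u}\rangle$, which is at most $\max_{x \in X} \langle x - \vec{c}, \vec{u}\rangle$. The key is to bound this projection onto direction $\vec{u}$ using the kernel property, splitting the bound into (i) the maximum of the projection over $Y$ alone, and (ii) the kernel-induced slack between the maximum over $Y$ and the maximum over $X$.

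For (i), since $Y \subseteq B(\vec{c}, r)$, every $y \in Y$ satisfies $\langle y - \vec{c}, \vec{u}\rangle \le \|y - \vec{c}\| \le r$, so $\max_{y \in Y} \langle y - \vec{c}, \vec{u}\rangle \le r$. For (ii), unpacking the defining inequality $W(Y,\vec{u}) \ge (1-\eps)W(X,\vec{u})$ together with $Y \subseteq X$ (which gives $\max_Y \le \max_X$ and $\min_Y \ge \min_X$) yields the one-sided bound
\[
\max_{x \in X}\langle x, \vec{u}\rangle - \max_{y \in Y}\langle y, \vec{u}\rangle \;\le\; \eps\, W(X,\vec{u}),
\]
which is translation-invariant and so also holds after shifting by $\vec{c}$. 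Finally, $W(Y,\vec{u}) \le 2r$ (since $Y \subseteq B(\vec{c},r)$), so the kernel property gives $W(X,\vec{u}) \le W(Y,\vec{u})/(1-\eps) \le 2r/(1-\eps)$.

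Combining everything,
\[
\|p - \vec{c}\| \;\le\; \max_{x \in X}\langle x - \vec{c}, \vec{u}\rangle \;\le\; r + \eps \cdot \frac{2r}{1-\eps} \;=\; r\left(1 + \frac{2\eps}{1-\eps}\right).
\]
Since $\eps \in (0,\tfrac14)$, we have $\tfrac{2\eps}{1-\eps} \le \tfrac{8\eps}{3} < 3\eps$, so $\|p - \vec{c}\| \le (1+3\eps)r$, and $p \in B(\vec{c},(1+3\eps)r)$. As $p$ was arbitrary, $X \subseteq B(\vec{c},(1+3\eps)r)$. There is no real obstacle here; the only subtlety is recognizing that the kernel inequality on widths automatically yields the one-sided inequality $\max_X - \max_Y \le \eps W(X,\vec{u})$ via the subset relation $Y \subseteq X$, which is precisely what converts the symmetric width condition into an outward-reach bound relative to any ball containing $Y$.
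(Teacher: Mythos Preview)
Your proof is correct and follows essentially the same approach as the paper: pick the radial direction $\vec{u}$ from $\vec{c}$ toward the point in question, use $W(Y,\vec{u})\le 2r$ since $Y\subseteq B(\vec{c},r)$, and apply the $\eps$-kernel inequality. The paper presents this as a proof by contradiction (assuming a point lies outside $B(\vec{c},(1+3\eps)r)$ and deriving $W(Y,\vec{u})<(1-\eps)W(X,\vec{u})$), whereas you give a direct bound; your extraction of the one-sided inequality $\max_X-\max_Y\le \eps\,W(X,\vec{u})$ from $Y\subseteq X$ makes the argument slightly cleaner and even yields the marginally sharper constant $1+\tfrac{2\eps}{1-\eps}$ before relaxing to $1+3\eps$.
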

\begin{wrapfigure}{r}{0.28\textwidth}
	\begin{center}
		\vspace{-20pt}
		\includegraphics[width=0.27\textwidth]{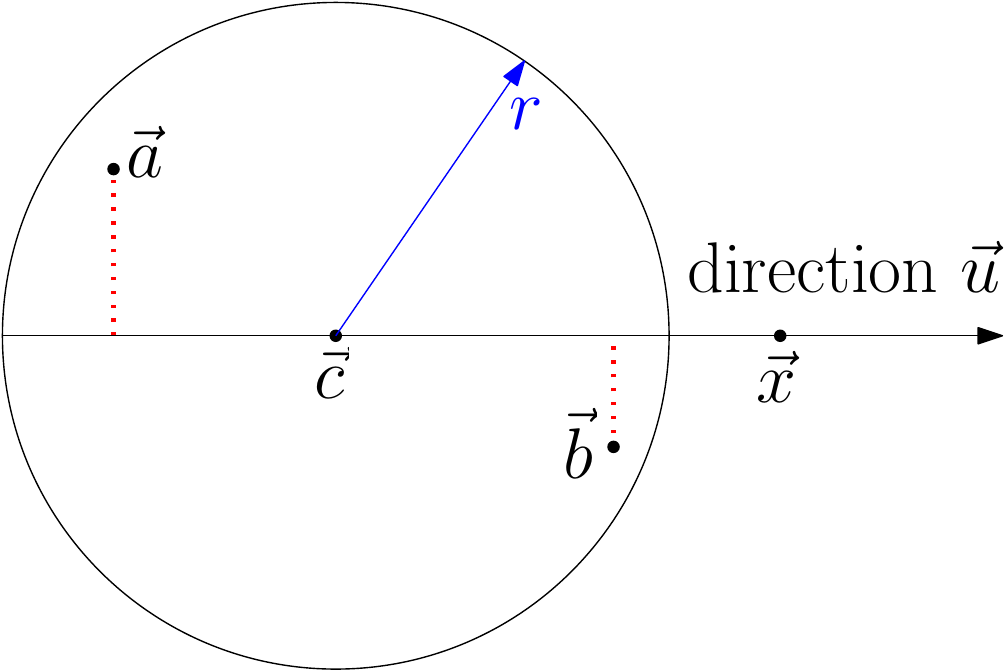}
		\vspace{-5pt}
	\end{center}
	\vspace{-10pt}
\end{wrapfigure}
\hspace{-18pt}\emph{Proof.}
	We will assume that $X$ is a finite set, the proof can be generalize to infinite sets using standard compactness arguments.
	Assume for contradiction that there is a point $\vec{x}\in X$ such that 
	$\vec{x}\notin B(x,(1+3\eps)r)$.
	Set $\vec{u}=\frac{\vec{x}-\vec{c}}{\|\vec{x}-\vec{c}\|}$, and let $\vec{a}={\argmin}_{\vec{a}\in Y}\langle \vec{a},\vec{u}\rangle$ and $\vec{b}={\argmax}_{\vec{p}\in Y}\langle \vec{p},\vec{u}\rangle$. See illustration on the right.
	Set $x=\langle \vec{x},\vec{u}\rangle$, $a=\langle \vec{a},\vec{u}\rangle$, and $b=\langle \vec{b},\vec{u}\rangle$. 
	As $\vec{x}\notin B(\vec{c},(1+3\eps)r)$, the distance between the  projection of $\vec{x}$ in direction $\vec{u}$ to the projection of every point in $B(\vec{c},(1+3\eps)r)$ is greater than $3\eps r$, thus $x-b>3\eps r$.
	From the other hand, as $\vec{a},\vec{b}$ contained in a ball of diameter $2r$, we have $W(Y,\vec{u})=b-a\le 2r$.
	Hence
	\[
	W(X,\vec{u})\ge x-a=(x-b)+(b-a)>3\eps r+W(Y,\vec{u})\ge(1+\frac{3\eps}{2})W(Y,\vec{u})~.
	\]
	In particular, $W(Y,\vec{u})<(1-\eps)W(X,\vec{u})$, 
\qed

\begin{proof}[Proof of \Cref{lem:1+epsMEB}]
	For a stream of points $X$, we will maintain an $\frac\eps5$-kernel $Y$ using \Cref{thm:Zar11}. On a query for a minimum enclosing ball, we will compute an enclosing  $B(\vec{c},(1+\frac\eps 5)r)$ for $Y$ (using \cite{KMY03}), such that there is no ball of radius $r$ enclosing $Y$ (or $X$). 
	By \Cref{clm:KernalOfMEB}, as $Y$ is an $\frac\eps5$ kernel of $X$, it holds that $X\subseteq B(\vec{c},(1+3\frac\eps5)(1+\frac\eps 5)r)\subseteq B(\vec{c},(1+\eps)r)$.
	We will return $B(\vec{c},(1+\eps)r)$ as an answer.
\end{proof}
	
\end{document}